\documentclass[a4paper,11pt]{article} 
\usepackage{amsmath,latexsym, amssymb}
\usepackage{epstopdf,caption,subcaption,graphicx,xcolor,hyperref}
\usepackage{algpseudocode}
\usepackage{algorithm}
\usepackage{tikz}
\usetikzlibrary{shapes,arrows,fit,calc,positioning}
\usetikzlibrary{decorations.pathreplacing}
\normalsize
\newtheorem{theorem}{Theorem}
\newtheorem{lemma}{Lemma}

\newtheorem{observation}{Observation}
\newtheorem{remark}{Remark}

\newtheorem{notation}{Notation}
\newtheorem{definition}{Definition}

\newtheorem{problem}{Problem}
\newenvironment{proof}{{\sc Proof. }}{\hfill$\Box$\vspace{0.1in}}
\newcommand{\mc}[1]{{\cal {#1}}}
\begin{document}

\title{Approximation Algorithms for Perfect Fair-Triangle Packing}

\author{
Mingyang~Gong\thanks{Gianforte School of Computing, Montana State University, Bozeman, MT 59717, USA.
Email: {\tt \{mingyang.gong, brendan.mumey\}@montana.edu}}
\and
Zhi-Zhong~Chen\thanks{Division of Information System Design, Tokyo Denki University. Saitama 350-0394, Japan.
  Email: \texttt{zzchen@mail.dendai.ac.jp}}
\thanks{Correspondence authors.}
\and
Brendan Mumey$^*$$^\ddagger$
}
\date{}
\maketitle

\begin{abstract}
In this paper, we study the {\em perfect fair-triangle packing} problem (abbreviated as PFTP), 
which incorporates the fairness criterion from {\em fair clustering} into the {\em maximum-weight triangle packing} problem.
Specifically, the input is an edge-weighted complete graph $G = (V, E)$ with $|V| = 3n$, where each vertex is colored red or blue.
A {\em fair triangle} is a triangle containing vertices of both colors.
PFTP asks for a partition of $V$ into $n$ fair triangles such that the total edge weight is maximized.
To the best of our knowledge, this is the first paper to study PFTP.

PFTP is NP-hard.
Our main contributions are a deterministic $\frac 13$-approximation algorithm running in $O(n^3)$ time 
and an improved randomized $(\frac {16}{47}-\epsilon)$-approximation algorithm running in $O(n^4)$ time,
where $\epsilon > 0$ is a fixed small constant.
The deterministic algorithm is matching-based whereas the randomized algorithm employs several additional techniques, 
including maximum-weight $[1, 2]$-factor, a random cycle-breaking procedure, and maximum-weight matchings.
\\\\
\textbf{Keywords: } Triangle packing; fairness; approximation algorithms; randomized algorithms 
\end{abstract}

\newpage

\section{Introduction}\label{sec:intro}

\subsection{Model and motivation}

In the {\em maximum-weight triangle packing} problem (abbreviated as MWTP), we are given an edge-weighted complete graph $G = (V, E)$ with $|V| = 3n$.
MWTP asks for a collection of $n$ vertex-disjoint triangles with maximum total edge weight.
As a fundamental combinatorial optimization problem, MWTP has been extensively studied in the literature,
both for arbitrary edge weights~\cite{HR06, CTW09, CTW10}
and for the metric setting, where the edge weights satisfy the triangle inequality~\cite{CCL21, ZX24}.

{\em Fair clustering}~\cite{CKL17, BCF19, BGK19} is another fundamental problem arising in both operations research and machine learning.
Let $G = (V, E)$ be an edge-weighted graph (unlike in MWTP, $G$ is not necessarily complete), where each vertex is colored red or blue.
In the seminal work~\cite{CKL17}, given an integer $t \ge 1$, the authors define a {\em fair cluster} as a cluster 
whose number of red vertices is between $\frac{1}{t}$ and $t$ times the number of blue vertices.
The term ``fair'' refers to the parameter $t$, which ensures that the numbers of red and blue vertices remain approximately balanced.
The fair clustering problem seeks $k$ vertices as centers and assigns every other vertex to a center 
so that each resulting cluster is fair while optimizing an objective such as the $k$-center or $k$-median objective.

Suppose that $G = (V, E)$ is an edge-weighted complete graph with $|V| = 3n$, where each vertex is colored either red or blue. 
Motivated by fair clustering~\cite{CKL17}, we define {\em a fair triangle} as a triangle containing vertices of both colors.
Consequently, a fair triangle contains either one or two red vertices (equivalently, $t \ge 2$ in the setting of~\cite{CKL17}).
The two-color setting of a fair triangle naturally models various real-world applications, such as gender diversity.
Another example arises in the formation of a Ph.D. defense committee.
In such a committee, supervisors who have previously collaborated with the candidate naturally serve as internal members.
However, the committee must also include at least one external reviewer who has had no prior collaboration with the candidate, 
thereby ensuring an objective evaluation of the candidate's work and helping to maintain the fairness of the examination.
These two types of reviewers can be represented by the red and blue vertices in a fair triangle. 

Recall $G=(V,E)$ introduced in the preceding paragraph. 
We define {\em a perfect fair-triangle packing} in $G$ to be a partition of $V$ into $n$ fair triangles.
Inspired by MWTP and fair clustering, we study a variant of MWTP, called the {\em perfect fair-triangle packing} problem (abbreviated as PFTP), 
which seeks a perfect fair-triangle packing in $G$ with maximum total edge weight. 
Let $r$ be the number of red vertices in $G$ and assume without loss of generality that $r \le \frac 32 n$.
As discussed in Section~2, $G$ admits a feasible solution to PFTP if and only if $n \le r \le \frac 32n$.
Therefore, we only consider input graphs satisfying $n \le r \le \frac 32 n$.

In this paper, we study PFTP from the perspective of approximation algorithms and 
assume that the reader is familiar with standard concepts in approximation algorithms, such as approximation ratio~\cite{WS11}.

\subsection{Related work}
\label{sec1.2}

PFTP incorporates the fairness criterion from fair clustering into MWTP.
Accordingly, we review related problems and results that are relevant to PFTP.

\noindent
{\bf Unweighted maximum triangle packing (MTP): } 
The {\em unweighted maximum triangle packing} problem (MTP) seeks a largest collection of vertex-disjoint triangles 
in a given (not necessarily complete) graph $G$. 
Its decision version, the {\em partition into triangles} problem~\cite{GJ79}, asks whether vertex-disjoint triangles can cover all vertices of $G$.
This decision problem is NP-complete~\cite{GJ79} via a reduction from {\em $3$-dimensional matching}, rendering MTP NP-hard.
Moreover, MTP is APX-hard even restricted to graphs of maximum degree $4$~\cite{Kan91, RNB13}.
Chleb\'{i}k and Chleb\'{i}kov\'{a} further showed that, unless P=NP, MTP cannot be approximated within $\frac {94}{95} \approx 0.9895$~\cite{CC06} in polynomial time -- 
currently the strongest inapproximability bound.

In the {\em unweighted $3$-set packing} problem (U$3$SP), we are given a collection of subsets of a universe, 
each containing at most $3$ elements.
The objective is to select a maximum number of pairwise disjoint subsets in the collection.
Since MTP is a special case of U$3$SP~\cite{HR06, CTW09}, any approximation bound for U$3$SP applies to MTP as well. 
For any fixed $\epsilon>0$, U$3$SP can be approximated within $\frac 34-\epsilon$ in polynomial time~\cite{Cyg13, FY14}, 
thereby yielding a $(\frac 34-\epsilon)$-approximation for MTP.

\noindent
{\bf Maximum-weight triangle packing (MWTP): } 
Unlike MTP, the {\em maximum weight triangle packing} (MWTP) problem considers an edge-weighted complete graph $G=(V,E)$ with $3n$ vertices, 
seeking a partition of $V$ into $n$ triangles with maximum total edge weight.
For any fixed constant $\epsilon > 0$, Hassin and Rubinstein~\cite{HR06, HR06a} gave a randomized 
$(0.518-\epsilon)$-approximation algorithm.
Chen {\em et al.}~\cite{CTW09, CTW10} later improved this to $(0.523-\epsilon)$, a ratio also matched deterministically by van Zuylen~\cite{Zuy13}. 
Under the metric setting (where edge weights satisfy the triangle inequality), 
Chen {\em et al.}~\cite{CCL21} designed a $(0.66768-\epsilon)$-approximation, which Zhao and Xiao~\cite{ZX24} recently improved to $0.66835-\epsilon$.

The {\em weighted $3$-set packing} problem (W$3$SP) generalizes U$3$SP by assigning a non-negative weight to each candidate subset, 
aiming to maximize the total weight of a pairwise disjoint subcollection.
Since MWTP is a special case of W$3$SP~\cite{HR06, CTW09}, any bound for W$3$SP carries over. 
The best known approximation ratio for W$3$SP is $\frac 1{1.786} \approx 0.5599$ 
by Thiery and Ward~\cite{TW23}, which provides the current best approximation guarantee of $0.5599$ for MWTP.

\noindent
{\bf Fair clustering: }
Given an integer $t \ge 1$ and an edge-weighted graph $G$ whose vertices are colored red or blue, Chierichetti {\em et al.}~\cite{CKL17} introduced the notion of {\em fair clusters}, 
defining a cluster to be {\em fair} if its ratio of red to blue vertices lies in $[\frac 1t, t]$.
They obtained polynomial-time $4$-approximation and $O(t)$-approximation algorithms for the $k$-center and $k$-median objectives, respectively~\cite{CKL17}.
Bandyapadhyay {\em et al.}~\cite{BBC25} later gave a polynomial-time $O(1)$-approximation algorithm for the {\em sum-of-radii} objective. 

Bercea {\em et al.}~\cite{BGK19} and Bera {\em et al.}~\cite{BCF19} independently generalized fair clustering to $h\ge 2$ colors.
Here, given an edge-weighted graph $G=(V, E)$ with an $h$-coloring on $V$, a cluster is {\em fair} 
if the proportion of color-$j$ vertices lies within the interval $[\ell_j, u_j]$ for each $j \in [1, h]$, where $0 \le \ell_j \le u_j \le 1$.
In machine learning, $\ell_j$ and $u_j$ are typically chosen close to the overall proportion of color $j$ in $V$~\cite{DEM25}.
Approximation algorithms under various clustering objectives have since been 
developed in~\cite{BCF19, BGK19, BBC25, BCF25, BCF25a}.

\noindent
{\bf Relationship between PFTP and W$3$SP: } 
Recall that MWTP is a special case of W$3$SP, meaning any algorithm for W$3$SP can be applied to MWTP.
A natural question is whether this relationship extends to PFTP.
We next show that PFTP reduces to W$3$SP when $r = n$, but not for general values of $r$, 
where the input graph has exactly $3n$ vertices, $r$ of which are red.

When $r = n$, every triangle in a perfect fair-triangle packing contains exactly one red vertex and two blue vertices.
Given a PFTP instance $G$, we construct a W$3$SP instance $I$ by creating a set for each 
fair triangle with exactly one red vertex, assigning it a weight equal to the triangle's total edge weight. 
Any perfect fair-triangle packing in $G$ maps to $n$ disjoint sets in $I$ of equal total weight.
Conversely, any solution $\mc{L}$ to $I$ corresponds to $|\mc{L}| = \ell$ disjoint fair triangles in $G$. 
The remaining $n-\ell$ red vertices and $2n-2\ell$ blue vertices can be arbitrarily grouped into $n-\ell$ fair triangles, 
producing a perfect fair-triangle packing with weight at least that of $\mc{L}$. 
Thus, PFTP with $r=n$ reduces to W$3$SP.

For arbitrary $r$, however, W$3$SP algorithms cannot be directly applied to PFTP. 
Consider a graph $G$ with four red vertices $v_1, \ldots, v_4$ and five blue vertices $v_5, \ldots, v_9$.
If we construct a W$3$SP instance $I$ containing all fair triangles in $G$, a W$3$SP algorithm might select two sets containing $\{v_1, v_2\}$ and $\{v_3, v_4\}$ respectively 
if edges between red vertices have sufficiently large weights. 
Converting this solution back to a packing in $G$ leaves three blue vertices, which cannot form a fair triangle. 
Hence, PFTP cannot be solved by a direct application of W$3$SP algorithms.

\subsection{Our results and approaches}

Recall that each PFTP instance is defined on an edge-weighted complete graph $G = (V, E)$ with $|V| = 3n$, 
where each vertex is colored red or blue, and the number $r$ of red vertices in $G$ satisfies $n \le r \le \frac 32n$.
An edge is called {\em red} (respectively, {\em blue}) if both its endpoints are red (respectively, blue).
In contrast, a {\em bichromatic} edge connects a red vertex to a blue vertex.

PFTP is NP-hard  even on instances with $r = n$ (see Theorem~\ref{thm01}), 
which we establish via a simple reduction from the $3$-dimensional matching problem.
Our main contributions are two approximation algorithms for PFTP:
\begin{itemize}
\parskip=0pt
\item[1.] A deterministic $\frac 13$-approximation algorithm running in $O(n^3)$ time. 
\item[2.] A randomized $(\frac {16}{47}-\epsilon)$-approximation algorithm running in $O(n^4)$ time, 
for any fixed small constant $\epsilon > 0$.
\end{itemize}
Let $R$, $B$, and $X$ denote the red, blue, and bichromatic edge sets of $G$.
Moreover, let $G_{\rm r}$, $G_{\rm b}$, and $G_{\rm x}$ denote the spanning subgraphs $(V, R)$, $(V, B)$, and $(V, X)$, respectively.
The $\frac 13$-approximation algorithm returns the heavier of two packings, $\mc{T}_0$ and $\mc{T}_1$. 
To construct $\mc{T}_0$, it computes maximum-weight matchings $M_{\rm r}$ of size $r-n$ in $G_{\rm r}$ and $M_{\rm b}$ of size $2n-r$ in $G_{\rm b}$, 
then forms triangles by pairing unmatched red (respectively, blue) vertices with edges in $M_{\rm b}$ (respectively, $M_{\rm r}$). 
Analogously, $\mc{T}_1$ is formed by finding a maximum-weight matching $M_{\rm x}$ of size $n$ in $G_{\rm x}$ and pairing each remaining vertex with an edge in $M_{\rm x}$. 

The randomized $(\frac {16}{47}-\epsilon)$-approximation algorithm starts with the perfect fair-triangle packing 
$\mc{T}_0$ produced by the $\frac 13$-approximation algorithm, 
computes four additional perfect fair-triangle packings $\mc{T}_1, \ldots, \mc{T}_4$, and outputs the heaviest one. 
Unlike our deterministic algorithm and the existing algorithms for MWTP (see Section~\ref{sec1.2}), 
$\mc{T}_1, \ldots, \mc{T}_4$ are based on a maximum-weight $[1, 2]$-factor $F$ in $G_{\rm x}$~\cite{Gab83}. 
Let $\mc{F}=(V, F)$. 
Each connected component of $\mc{F}$ with at least $\frac 2\epsilon$ edges is broken 
into paths of length at most $\frac 2\epsilon-1$, losing at most an $\epsilon$ fraction of its total weight. 
Afterwards, the bichromatic-edge set $X$ is partitioned into five subsets $X_1, \ldots, X_5$, according to the colors of 
their endpoints and the connected components of $\mc{F}$ to which they belong. 
Roughly speaking, for each $i \in \{1, \ldots, 4\}$, the packing $\mc{T}_i$ is  
constructed so that its total edge weight is guaranteed to be relatively large whenever a fixed optimal solution contains 
a heavy subset of $X_i$. 
Furthermore, both $\mc{T}_3$ and $\mc{T}_4$ extend this guarantee to cases where the optimal solution has a heavy subset of $X_5$.
The primary technical challenge is to construct each of $\mc{T}_1, \ldots, \mc{T}_4$ as a perfect fair-triangle packing, rather than merely a fair-triangle packing.  
The packing $\mc{T}_1$ is obtained via Lemma~\ref{lemma03} from what we call a feasible fair $(1, 2)$-path packing.
To construct $\mc{T}_2$, $\mc{T}_3$, and $\mc{T}_4$, we return to the initial $[1, 2]$-factor $\mc{F}$ in $G_{\rm x}$, 
randomly decompose its cycles into odd alternating paths (Definition~\ref{def02}),
and use its paths to construct another collection of alternating or special paths (Definitions~\ref{def02} and~\ref{def03}).
These paths are subsequently connected using certain edges from a maximum-weight matching in $(V, X_i)$ with $i \in \{ 2, 3, 4 \}$.
Finally, applying Lemmas~\ref{lemma05} and~\ref{lemma06} to the resulting paths always yields the desired perfect fair-triangle packings.

The remainder of this paper is organized as follows.
Section~\ref{sec2} gives basic definitions.
Section~\ref{sec2+} presents the formal definition of PFTP and establishes its NP-hardness.
Sections~\ref{sec:simpleAlgo} and~\ref{sec:random} present the deterministic $O(n^3)$-time $\frac 13$-approximation algorithm 
and the randomized $O(n^4)$-time $(\frac {16}{47}-\epsilon)$-approximation algorithm, respectively.
Finally, Section~\ref{sec:conclude} concludes the paper with directions for future research.

\section{Basic definitions}
\label{sec2}

Throughout the remainder of this paper, a graph means a simple undirected graph (i.e., it has neither parallel edges nor self-loops).
Let $G$ be a graph.
We denote the vertex set of $G$ by $V(G)$, and denote the edge set of $G$ by $E(G)$. 
The {\em degree} of a vertex $v$ in $G$, denoted by $d_G(v)$, is the number of vertices adjacent to $v$ in $G$. 
For a subset $U$ of $V(G)$, the subgraph {\em induced by $U$} is $(U, E_U)$, 
where $E_U$ consists of all edges $\{u,v\}$ of $G$ with $u \in U$ and $v\in U$. 
For a subset $F$ of $E(G)$, let $G[F]$ denote the spanning subgraph $(V(G), F)$ of $G$. 

A {\em cycle} in $G$ is a connected subgraph of $G$ in which each vertex is of degree~$2$. 
A {\em path} in $G$ is a connected subgraph of $G$ in which {\em exactly} two vertices (called {\em endpoints}) are of degree~$1$ and all other vertices are of degree~$2$. 
Note that, unlike the standard definition, a single vertex is not considered a path. 
The {\em length} of a cycle or path $C$ is the number of edges in $C$. 
A {\em $k$-cycle} (respectively, {\em $k$-path}) is a cycle (respectively, path) of length~$k$. 
In particular, a $3$-cycle is also called a {\em triangle}. 
For simplicity, we denote a triangle with vertices $u, v, t$ by $uvt$. 

A {\em path} (respectively, {\em cycle}) {\em component} of $G$ is a connected component that is a path 
(respectively, cycle). If a path component is an edge, we call it an {\em edge component}. 

Suppose that each edge of $G$ has a nonnegative weight. 
For an edge $e \in E(G)$, $w(e)$ denotes its weight. 
The {\em weight} of a subset $E' \subseteq E(G)$, denoted by $w(E')$, is $\sum_{e \in E'} w(e)$. 
The {\em weight} of $G$, denoted by $w(G)$, is $w(E(G))$. 
A {\em matching} in $G$ is a set $M$ of pairwise vertex-disjoint edges in $G$; 
hence $w(M)$ denotes its weight.

Suppose further that $G$ is {\em bicolored}; that is, each vertex of G is colored either red or blue. 
An edge of $G$ is {\em red} (respectively, {\em blue}) if both of its endpoints are colored red (respectively, blue); 
otherwise, it is {\em bichromatic}. 
In Figure~\ref{fig01}, the solid edges are bichromatic, while the dashed edge in the left (respectively, right) triangle 
is red (respectively, blue). 
A {\em red} (respectively, {\em blue} or {\em bichromatic}) {\em matching} in $G$ is a matching 
whose edges are all red (respectively, blue or bichromatic).

A triangle in $G$ is {\em fair} if it contains at least one bichromatic edge. 
Every fair triangle contains exactly one red edge or exactly one blue edge (see Figure~\ref{fig01}). 
It is called {\em red-dominant} in the former case and {\em blue-dominant} in the latter.
A {\em fair-triangle packing} of $G$ is a set $\mc{T}$ of pairwise vertex-disjoint fair triangles in $G$. 
We regard $\mc{T}$ as a subgraph of $G$, and hence $w(\mc{T})$ denotes its weight. 
$\mc{T}$ is {\em perfect} if every vertex of $G$ belongs to a triangle in $\mc{T}$. 
If $\mc{T}$ is perfect and $w(\mc{T})$ is maximum among all perfect fair-triangle packings of $G$, 
then $\mc{T}$ is called an {\em optimal perfect fair-triangle packing} of $G$. 

For a random variable $Y$, $\mathbb{E}[Y]$ denotes its expected value. Similarly, 
for an event $A$, $\Pr[A]$ denotes its probability. 
For two events $A$ and $B$, $\Pr[A~|~B]$ denotes the conditional probability of $A$ given $B$. 

\begin{figure}[thb]
\begin{center}
\begin{tikzpicture}[scale=0.55,transform shape]

\foreach \x in { 0, 8} {
         \draw[thick, line width = 1pt] (\x, 0) -- (\x+1.5, -2.2);
         \draw[thick, line width = 1pt] (\x, 0) -- (\x-1.5, -2.2);
}

\draw[densely dashed, line width = 1pt] (-1.5, -2.2) -- (1.5, -2.2);
\draw[densely dashed, line width = 1pt] (6.5, -2.2) -- (9.5, -2.2);

\fill[blue] (0, 0) circle(.15);
\fill[red] (1.5, -2.2) circle(.15);
\fill[red] (-1.5, -2.2) circle(.15);
\fill[red] (8, 0) circle(.15);
\fill[blue] (9.5, -2.2) circle(.15);
\fill[blue] (6.5, -2.2) circle(.15);

\end{tikzpicture}
\end{center}
\caption{A red-dominant triangle (left) and a blue-dominant triangle (right).
All solid edges are bichromatic.
The dashed edge in the left (respectively, right) triangle is red (respectively, blue).
\label{fig01}}
\end{figure}

\section{The perfect fair-triangle packing problem (PFTP)}
\label{sec2+}
%
Given a bicolored, edge-weighted complete graph $G$, we are interested in finding an optimal perfect fair-triangle packing in $G$. 
Under what conditions does such a packing exist? 

An obvious necessary condition is that $|V(G)|$ is divisible by~$3$. 
Henceforth, we assume that $|V(G)| = 3n$ for some positive integer $n$.
Let $r$ be the number of red vertices in $G$. 
Then, the number of blue vertices is $3n-r$.
Without loss of generality, we may assume that $r \le 3n-r$, or equivalently, $r \le \frac 32 n$.
A simple calculation shows that every perfect fair-triangle packing of $G$ contains exactly 
$r-n$ red-dominant triangles and exactly $2n-r$ blue-dominant triangles.
Hence, $r\ge n$ is a necessary condition for the existence of a perfect fair-triangle packing in $G$. 
In fact, since $G$ is complete, this condition is also sufficient.

We now formally define the problem studied in this paper.

\begin{problem}
\label{prob01}
We are given a bicolored, edge-weighted complete graph $G$ with $r$ red vertices and $3n-r$ blue vertices, 
where $n \le r \le \frac 32 n$. 
PFTP is the problem of computing an optimal perfect fair-triangle packing of $G$.
\end{problem}

\begin{observation}
\label{obs:1}
Let $\mc{T}$ be a perfect fair-triangle packing of $G$. 
Clearly, the red edges in $\mc{T}$ form a red matching of size $r-n$ and 
the blue edges in $\mc{T}$ form a blue matching of size $2n-r$. 
Moreover, $\mc{T}$ has exactly $2n$ bichromatic edges, which form $n$ pairwise vertex-disjoint $2$-paths.
\end{observation}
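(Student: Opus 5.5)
The plan is to prove the three claims of Observation~\ref{obs:1} by a local analysis of each fair triangle, followed by a counting argument over all $n$ triangles. The key local fact is already noted in Section~\ref{sec2}: every fair triangle $uvt$ contains vertices of both colors. Hence its color profile is either two red and one blue (red-dominant) or one red and two blue (blue-dominant). In the red-dominant case, the two red vertices span the unique red edge, and the other two edges join the blue vertex to each red vertex. These two edges are bichromatic and form a $2$-path centered at the blue vertex. The blue-dominant case is symmetric: one blue edge, and two bichromatic edges forming a $2$-path centered at the red vertex. So each triangle contributes exactly one monochromatic edge and exactly two bichromatic edges, and the latter form a $2$-path on the triangle's three vertices.

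Next I will fix the numbers of triangles of each type. Let $a$ be the number of red-dominant triangles in $\mc{T}$ and $b$ the number of blue-dominant ones. Since $\mc{T}$ is perfect, each vertex lies in exactly one triangle. Counting triangles gives $a+b=n$, and counting red vertices gives $2a+b=r$. Solving yields $a=r-n$ and $b=2n-r$, which matches the calculation quoted in Section~\ref{sec2+}.

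The three conclusions then follow from vertex-disjointness of the triangles. The red edges of $\mc{T}$ are exactly one per red-dominant triangle. Because the triangles are pairwise vertex-disjoint, these edges are pairwise vertex-disjoint, so they form a red matching of size $a=r-n$. Likewise the blue edges form a blue matching of size $b=2n-r$. Finally, the bichromatic edges number $2(a+b)=2n$ and are grouped triangle by triangle into $n$ $2$-paths. These paths lie in distinct triangles, so they are pairwise vertex-disjoint.

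No step is a genuine obstacle; the only point needing care is the local claim that a fair triangle has exactly one monochromatic edge. This holds because a $2{+}1$ color split among three vertices yields exactly one same-color pair.
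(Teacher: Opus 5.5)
Your proposal is correct and is essentially the argument the paper relies on. The paper's ``simple calculation'' in Section~\ref{sec2+} is your count $a+b=n$, $2a+b=r$, which yields $r-n$ red-dominant and $2n-r$ blue-dominant triangles, and the matching and vertex-disjoint $2$-path claims then follow from one monochromatic edge per triangle plus vertex-disjointness, as you describe.
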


We next recall the {\em $3$-dimensional matching problem} ($3$-DM)~\cite{GJ79}. 
Given three pairwise disjoint sets $S$, $W$, and $Z$ of equal size, together with a collection $T \subseteq S \times W \times Z$ of triples, 
$3$-DM asks whether there exists a subset $T'\subseteq T$ such that $|T'|=|S|$ and no two triples in $T'$ share an element. 
It is known that $3$-DM is NP-complete~\cite{GJ79}.

\begin{theorem}
\label{thm01}
PFTP is NP-hard even when the input graph $G$ has exactly $\frac 13 |V(G)|$ red vertices 
(i.e., $r = n$ in Problem~\ref{prob01}).
\end{theorem}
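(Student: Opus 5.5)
We reduce from $3$-DM, as the paper suggests, by encoding triples as heavy fair triangles. Take an instance $(S,W,Z,T)$ of $3$-DM with $|S|=|W|=|Z|=q$. We build the complete graph $G$ on vertex set $V = S\cup W\cup Z$, so $|V|=3q$ and $n=q$. Every vertex of $S$ is colored red and every vertex of $W\cup Z$ is colored blue, so $r=n=q$ as the statement requires. The weights are defined by a triple-indicator scheme: $w(\{s,w\})=1$ if some triple of $T$ contains both $s$ and $w$, and similarly for $\{s,z\}$ and $\{w,z\}$; all other edges get weight $0$. This simple pairwise scheme does not quite work, however, because three pairs may each come from different triples. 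So I would instead use the standard trick of making the decision threshold depend only on exact triples, and adjust the gadget accordingly.

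The more robust plan goes through the special case $r=n$, where every triangle in a perfect packing has the form $\{s,b,b'\}$ with $s$ red and $b,b'$ blue. I would therefore introduce a gadget per triple. For each triple $t=(s,w,z)\in T$, create fresh vertices so that choosing $t$ corresponds to a locally heavy configuration. Alternatively, and more simply, I would reduce from the NP-complete \emph{partition into triangles} problem restricted to tripartite graphs. That problem is exactly $3$-DM viewed as a graph: take the tripartite graph $H$ whose edges are the pairs appearing in triples. Partition into triangles remains NP-complete on such graphs, by the standard reduction of Garey and Johnson~\cite{GJ79}, which produces graphs where every triangle corresponds to a triple (each element is replaced by a gadget ensuring no spurious triangles). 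Given such an $H$ with parts $S,W,Z$, set $w(e)=1$ if $e\in E(H)$ and $0$ otherwise, color $S$ red and $W\cup Z$ blue, and complete the graph. The claim is that $G$ has a perfect fair-triangle packing of weight $3n$ if and only if $H$ can be partitioned into triangles. The forward direction is immediate: a triangle partition of $H$ uses one vertex from each part, so every triangle is fair, and its weight is $3n$. Conversely, weight $3n$ with $n$ triangles forces every triangle to have all three edges in $H$. Each such triangle is therefore a triangle of $H$, and hence a valid triple.

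The main obstacle is to guarantee that every triangle of $H$ is a genuine triple. The naive pair-graph of an arbitrary $3$-DM instance can contain spurious triangles. I would first try to cite the known result that partition into triangles is NP-complete, with the Garey--Johnson reduction producing a tripartite graph whose triangles are exactly the intended ones. Failing that, I would avoid the issue by a local gadget. For each triple $t=(s,w,z)$, subdivide by introducing private vertices, so that every weight-$1$ edge is specific to a single triple; the correspondence between heavy triangles and triples then becomes bijective. Finally, I would check that the color counts keep $r=n$, which is the property that makes all triangles automatically of the red-plus-two-blue form.
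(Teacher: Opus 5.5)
Your main route is to take a tripartite graph $H$ with three equal-size classes, color one class red and the other two blue, give weight $1$ to $E(H)$ and $0$ to all other edges, and ask for weight $3n$. This is a valid reduction from partition into triangles on such graphs. It also produces exactly the paper's instance: the paper's red set $\{\hat s\}\cup\{\hat\tau_7,\hat\tau_8,\hat\tau_9:\tau\in T\}$ is one class of a proper $3$-coloring of the Garey--Johnson graph.

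The gap is that you never establish the fact this route rests on, and the justification you give is false. The Garey--Johnson reduction keeps one vertex per element and adds a nine-vertex gadget per triple, as in Figure~\ref{fig02}. Its graph contains many triangles that correspond to no triple, e.g.\ $\hat\tau_1\hat\tau_2\hat\tau_7$, $\hat\tau_3\hat\tau_5\hat\tau_8$, $\hat\tau_4\hat\tau_6\hat\tau_9$ and $\hat\tau_5\hat\tau_6\hat\tau_7$ for every $\tau$. So ``each such triangle is \ldots\ a valid triple'' is wrong, though it is also unnecessary once you reduce from partition into triangles.

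What you do need, and never check, is that this graph is properly $3$-colorable in polynomial time with all three classes of equal size. That is what makes every triangle of $H$ fair in your forward direction, and what gives $r=n$. The classes cannot be $S$, $W$, $Z$ themselves, because the $9|T|$ gadget vertices must be placed too. One valid choice is $S\cup\{\hat\tau_7,\hat\tau_8,\hat\tau_9\}$, $W\cup\{\hat\tau_1,\hat\tau_4,\hat\tau_5\}$ and $Z\cup\{\hat\tau_2,\hat\tau_3,\hat\tau_6\}$, each of size $|S|+3|T|$; one checks this against the $18$ gadget edges.

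With that check added, your short equivalence argument plus the Garey--Johnson theorem is a complete proof. It is more modular than the paper's proof, which instead re-runs the Garey--Johnson argument inside the colored instance (forcing $\hat s\hat\tau_1\hat\tau_2$, hence $\hat\tau_5\hat\tau_6\hat\tau_7$, hence $\hat w\hat\tau_3\hat\tau_8$ and $\hat z\hat\tau_4\hat\tau_9$).

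Your fallback gadget does not repair this. If each triple gets private vertices so that every weight-$1$ edge belongs to a single triple, the triples are no longer coupled. A full-weight packing would then not certify that each element is covered exactly once, and unchosen triples would need some other full-weight cover. The nine-vertex gadget exists precisely to supply that coupling while keeping the red count equal to $n$: each triple's gadget is covered either by three internal triangles or by four triangles absorbing $\hat s,\hat w,\hat z$. So the real obstacle is not spurious triangles. It is combining the $3$-DM coupling with a coloring under which every weight-$1$ triangle is fair and $r=n$.
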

\begin{proof}
The proof is by reduction from 3-DM and closely follows the proof of the NP-hardness of  the 
{\em partition into triangles} problem in~\cite{GJ79}. 
Given an instance $I=(S,W,Z,T)$ of $3$-DM, we construct a graph $G$ as follows:
First, for each element $a \in S \cup W \cup Z$, we add a corresponding vertex $\hat{a}$ to $G$.
If $a \in S$, then we color $\hat{a}$ red; otherwise, we color $\hat{a}$ blue.
We then process the triples in $T$ one by one.
For each triple $\tau = (s, w, z) \in T$, we add three red vertices $\hat{\tau}_7, \hat{\tau}_8, \hat{\tau}_9$ and 
six blue vertices $\hat{\tau}_1, \ldots, \hat{\tau}_6$ to $G$ as shown in Figure~\ref{fig02}; 
we also add the $18$ edges shown in Figure~\ref{fig02} to $G$, each with weight $1$.
After processing the triples in $T$, we complete $G$ by adding the remaining edges,
each with weight $0$. 
Clearly, $G$ contains exactly $3|S|+9|T|$ vertices, of which $|S|+3|T|$ are red.
It suffices to show that $I$ is a yes-instance if and only if $G$ admits a perfect fair-triangle packing $\mc{T}$ with $w(\mc{T}) = 3|S|+9|T|$, as we do below.

Suppose that $I$ is a yes instance. 
Then, $T$ has a subset $T'$ in which each element of $S \cup W\cup Z$ appears exactly once. 
So, $|T'| = |S|$. 
We construct $\mc{T}$ from $I$ and $T'$ as follows. 
For each triple $\tau = (s, w, z) \in T'$, $\mc{T}$ contains the four fair triangles 
$\hat{s}\hat{\tau}_1\hat{\tau}_2$, $\hat{w}\hat{\tau}_3\hat{\tau}_8$,  $\hat{z}\hat{\tau}_4\hat{\tau}_9$, and $\hat{\tau}_5\hat{\tau}_6\hat{\tau}_7$.
Similarly, for each triple $\tau = (s, w, z) \in T \setminus T'$, $\mc{T}$ contains the three fair triangles $\hat{\tau}_1\hat{\tau}_2\hat{\tau}_7$,  $\hat{\tau}_3\hat{\tau}_5\hat{\tau}_8$, and $\hat{\tau}_4\hat{\tau}_6\hat{\tau}_9$, as shown in Figure~\ref{fig02}.
In total, $\mc{T}$ contains $4|T'|+3|T \setminus T'| = |S|+3|T|$ pairwise vertex-disjoint fair triangles.
Since $|V(G)| = 3|S|+9|T|$, the triangles in $\mc{T}$ form a perfect fair-triangle packing of $G$. 
Moreover, every edge used by $\mc{T}$ has weight~$1$, and hence $w(\mc{T}) = 3|S|+9|T|$.
This completes the proof of the ``only-if" direction. 

Conversely, suppose that $G$ has a fair triangle packing $\mc{T}$ with $w(\mc{T}) = 3|S|+9|T|$.
Since $w(\mc{T}) = 3|S|+9|T|$ and $G$ has exactly $3|S|+9|T|$ vertices, every edge in $\mc{T}$ has weight $1$.
For each element $s \in S$, consider the triangle in $\mc{T}$ containing its corresponding vertex $\hat{s}$.
Since $w(e) = 1$ for every $e \in E(\mc{T})$, there exists a triple $\tau = (s, w, z) \in T$ such that the vertices $\hat{s}$, $\hat{\tau}_1$ and $\hat{\tau}_2$ form a triangle in $\mc{T}$.
All such triples $\tau$ form a set $T' \subseteq T$.
Clearly, $|T'| = |S|$.
To finish the proof, it suffices to show that for any distinct $s, s' \in S$, 
the corresponding $\tau = (s, w, z)$ and $\tau' = (s', w', z')$ in $T'$ satisfy both $w \ne w'$ and $z \ne z'$, or 
equivalently, both $\hat{w} \ne \hat{w}'$ and $\hat{z} \ne \hat{z}'$.
Recall that $\hat{s}\hat{\tau}_1\hat{\tau}_2$ is a triangle in $\mc{T}$.
By this fact, the edges $\{ \hat{\tau}_1, \hat{\tau}_7 \}$ and $\{ \hat{\tau}_2, \hat{\tau}_7 \}$ are not in $\mc{T}$.
Consequently, the triangle in $\mc{T}$ containing $\hat{\tau}_7$ must be $\hat{\tau}_5\hat{\tau}_6\hat{\tau}_7$.
By the same argument, we can show $\hat{w}\hat{\tau}_3\hat{\tau}_8$ and $\hat{z}\hat{\tau}_4\hat{\tau}_9$ are triangles in $\mc{T}$, and so are $\hat{w}'\hat{\tau}'_3\hat{\tau}'_8$ and $\hat{z}'\hat{\tau}'_4\hat{\tau}'_9$.
In summary, the vertices $\hat{w}$, $\hat{w}'$, $\hat{z}$, and $\hat{z}'$ belong to distinct triangles in $\mc{T}$.
Therefore, $\hat{w} \ne \hat{w}'$ and $\hat{z} \ne \hat{z}'$.
This completes the proof.
\end{proof}

\begin{figure}[thb]
\begin{center}
\begin{tikzpicture}[scale=0.45,transform shape]

\draw[thick, line width = 1pt] (0, -2.2) -- (1.5, 0);
\draw[thick, line width = 1pt] (0, -2.2) -- (-1.5, 0);
\draw[thick, line width = 1pt] (1.5, 0) -- (-1.5, 0);
\fill[red] (0, -2.2) circle(.15);
\fill[blue] (1.5, 0) circle(.15);
\fill[blue] (-1.5, 0) circle(.15);

\draw[thick, line width = 1pt] (6, 2) -- (7.5, 4.2);
\draw[thick, line width = 1pt] (6, 2) -- (4.5, 4.2);
\draw[thick, line width = 1pt] (7.5, 4.2) -- (4.5, 4.2);
\fill[blue] (6, 2) circle(.15);
\fill[blue] (7.5, 4.2) circle(.15);
\fill[red] (4.5, 4.2) circle(.15);
         
\foreach \x in { 6, 12 } {
         \draw[thick, line width = 1pt] (\x, -2.2) -- (\x+1.5, 0);
         \draw[thick, line width = 1pt] (\x, -2.2) -- (\x-1.5, 0);
         \draw[thick, line width = 1pt] (\x+1.5, 0) -- (\x-1.5, 0);
         \fill[blue] (\x, -2.2) circle(.15);
         \fill[red] (\x-1.5, 0) circle(.15);
         \fill[blue] (\x+1.5, 0) circle(.15);
}

\draw[thick, line width = 1pt] (4.5, 4.2) -- (-1.5, 0);
\draw[thick, line width = 1pt] (4.5, 4.2) -- (1.5, 0);
\draw[thick, line width = 1pt] (6, 2) -- (4.5, 0);
\draw[thick, line width = 1pt] (6, 2) -- (7.5, 0);
\draw[thick, line width = 1pt] (7.5, 4.2) -- (10.5, 0);
\draw[thick, line width = 1pt] (7.5, 4.2) -- (13.5, 0);

\node[font=\fontsize{24}{6}\selectfont] at (-0.6, -2.2) {$\hat{s}$};
\node[font=\fontsize{24}{6}\selectfont] at (-1.8, -0.5) {$\hat{\tau}_1$};
\node[font=\fontsize{24}{6}\selectfont] at (1.9, -0.5) {$\hat{\tau}_2$};

\node[font=\fontsize{24}{6}\selectfont] at (5.4, -2.2) {$\hat{w}$};
\node[font=\fontsize{24}{6}\selectfont] at (4.2, 0.5) {$\hat{\tau}_8$};
\node[font=\fontsize{24}{6}\selectfont] at (7.8, 0.5) {$\hat{\tau}_3$};

\node[font=\fontsize{24}{6}\selectfont] at (11.4, -2.2) {$\hat{z}$};
\node[font=\fontsize{24}{6}\selectfont] at (10.2, -0.5) {$\hat{\tau}_9$};
\node[font=\fontsize{24}{6}\selectfont] at (13.9, -0.5) {$\hat{\tau}_4$};

\node[font=\fontsize{24}{6}\selectfont] at (6.6, 2) {$\hat{\tau}_5$};
\node[font=\fontsize{24}{6}\selectfont] at (4, 4.5) {$\hat{\tau}_7$};
\node[font=\fontsize{24}{6}\selectfont] at (8, 4.4) {$\hat{\tau}_6$};
\end{tikzpicture}
\end{center}
\caption{The gadget constructed for a given triple $\tau = (s, w, z) \in T$. 
Only the edges of weight $1$ are shown; all remaining edges have weight $0$.
\label{fig02}}
\end{figure}

\section{A simple $\frac 13$-approximation algorithm for PFTP}\label{sec:simpleAlgo}

Before we present the $\frac 13$-approximation algorithm for PFTP, we recall the {\em maximum-weight size-$k$ matching} problem. 
Given an edge-weighted graph $G_1 = (V_1, E_1)$ and a non-negative integer $k$, 
the problem seeks a matching in $G_1$ of size $k$ that maximizes the total edge weight.
We refer to such a matching as a {\em maximum-weight size-$k$ matching} in $G_1$.
The next lemma proves that a maximum-weight size-$k$ matching can be computed in $O(|V_1|^3)$
and this result will be employed as a subroutine in our algorithm for PFTP.

\begin{lemma}
\label{lemma01}
{\em \cite{Gab76, Gab83}}
Given an edge-weighted graph $G_1 = (V_1, E_1)$ and a non-negative integer $k$, 
a maximum-weight size-$k$ matching in $G_1$ can be found in $O(|V_1|^3)$.
\end{lemma}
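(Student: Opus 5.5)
The plan is to reduce the size-$k$ constraint to an ordinary maximum-weight perfect matching problem and then invoke Gabow's $O(|V_1|^3)$ algorithm for maximum-weight (perfect) matching in general graphs \cite{Gab76}. If $G_1$ has no matching of size $k$ (checkable by computing a maximum-cardinality matching), we report infeasibility. Otherwise we build an auxiliary graph $G_2$. Start from $G_1$ and add a set $D$ of $|V_1|-2k$ new dummy vertices. Join every dummy vertex to every vertex of $V_1$ by an edge of weight $0$, and add no edges inside $D$. To make every perfect matching prefer real edges only through their true weights, we may also shift all weights of $E_1$ by a common large constant. This shift is harmless because every perfect matching of $G_2$ uses exactly $k$ edges of $E_1$, so the shift adds the same amount to all of them.

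The key counting step is as follows. In any perfect matching $N$ of $G_2$, the vertices of $D$ are pairwise nonadjacent. Hence each of the $|V_1|-2k$ dummy vertices is matched to a distinct vertex of $V_1$. The remaining $2k$ vertices of $V_1$ must then be matched among themselves by edges of $E_1$, which gives exactly $k$ such edges. Conversely, any size-$k$ matching $M$ in $G_1$ leaves $|V_1|-2k$ vertices of $V_1$ uncovered, and these can be paired arbitrarily with $D$, so $M$ extends to a perfect matching of $G_2$ with weight $w(M)$. Therefore maximum-weight perfect matchings of $G_2$, restricted to $E_1$, are exactly maximum-weight size-$k$ matchings of $G_1$.

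For the running time, note that $|V(G_2)| \le 2|V_1|$. A maximum-weight perfect matching can be found with a maximum-weight matching routine by adding a large constant to all edge weights. Gabow's implementation of Edmonds' algorithm runs in $O(|V(G_2)|^3)=O(|V_1|^3)$. The construction itself costs $O(|V_1|^2)$ time.

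The main subtlety is justifying that the weight shift forces perfection without disturbing which matching is optimal, since all perfect matchings have the same cardinality. This is routine, so the main obstacle is really only the cited cubic-time weighted matching algorithm, which we take from \cite{Gab76, Gab83}. Alternatively, Gabow's \cite{Gab83} degree-constrained subgraph framework handles the size constraint directly.
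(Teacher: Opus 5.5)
Your proof is correct, but it reaches the cubic bound through a different black box than the paper does. The paper adds a \emph{single} vertex $u$, joined to all of $V_1$ by weight-$0$ edges. It imposes $f(u)=g(u)=|V_1|-2k$ and $f(v)=g(v)=1$ for $v\in V_1$, and computes a maximum-weight $[f,g]$-factor $F_2$ with Gabow's degree-constrained subgraph algorithm. The degree-sum count gives $|F_2|=|V_1|-k$, and $|V_1|-2k$ of these edges are incident to $u$, so exactly $k$ lie in $E_1$. Optimality follows by padding any competing size-$k$ matching with edges to $u$.

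You instead split $u$ into $|V_1|-2k$ pairwise nonadjacent copies, which turns the degree constraint on $u$ into perfection of an ordinary matching. Your two directions mirror the paper's: each dummy takes a distinct vertex of $V_1$, leaving $2k$ vertices matched by $k$ edges of $E_1$; and any size-$k$ matching extends to a perfect matching of equal weight.

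Your route needs only the classical $O(n^3)$ weighted matching algorithm (Edmonds' blossom algorithm in Gabow's or Lawler's implementation), not the heavier $[f,g]$-factor machinery. The cost, at most doubling the vertex count, is harmless. You also handle infeasibility explicitly, which the paper leaves implicit; it is automatic in the paper's applications because $G$ is complete. The paper's route is more compact and more uniform, since the same $[f,g]$-factor primitive is reused later for the maximum-weight $[1,2]$-factor.

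One small cleanup: the preliminary shift of the weights of $E_1$ alone is unnecessary. Only the uniform shift of all edges of $G_2$ by a constant exceeding the total absolute weight is needed. Your feasibility check is exactly what guarantees that this shift produces a perfect matching.
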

\begin{proof}
Given the graph $G_1$, we construct an auxiliary graph $G_2$ by adding an additional vertex $u \notin V_1$
and connecting $u$ to every vertex $v \in V_1$ with an edge of weight $0$.
For each vertex $v \in V_1$, we define $f(v) = g(v) = 1$ and $f(u) = g(u) = |V_1|-2k$.
An {\em [f, g]-factor} of $G_2$ is a subset $F_2 \subseteq E(G_2)$ such that in the spanning subgraph $(V(G_2), F_2)$, 
the degree of each vertex $v$ is between $f(v)$ and $g(v)$, inclusive.
The {\em weight} of an $[f, g]$-factor $F_2$ is the total weight of its edges.

We compute a maximum-weight $[f, g]$-factor $F_2$ in $G_2$ in $O(|V_1|^3)$ time~\cite{Gab83}.
Obviously, the total degree of vertices in $(V(G_2), F_2)$ is $2|F_2|$, and hence $|F_2| = |V_1| - k$.
By the definitions of $f$ and $g$, exactly $|V_1|-2k$ edges in $F_2$ are incident to $u$
and exactly one edge in $F_2$ is incident to each vertex $v \in V_1$.
After removing the $|V_1|-2k$ edges incident to $u$ from $F_2$, we obtain a set $F_1 \subseteq E_1$ of edges with $|F_1| = k$.
Moreover, the degree of each $v \in V_1$ in $(V_1, F_1)$ is either $0$ or $1$ depending on whether the edge $\{ u, v \}$ is in $F_2$.
Consequently, $F_1$ is a matching of size $k$ in $G_1$.

It suffices to show that $F_1$ is a maximum-weight size-$k$ matching in $G_1$.
Assume, to the contrary, that there exists a matching $F'_1$ of size $k$ in $G_1$ whose weight exceeds that of $F_1$. 
Then, exactly $|V_1|-2k$ vertices of $G_1$ have degree $0$ in the graph $(V_1, F'_1)$. 
We construct a new set $F'_2$ from $F'_1$ by adding the edge $\{ u, v \}$ for all $v \notin V(F'_1)$.
Clearly, $F'_2$ is an $[f, g]$-factor of $G_2$.
Moreover, $w(F_1) = w(F'_1)$ and $w(F_2) = w(F'_2)$.
Therefore, $w(F'_2) > w(F_2)$, a contradiction to the maximality of $F_2$ as a maximum-weight $[f, g]$-factor of $G_2$.
This completes the proof.
\end{proof}

In the remainder of this paper, we fix an input graph $G$ of PFTP for discussion.
Since $G$ is fixed, we abbreviate $V(G)$ and $E(G)$ as $V$ and $E$, respectively.
In addition to the notation introduced in Section~\ref{sec2}, we define the following for $G$.

\begin{notation}
\label{nota01}
Let $V_{\rm r}$ and $V_{\rm b}$ denote the sets of red and blue vertices in $G$, respectively.
By Problem~\ref{prob01}, $|V_{\rm r}| = r$ and $|V_{\rm b}| = 3n-r$.
Let $R$, $B$, and $X$ denote the sets of red, blue, and bichromatic edges in $G$, respectively.
For brevity, let $G_{\rm r} = G[R]$, $G_{\rm b} = G[B]$, and $G_{\rm x} = G[X]$.
\end{notation}

Now, we are ready to show the algorithm.
Roughly speaking, the algorithm computes two perfect fair-triangle packings $\mc{T}_0$ and $\mc{T}_1$, 
and outputs the one with greater weight.

We compute the first perfect fair-triangle packing $\mc{T}_0$ as follows.
Recall Notation~\ref{nota01}.
Since $|V_{\rm r}| = r$, $2(r-n) \le r$, and $G$ is complete, 
there exists a red matching of size $r-n$ in $G_{\rm r}$. 
We first use Lemma~\ref{lemma01} to compute a maximum-weight size-$(r-n)$ red matching $M_{\rm r}$ in $G_{\rm r}$ in $O(n^3)$ time.
Similarly, since $|V_{\rm b}| = 3n-r$, $2(2n-r) \le 3n-r$, and $G$ is complete, we can then use 
Lemma~\ref{lemma01} to compute a maximum-weight size-$(2n-r)$ blue matching $M_{\rm b}$ in $G_{\rm b}$ in $O(n^3)$ time. 
Obviously, $|V_{\rm r} \setminus V(M_{\rm r})| = 2n-r$ and $|V_{\rm b} \setminus V(M_{\rm b})| = r-n$.
Finally, we arbitrarily pair each edge in $M_{\rm r}$ with a distinct vertex in $V_{\rm b} \setminus V(M_{\rm b})$
to obtain $r-n$ fair triangles, 
and pair each edge in $M_{\rm b}$ with a distinct vertex in $V_{\rm r} \setminus V(M_{\rm r})$
to obtain $2n-r$ fair triangles. 
This yields the first perfect fair-triangle packing $\mc{T}_0$.

Similarly, we compute the second perfect fair-triangle packing $\mc{T}_1$ as follows.
Recall Notation~\ref{nota01} again. 
Since $n \le r = |V_{\rm r}| \le |V_{\rm b}|$ and $G$ is complete, $G_{\rm x}$ has a bichromatic matching of size $r \ge n$.
We use Lemma~\ref{lemma01} to compute a maximum-weight size-$n$ bichromatic matching $M_{\rm x}$ in $G_{\rm x}$ in $O(n^3)$ time.
Finally, we arbitrarily pair each edge in $M_{\rm x}$ with a distinct vertex in $V \setminus V(M_{\rm x})$ to obtain $n$ fair triangles. 
This yields the second perfect fair-triangle packing $\mc{T}_1$.

We summarize the algorithm in Figure~\ref{fig03}.

\begin{figure}[htb]
\begin{center}
\framebox{
\begin{minipage}{5.5in}
Algorithm {\sc Approx1}:\\
Input: A bicolored, edge-weighted complete graph $G = (V, E)$ with $|V|=3n$;
\begin{itemize}
\parskip=0pt
\item[1.]
	Use Lemma~\ref{lemma01} to compute a maximum-weight size-$(r-n)$ red matching $M_{\rm r}$ in $G_{\rm r}$ in $O(n^3)$ time.
\item[2.]
	Use Lemma~\ref{lemma01} to compute a maximum-weight size-$(2n-r)$ blue matching $M_{\rm b}$ in $G_{\rm b}$ in $O(n^3)$ time.
\item[3.]
	Arbitrarily pair each edge in $M_{\rm r}$ with a distinct vertex in $V_{\rm b} \setminus V(M_{\rm b})$  
	to obtain $r-n$ fair triangles, and pair each edge in $M_{\rm b}$ with a distinct vertex in $V_{\rm r} \setminus V(M_{\rm r})$  
	to obtain $2n-r$ fair triangles.  
	These $n$ triangles form the first perfect fair-triangle packing $\mc{T}_0$.
\item[4.]
	Use Lemma~\ref{lemma01} to compute a maximum-weight size-$n$ bichromatic matching $M_{\rm x}$ in $G_{\rm x}$ in $O(n^3)$ time.
\item[5.]
	Arbitrarily pair each edge in $M_{\rm x}$ with a distinct vertex in $V \setminus V(M_{\rm x})$ to obtain $n$ fair triangles.
	These $n$ triangles form the second perfect fair-triangle packing $\mc{T}_1$.
\item[6.]
          Return $\mc{T}_0$ if $w(\mc{T}_0) \ge w(\mc{T}_1)$; otherwise, return $w(\mc{T}_1)$. 
\end{itemize}
\end{minipage}}
\end{center}
\caption{A simple approximation algorithm for PFTP.} \label{fig03}
\end{figure}

We introduce the following notations for ease of presentation.

\begin{notation}
\label{nota02}
Fix an optimal perfect fair-triangle packing $\mc{T}^*$ of $G$. 
Let 
$R^* = R \cap E(\mc{T}^*)$,  $B^* = B \cap E(\mc{T}^*)$, and  $X^* = X \cap E(\mc{T}^*)$.
Clearly, $w(\mc{T}^*) = w(R^*)+w(B^*)+w(X^*)$.
\end{notation}

\begin{theorem}
\label{thm02}
{\sc Approx1} in Figure~\ref{fig03} is an $O(n^3)$-time $\frac 13$-approximation algorithm for PFTP.
\end{theorem}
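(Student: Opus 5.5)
The plan is to compare $w(\mc{T}_0)$ and $w(\mc{T}_1)$ separately with the parts of the optimum $\mc{T}^*$ from Notation~\ref{nota02}, and then average. Running time is immediate: Steps~1, 2 and~4 each invoke Lemma~\ref{lemma01} on a graph with at most $3n$ vertices, costing $O(n^3)$. Steps~3, 5 and~6 take linear time. Correctness of the output as a perfect fair-triangle packing was already argued when $\mc{T}_0$ and $\mc{T}_1$ were constructed. Every triangle in $\mc{T}_0$ consists of a red (blue) matching edge plus a blue (red) vertex, so it is fair. Every triangle in $\mc{T}_1$ contains a bichromatic edge, so it is fair.

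\textbf{Bound for $\mc{T}_0$.} By Observation~\ref{obs:1}, $R^*$ is a red matching of size exactly $r-n$ and $B^*$ is a blue matching of size exactly $2n-r$. Hence, by the maximality in Steps~1 and~2, $w(M_{\rm r}) \ge w(R^*)$ and $w(M_{\rm b}) \ge w(B^*)$. Since weights are nonnegative and $\mc{T}_0 \supseteq M_{\rm r} \cup M_{\rm b}$, we get $w(\mc{T}_0) \ge w(R^*) + w(B^*)$.

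\textbf{Bound for $\mc{T}_1$.} By Observation~\ref{obs:1}, $X^*$ is the union of $n$ vertex-disjoint bichromatic $2$-paths. Each $2$-path contains at least one edge of weight at least half its weight, so choosing the heavier edge from each $2$-path gives a bichromatic matching of size $n$ with weight at least $\frac12 w(X^*)$. By the maximality of $M_{\rm x}$ in Step~4, $w(\mc{T}_1) \ge w(M_{\rm x}) \ge \frac12 w(X^*)$.

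\textbf{Combining.} The returned weight is $\max\{w(\mc{T}_0), w(\mc{T}_1)\}$, which is at least the convex combination $\frac13 w(\mc{T}_0) + \frac23 w(\mc{T}_1)$. This is at least $\frac13\bigl(w(R^*)+w(B^*)\bigr) + \frac13 w(X^*) = \frac13 w(\mc{T}^*)$.

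There is no real obstacle here; the only step requiring care is the $\frac12$ bound. That bound needs the exact matching sizes guaranteed by Observation~\ref{obs:1}, so that $R^*$, $B^*$ and the chosen sub-matching of $X^*$ are feasible candidates of the prescribed sizes in Lemma~\ref{lemma01}.
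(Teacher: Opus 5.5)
Your proposal is correct and follows the paper's proof essentially step for step. Both proofs use Observation~\ref{obs:1} to bound $w(\mc{T}_0)\ge w(R^*)+w(B^*)$, take the heavier edge of each $2$-path in $X^*$ to get $w(\mc{T}_1)\ge w(M_{\rm x})\ge \frac12 w(X^*)$, and then combine via $\max\{w(\mc{T}_0),w(\mc{T}_1)\}\ge \frac13 w(\mc{T}_0)+\frac23 w(\mc{T}_1)$.
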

\begin{proof}
Obviously, {\sc Approx1} runs in $O(n^3)$ time. It remains to analyze the approximation ratio.
By Observation~\ref{obs:1}, $R^*$ is a matching of size $r-n$ in $G_{\rm r}$ and $B^*$ is a matching of size $2n-r$ in $G_{\rm b}$. 
By the constructions of $M_{\rm r}$ and $M_{\rm b}$, $w(M_{\rm r}) \ge w(R^*)$ and $w(M_{\rm b}) \ge w(B^*)$. Hence, $w(\mc{T}_0) \ge w(R^*)+w(B^*)$.

By Observation~\ref{obs:1} again, $|X^*| = 2n$ and the edges in $X^*$ form $n$ vertex-disjoint $2$-paths.
For each of the $2$-paths, we select the heavier of its two edges. 
The selected edges form a bichromatic matching of size $n$ in $G_{\rm x}$ whose weight is at least $ \frac 12 w(X^*)$.
Consequently, by the construction of $M_{\rm x}$, $w(\mc{T}_1) \ge w(M_{\rm x}) \ge \frac 12 w(X^*)$.
Finally, we have 
\[
\max \{ w(\mc{T}_0), w(\mc{T}_1) \} \ge \frac 13 w(\mc{T}_0) + \frac 23 w(\mc{T}_1) 
\ge \frac 13 (w(R^*)+w(B^*)+w(X^*)) = \frac 13 w(\mc{T}^*).
\]
This completes the proof.
\end{proof}

\section{A randomized $(\frac {16}{47}-\epsilon)$-approximation algorithm for PFTP}\label{sec:random}

Let $\epsilon \in (0, 1]$ be a fixed sufficiently small constant.
For simplicity, we assume that $\frac 1\epsilon$ is an integer.
In this section, we present a randomized $(\frac {16}{47}-\epsilon)$-approximation algorithm for PFTP. 
Roughly speaking, the algorithm computes five perfect fair-triangle packings $\mc{T}_0, \ldots, \mc{T}_4$, 
and outputs the one with the greatest weight. $\mc{T}_0$ is computed by {\sc Approx1}. 
In the remainder of this section, we show how to compute $\mc{T}_1, \ldots, \mc{T}_4$ using an approach completely different from that of {\sc Approx1}. 

Recall Notations~\ref{nota01} and~\ref{nota02}.
Also recall that a {\em maximum-weight $[1, 2]$-factor} (cf. Lemma~\ref{lemma01} or~\cite{Gab83}) in $G_{\rm x}$ is a set  $F \subseteq X$
such that the degree of each vertex in $G[F]$ is $1$ or $2$.
ByObservation~\ref{obs:1}, $X^*$ is a $[1, 2]$-factor in $G_{\rm x}$.
Therefore, our algorithms always start by performing the following step:
\begin{itemize}
\item Compute a maximum-weight $[1, 2]$-factor $F$ in $G_{\rm x}$ in $O(n^3)$ time~\cite{Gab83}.
Let $\mc{F} = G[F]$. 
\end{itemize}
Clearly, $w(\mc{F}) \ge w(X^*)$.
Recall that a single vertex is not regarded as a path.
Since $F$ is a $[1, 2]$-factor, each vertex in $\mc{F}$ is of degree $1$ or $2$.
This implies that each connected component of $\mc{F}$ is either a cycle or a path.
A cycle or a path in $\mc{F}$ is {\em long} if it has at least $\frac 2\epsilon$ edges; otherwise, it is {\em short}.

\subsection{Preprocessing long connected components of $\mc{F}$}\label{subsec:prep}
Once our algorithms obtain $\mc{F}$, they decompose each long connected component $C$ of $\mc{F}$ 
into a collection of short paths, losing at most an $\epsilon$ fraction of the total weight, as follows.
Let $c = |E(C)| \ge \frac 2\epsilon$, and $e_1, \ldots, e_c$ denote the edges of $C$ in the order in which they appear along $C$.

First consider the case where $C$ is a cycle. Let $\ell$ be the largest integer such that $\frac \ell \epsilon \le c$. 
Moreover, for each $i \in \{ 1, 2, \ldots, \frac 1\epsilon\}$, let $E_i = \{ e_i, e_{i+1/\epsilon}, \ldots, e_{i+(\ell-1)/\epsilon} \}$. 
Clearly, $E_1$, \ldots, $E_{1/\epsilon}$ form a partition of $\{e_1, e_2, \ldots, e_{\ell/\epsilon}\}$.
By the choice of $\ell$, we have $\frac {\ell+1} \epsilon > c$, and hence $c-\frac \ell \epsilon \le \frac 1\epsilon-1$.
For each $i \in \{ 1, 2, \ldots, \frac 1\epsilon \}$, removing the edges in $E_i$ from $C$ yields a collection $C_i$ of vertex-disjoint paths.
Among the paths in $C_i$, since $c-\frac \ell \epsilon \le \frac 1\epsilon-1$, at most one path has length in the interval 
$[\frac 1\epsilon-1, \frac 2\epsilon-2]$, while each of the remaining paths has length exactly $\frac 1\epsilon-1$. 
So, each path in $C_i$ is short. Consider an integer $i\in\{1,2,\ldots,1/\epsilon\}$ such that $w(E_i)$ is minimized. 
Since $\sum_{j=1}^{1/\epsilon}w(E_j) \le w(C)$, $w(E_i) \le \epsilon\cdot w(C)$. 
We now remove the edges in $E_i$ from $C$. This removal decreases $w(C)$ by at most an $\epsilon$ fraction. 

Next consider the case where $C$ is a path. Let $\ell$ be the largest integer such that $\frac \ell \epsilon + 2 \le c$.
Moreover, for each $i \in \{ 2, 3, \ldots, \frac 1\epsilon+1\}$, let $E_i = \{ e_i, e_{i+1/\epsilon}, \ldots, e_{i+(\ell-1)/\epsilon} \}$. 
Clearly, $E_2$, \ldots, $E_{1/\epsilon+1}$ form a partition of $\{e_2, e_3, \ldots, e_{\ell/\epsilon+1}\}$.
By the choice of $\ell$, we have $\frac \ell \epsilon+2 \le c$ and $\frac {\ell+1} \epsilon+2 > c$, and hence
 $2 \le c-\frac \ell \epsilon \le \frac 1\epsilon+1$. 
For each $i \in \{ 2, 3, \ldots, \frac 1\epsilon+1\}$, removing the edges in $E_i$ from $C$ yields a collection $C_i$ of vertex-disjoint paths.
The path containing $e_1$ in $C_i$ has length $i-1 \in [1, \frac 1\epsilon]$, the path containing $e_c$ in $C_i$ has length 
$c-i-\frac {\ell-1}\epsilon \in [1, \frac 2\epsilon-1]$, and each of the remaining path has length exactly $\frac 1\epsilon-1$.
So, each path in $C_i$ is short. Consider an integer $i\in\{2,3,\ldots,1/\epsilon+1\}$ such that $w(E_i)$ is minimized. 
Since $\sum_{i=2}^{1/\epsilon+1} w(E_i) \le w(C)$, $w(E_i) \le \epsilon\cdot w(C)$. 
We now remove the edges in $E_i$ from $C$. This removal decreases $w(C)$ by at most an $\epsilon$ fraction. 

\begin{remark}
\label{remark01}
Suppose that we have modified $\mc{F}$ by decomposing each long connected component $C$ of $\mc{F}$ as described above. 
Then, $\mc{F}$ has the following properties:
\begin{itemize}
\item[P1.] Each connected component $K$ of $\mc{F}$ is a short cycle or a short path. Moreover, 
if $K$ is a cycle, then it has even length (because each edge in $\mc{F}$ is bichromatic).

\item[P2.] Let $\mc{C}$ and $\mc{P}$ denote the sets of cycles and paths in $\mc{F}$, respectively.
Then, $w(\mc{C})+w(\mc{P}) = w(\mc{F}) \ge (1-\epsilon)w(X^*)$.
\end{itemize}
\end{remark}

Consider a bichromatic edge $\{ u, v \}$ in $G$.
Without loss of generality, assume that $u$ is red and $v$ is blue. 
Then, $\{ u, v \}$ must belong to one of the following five types:
\begin{itemize}
\item {\em Type-1}: $u$ and $v$ belong to the same connected component of $\mc{F}$.

\item {\em Type-2}: $u$ and $v$ belong to different cycle components of $\mc{F}$.

\item {\em Type-3}: $u$ belongs to a cycle component of $\mc{F}$ and $v$ belongs to a path component of $\mc{F}$.

\item {\em Type-4}: $u$ belongs to a path component of $\mc{F}$ and $v$ belongs to a cycle component of $\mc{F}$.

\item {\em Type-5}: $u$ and $v$ belong to different path components of $\mc{F}$.
\end{itemize}
\begin{notation}
\label{nota03}
For each $i \in \{ 1, \ldots, 5 \}$, let $X_i \subseteq X$ be the set of bichromatic edges of Type-$i$, and let $X^*_i = X_i \cap E(X^*)$.
In other words, $X^*_i$ consists of all Type-$i$ bichromatic edges in the optimal solution.
Clearly, $w(X^*) = \sum_{i=1}^5 w(X^*_i)$.
See Figure~\ref{fig04} for an example.
\end{notation}

\begin{lemma}
\label{lemma02}
For each $i \in \{ 1, \ldots, 5 \}$, a maximum-weight matching $M_i$ in $G[X_i]$ satisfies $w(M_i) \ge \frac 12 w(X^*_i)$.
\end{lemma}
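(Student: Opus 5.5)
The plan is to exhibit, inside $X^*_i$ itself, a matching of weight at least $\frac 12 w(X^*_i)$. Since $X^*_i \subseteq X_i$, such a matching lies in $G[X_i]$, and so a maximum-weight matching $M_i$ in $G[X_i]$ has at least that weight. The argument therefore reduces to understanding the structure of $X^*_i$ as a subgraph, which we take directly from Observation~\ref{obs:1}.

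By Observation~\ref{obs:1}, $X^*$ consists of $n$ pairwise vertex-disjoint $2$-paths, one inside each triangle of $\mc{T}^*$. Since $X^*_i \subseteq X^*$, each connected component of $G[X^*_i]$ lies within one of these $2$-paths. Hence each component is either a single edge or a $2$-path, and distinct components are vertex-disjoint. From every component we select its heavier edge; for an edge component this is the edge itself. The selected edges come from vertex-disjoint components, so they form a matching $M' \subseteq X^*_i \subseteq X_i$.

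For the weight bound, an edge component contributes its full weight to $M'$. A $2$-path component contributes at least half of its weight, because the heavier of two edges weighs at least their average. Summing over all components gives $w(M') \ge \frac 12 w(X^*_i)$, and therefore $w(M_i) \ge w(M') \ge \frac 12 w(X^*_i)$. We use that edge weights are nonnegative only implicitly: every component contributes nonnegatively, so nothing is lost by summing over them.

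No step is a real obstacle. The one point needing care is that $M'$ must be a matching in $G[X_i]$. This holds because $M'$ uses only edges of $X^*_i$, which are Type-$i$ by definition, and we never need to check whether Type-$i$ edges from different triangles could conflict, since they are vertex-disjoint by Observation~\ref{obs:1}.
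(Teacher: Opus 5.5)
Your proof is correct and follows essentially the same route as the paper: both exhibit a matching inside $X^*_i \subseteq X_i$ of weight at least $\frac 12 w(X^*_i)$, using the fact that every nontrivial component of $G[X^*_i]$ is an edge or a $2$-path. The only difference is cosmetic. The paper splits $X^*_i$ into two matchings $N_1, N_2$ and takes the heavier one, whereas you take the heavier edge of each component (the device the paper itself uses in the proof of Theorem~\ref{thm02}), which yields a matching at least as heavy.
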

\begin{proof}
By Observation~\ref{obs:1}, each connected component of $G[X^*]$ is a $2$-path. 
Consider an $i \in \{ 1, \ldots, 5 \}$. 
Clearly, each connected component of $G[X_i^*]$ is either a single vertex, a $1$-path, or a $2$-path. 
Thus, $X_i^*$ can be partitioned into two matchings $N_1$ and $N_2$. 
Obviously, both $N_1$ and $N_2$ are mathings in $G[X^*_i] \subseteq G[X_i]$, and $\max\{w(N_1),w(N_2)\} \ge \frac 12 w(X^*_i)$. 
Hence, $w(M_i) \ge \frac 12 w(X^*_i)$.
\end{proof}

\begin{figure}[thb]
\begin{center}
\begin{tikzpicture}[scale=0.45,transform shape]
         
\foreach \x/\y in { 0/0, 0/3.5 } {
         \draw[thick, line width = 1pt] (\x, \y) -- (\x+1.5, \y);
         \draw[thick, line width = 1pt] (\x, \y) -- (\x, \y-1.5);
         \draw[thick, line width = 1pt] (\x+1.5, \y) -- (\x+1.5, \y-1.5);
         \draw[thick, line width = 1pt] (\x, \y-1.5) -- (\x+1.5, \y-1.5);
         \fill[red] (\x, \y) circle(.15);
         \fill[blue] (\x+1.5, \y) circle(.15);
         \fill[red] (\x+1.5, \y-1.5) circle(.15);
         \fill[blue] (\x, \y-1.5) circle(.15);
}

\foreach \x/\y in { 6/3, 6/1, 6/-1 } {
         \draw[thick, line width = 1pt] (\x, \y) -- (\x+1.5, \y);
         \draw[thick, line width = 1pt] (\x+1.5, \y) -- (\x+3, \y);
         \fill[blue] (\x, \y) circle(.15);
         \fill[red] (\x+1.5, \y) circle(.15);
         \fill[blue] (\x+3, \y) circle(.15);
}

\draw[thick, line width = 1pt] (4.5, -1) -- (6, -1);
\fill[red] (4.5, -1) circle(.15);
\draw[thick, line width = 1pt] (9, -1) -- (10.5, -1);
\fill[red] (10.5, -1) circle(.15);

\draw[densely dashed, line width = 1pt] (4.5, -1) -- (1.5, 0);
\draw[densely dashed, line width = 1pt] (6, 1) -- (1.5, 2);
\draw[densely dashed, line width = 1pt] (7.5, 3) -- (9, 1);
\draw[densely dashed, line width = 1pt] (0, 0) -- (0, 2);
\draw [densely dashed, line width = 1pt] (6, -1) to [out = -20, in = -160] (10.5, -1);

\node[font=\fontsize{24}{6}\selectfont] at (8, -2) {$e_1$};
\node[font=\fontsize{24}{6}\selectfont] at (-0.7, 1) {$e_2$};
\node[font=\fontsize{24}{6}\selectfont] at (3, -1) {$e_4$};
\node[font=\fontsize{24}{6}\selectfont] at (3.3, 2) {$e_3$};
\node[font=\fontsize{24}{6}\selectfont] at (7.5, 2) {$e_5$};
\end{tikzpicture}
\end{center}
\caption{The different types of edges in $X^*$.
In the graph, $\mc{F}$ consists of two $4$-cycles, two $2$-paths, and one $4$-path.
For each $i \in \{ 1, \ldots, 5 \}$, the dashed edge $e_i$ is in $X^*\setminus E(\mc{F})$ 
and is of Type-$i$.
\label{fig04}}
\end{figure}

\subsection{Computing $\mc{T}_1$}

In this subsection, we design an algorithm that computes a perfect fair-triangle packing $\mc{T}_1$ satisfying $w(\mc{T}_1) \ge w(X^*_1)$.
A $2$-path is {\em red-dominant} (respectively, {\em blue-dominant}) if its vertices, in order, are colored red, blue, and red 
(respectively, blue, red, and blue).
A {\em fair $(1,2)$-path packing} is a subgraph of $G[X_1]$ in which each connected component is a bichromatic edge, a red-dominant path, or a blue-dominant path.
Since each edge in a fair $(1,2)$-path packing is in $X_1$, it must connect two vertices in the same connected component of $\mc{F}$.
Moreover, a fair $(1,2)$-path packing can be characterized by a triple $(i, j, k)$, where $i$, $j$, and $k$ denote the numbers of bichromatic edges, 
red-dominant paths, and blue-dominant paths in the packing, respectively. 
For brevity, we refer to such a fair $(1,2)$-path packing as an {\em $(i,j,k)$-packing}.

Intuitively, we will compute $\mc{T}_1$ by first computing a maximum-weight fair $(1,2)$-path packing in $G[X_1]$ and then 
transforming it to a perfect fair-triangle packing in $G$.

Let $C_1, \ldots, C_q$ denote the connected components of $\mc{F}$, where $q$ is the number of connected components of $\mc{F}$.
By P1 in Remark~\ref{remark01}, for each $1 \le h \le q$, $C_h$ is short and hence $|V(C_h)| \le \frac 2\epsilon$.
Let $C'_h$ denote the subgraph induced by $V(C_h)$.
Clearly, $C'_h$ is a subgraph of $G[X_1]$ and every fair $(1,2)$-path packing in $C'_h$ is an $(i, j, k)$-packing with $i+ j+k \le \frac{1}{2}|V(C_h)|$.
For each triple $(i,j,k)$ with $i+ j+k \le \frac{1}{2}|V(C_h)|$, let $\mc{Q}_h(i, j, k)$ denote a maximum-weight $(i, j, k)$-packing in $C'_h$.
If no $(i,j,k)$-packing exists in $C'_h$, we define $w(\mc{Q}_h(i, j, k)) = -\infty$ and regard $\mc{Q}_h(i, j, k)$ as {\em invalid}.
Since $|V(C_h)| \le \frac 2\epsilon$, computing $\mc{Q}_h(i, j, k)$ for each triple $(i, j, k)$ takes $O(1)$ time by brute force. 
Consequently, computing $\mc{Q}_h(i, j, k)$ for all such triples takes $O(1)$ total time.

For each $h \in\{ 1, 2, \ldots, q\}$, let $\mc{F}'_h$ denote the union of $C'_1, \ldots, C'_h$. 
For each triple $(i,j,k)$ with $i+j+k \le n$, let $\mc{Q}'_h(i, j, k)$ denote a maximum-weight $(i, j, k)$-packing in $\mc{F}'_h$.
If no $(i,j,k)$-packing exists in $\mc{F}'_h$, we define $w(\mc{Q}'_h(i, j, k)) = -\infty$ and regard $\mc{Q}'_h(i, j, k)$ as {\em invalid}.
We claim that computing $\mc{Q}'_h(i, j, k)$ for all triples $(i,j,k)$ with $i+j+k \le n$ can be done in $O(n^3h)$ total time 
via dynamic programming. 
To see the claim, first note that for each triple $(i,j,k)$ with $i+j+k \le n$, 
$\mc{Q}'_1(i, j, k) = \mc{Q}_1(i, j, k)$ if $i+ j+k \le \frac{1}{2}|V(C_h)|$, while $\mc{Q}'_1(i, j, k)$ is invalid otherwise.
Hence, it takes $O(n^3)$ total time to compute $\mc{Q}'_1(i, j, k)$ for all triples $(i,j,k)$ with $i+j+k \le n$.
Now, consider an arbitrary $h \in \{2, 3, \ldots, q\}$.
Suppose that we have computed $\mc{Q}'_{h-1}(i, j, k)$ for all triples $(i,j,k)$ with $i+j+k \le n$ in $O(n^3(h-1))$ time. 
To compute $\mc{Q}'_{h}(i, j, k)$ for a triple $(i,j,k)$ with $i+j+k \le n$, we first find 
a triple $(i',j',k')$ such that $i'\le i$, $j'\le j$, $k'\le k$, $i'+j'+k'\le \frac 12|V(C_h)|$, 
and $w(\mc{Q}_h(i', j', k')) + w(\mc{Q}'_{h-1}(i-i', j-j', k-k'))$ is maximized.
Clearly, if neither $w(\mc{Q}_h(i', j', k'))$ nor $w(\mc{Q}'_{h-1}(i-i', j-j', k-k'))$ is $-\infty$, then 
$\mc{Q}'_h(i, j, k)$ is the union of $\mc{Q}_h(i', j', k')$ and $\mc{Q}'_{h-1}(i-i', j-j', k-k')$; 
otherwise, $\mc{Q}'_h(i, j, k)$ is invalid. 
Since computing $\mc{Q}_h(i',j',k')$ for all required $(i', j', k')$ takes $O(1)$ time, so does $\mc{Q}'_h(i, j, k)$. 
Thus, it takes $O(n^3h)$ total time to compute $\mc{Q}'_h(i, j, k)$ for all triples $(i,j,k)$ with $i+j+k \le n$. 
This finishes the proof of the claim. Since $q \le \frac 32n$, the claim implies that 
it takes $O(n^4)$ total time to compute $\mc{Q}'_q(i, j, k)$ for all triples $(i,j,k)$ with $i+j+k \le n$.

Unfortunately, even if $\mc{Q}'_q(i, j, k)$ is valid, it may be impossible to transform it into 
a perfect fair-triangle packing in $G$. So, we need the next definition. 

\begin{definition}
\label{def01}
An $(i,j,k)$-packing in $\mc{F}'_q$ is {\em feasible} if $j \le r-n$, $k \le 2n-r$, and $i+j+k \le n$, 
where $r = |V_{\rm r}|$ and $3n = |V|$.
\end{definition}

\begin{lemma}
\label{lemma03}
Given a feasible $(i,j,k)$-packing $\mc{K}$ in $\mc{F}'_q$, we can transform it into a perfect fair-triangle packing in $G$ 
in $O(n)$ time without losing any weight.
\end{lemma}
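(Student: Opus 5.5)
Given the feasible $(i,j,k)$-packing $\mc{K}$, we extend each component of $\mc{K}$ to a fair triangle by adding one or two uncovered vertices of suitable colors. The leftover vertices are then grouped into further fair triangles. Weights are nonnegative and no edge of $\mc{K}$ is ever removed, so the final packing weighs at least $w(\mc{K})$.

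First we fix how many triangles of each dominance type each component will become. By Observation~\ref{obs:1}, every perfect fair-triangle packing contains exactly $r-n$ red-dominant and $2n-r$ blue-dominant triangles.
\begin{itemize}
\item A red-dominant $2$-path (red, blue, red) already spans a red-dominant triangle once its two endpoints are joined by the red edge.
\item A blue-dominant $2$-path already spans a blue-dominant triangle in the same way.
\item A bichromatic edge $\{u,v\}$ with $u$ red and $v$ blue can be completed either by a red vertex (giving a red-dominant triangle) or by a blue vertex (giving a blue-dominant triangle).
\end{itemize}
Since $j \le r-n$ and $k \le 2n-r$, we assign $a$ of the $i$ single edges to become red-dominant and $i-a$ to become blue-dominant, where
\[
a=\min\{i,\; r-n-j\}.
\]
We then check $i-a \le 2n-r-k$. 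If $a=i$ this is immediate. Otherwise $a=r-n-j<i$, so $i-a = i+j-(r-n) \le (n-k)-(r-n) = 2n-r-k$, using $i+j+k\le n$.

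Next we count the vertices that remain uncovered and show they suffice.
\begin{itemize}
\item $\mc{K}$ covers $3j+k+i$ red vertices and $j+3k+i$ blue vertices, counting endpoints and middles correctly.
\item The $a$ red-completed edges need $a$ extra red vertices, and the $i-a$ blue-completed edges need $i-a$ extra blue vertices.
\item After this, the triangles built so far use $2(j+a)+(k+i-a)$ red vertices, of which exactly $j+a$ red-dominant triangles take two each and $k+i-a$ blue-dominant ones take one each. The blue vertices are counted symmetrically.
\end{itemize}
The remaining $n-(i+j+k)$ triangles must be split into $(r-n)-j-a$ red-dominant and $(2n-r)-k-(i-a)$ blue-dominant triangles. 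Both numbers are nonnegative by the inequalities above, and they sum to the right total. The remaining red count is $r-2(j+a)-(k+i-a)$, and this equals exactly twice the first number plus the second number. The blue count balances likewise. In particular, the extra red and blue vertices needed in the previous step exist, because the remaining counts are nonnegative. So we can form these triangles arbitrarily from the leftover vertices, since $G$ is complete.

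Each step is a single linear scan, so the whole transformation runs in $O(n)$ time. The main obstacle is the accounting step: choosing $a$ so that both dominance quotas stay nonnegative. That is exactly where the feasibility conditions $j\le r-n$, $k\le 2n-r$ and $i+j+k\le n$ are used.
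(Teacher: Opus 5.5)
Your proof is correct, but it takes a different and more bookkeeping-heavy route than the paper's.

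\textbf{How the paper argues.} The paper never assigns dominance types. It notes that $V\setminus V(\mc{K})$ has $r-i-2j-k$ red and $3n-r-i-j-2k$ blue vertices. The three feasibility conditions give exactly
\[
0\le n-i-j-k\le\min\{r-i-2j-k,\;3n-r-i-j-2k\},
\]
with each condition used in one of these inequalities. So one can add $n-i-j-k$ disjoint bichromatic edges on uncovered vertices, giving an $(n-j-k,j,k)$-packing. Exactly $n-j-k$ vertices then remain, one per edge component. These can be attached arbitrarily, because a bichromatic edge plus any third vertex is already a fair triangle. The correct numbers of red- and blue-dominant triangles then come for free.

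\textbf{How you argue.} You fix the type of each single edge via $a=\min\{i,r-n-j\}$. You then verify both dominance quotas and the exact leftover color counts. This is valid and makes the final composition of the packing explicit. However, it needs a case analysis and color balancing that the paper's color-oblivious pairing avoids.

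\textbf{A slip to fix.} $\mc{K}$ covers $2j+k+i$ red and $j+2k+i$ blue vertices, not $3j+k+i$ and $j+3k+i$: a red-dominant $2$-path has only two red vertices, and symmetrically for blue. This does not affect your argument. Your later tally $2(j+a)+(k+i-a)$ of red vertices in the triangles built so far is correct, and it is the count your nonnegativity and availability checks actually rely on.
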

\begin{proof}
Obviously, $V \setminus V(\mc{K})$ contains exactly $r-i-2j-k$ red vertices 
and exactly $3n-r-i-j-2k$ blue vertices.
By Definition~\ref{def01}, one can verify that 
\[
0 \le n-i-j-k \le \min \{ r-i-2j-k, 3n-r-i-j-2k \}.
\]
So, $V \setminus V(\mc{K})$ has at least $n-i-j-k$ red vertices and at least $n-i-j-k$ blue vertices. 
Since $G$ is complete, we can select $n-i-j-k$ vertex-disjoint bichromatic edges from the subgraph induced by $V \setminus V(\mc{K})$ 
and obtain an $(n-j-k, j, k)$-packing $\mc{K}'$ in $G$ by adding the selected edges to $\mc{K}$. 

We next transform $\mc{K}'$ into a perfect fair-triangle packing in $G$ as follows. 
First, we complete each $2$-path in $\mc{K}'$ into a triangle by adding the edge joining its endpoints in $G$. 
Note that $V \setminus V(\mc{K}')$ has $n-j-k$ vertices. 
We then pair each vertex of  $V \setminus V(\mc{K}')$ with a distinct 
edge component of $\mc{K}'$, and complete each pair into a triangle in $G$. This completes the transformation 
of $\mc{K}'$ into a perfect fair-triangle packing in $G$ with weight at least $w(\mc{K})$. 
\end{proof}

Recall that we have computed $\mc{Q}'_q(i, j, k)$ for each triple $(i,j,k)$ with $i+j+k \le n$. 
Among these packings, we select a feasible one of maximum weight and transform it into 
$\mc{T}_1$ as described in Lemma~\ref{lemma03}. The whole algorithm is summarized in Figure~\ref{fig05}.

\begin{figure}[htb]
\begin{center}
\framebox{
\begin{minipage}{5.5in}
{\em Algorithm for computing $\mc{T}_1$}:\\
Input: $G$ and $\mc{F}$;
\begin{itemize}
\parskip=0pt
\item[1.]
	Let $C_1, \ldots, C_q$ denote the connected components of $\mc{F}$.
	Fof each $1 \le h \le q$, let $C'_h$ be the subgraph of $G[X_1]$ induced by $V(C_h)$.
	For each $1 \le h \le q$ and for each triple $(i,j,k)$ satisfying $i+ j+k \le \frac{1}{2}|V(C_h)|$, 
         compute a maximum-weight $(i,j,k)$-packing $\mc{Q}_h(i, j, k)$ in $C'_h$. 
	({\em Comment:} This step takes $O(q)$ time.)
         
\item[2.]
         For each $1 \le h \le q$, let $\mc{F}'_h$ denote the union of $C'_1, \ldots, C'_h$.
	For each triple $(i,j,k)$ with $i+j+k \le n$, compute a maximum-weight $(i,j,k)$-packing $\mc{Q}'_q(i, j, k)$ in $\mc{F}'_q$ 
	if one exists, using dynamic programming. ({\em Comment:} This step takes $O(n^4)$ time.)

\item[3.]
	Among the packings computed in Step~2, select a feasible one of maximum weight. 
         
\item[4.]
	Transform the packing selected in Step~3  into a perfect fair-triangle packing in $G$, as described in Lemma~\ref{lemma03}. 
         
\item[5.] Return the packing obtained in Step~4 as $\mc{T}_1$.
\end{itemize}
\end{minipage}}
\end{center}
\caption{Dynamic-programming algorithm for computing $\mc{T}_1$.} \label{fig05}
\end{figure}

\begin{lemma}
\label{lemma04}
The algorithm in Figure~\ref{fig05} computes a perfect fair-triangle packing $\mc{T}_1$ in $O(n^4)$ time satisfying 
$w(\mc{T}_1) \ge w(X^*_1)$.
\end{lemma}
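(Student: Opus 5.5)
The proof has three parts: the running time, perfectness of the output, and the weight bound. The first two follow from the construction. The weight bound needs a witness: a feasible fair $(1,2)$-path packing in $\mc{F}'_q$ of weight at least $w(X^*_1)$.

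\emph{Running time and perfectness.} Step~1 takes $O(q)=O(n)$ time, since each component is short. Step~2 takes $O(n^4)$ time by the dynamic-programming claim above, using $q\le \frac 32 n$. Step~3 scans $O(n^3)$ triples. Step~4 takes $O(n)$ time by Lemma~\ref{lemma03}. The output is a perfect fair-triangle packing by Lemma~\ref{lemma03}, provided at least one feasible packing exists; the witness below supplies one. Moreover, $w(\mc{T}_1)$ is at least the weight of the packing selected in Step~3.

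\emph{Witness.} I take $\mc{K}^*=G[X^*_1]$ restricted to its nonisolated vertices. By Observation~\ref{obs:1}, $G[X^*]$ consists of $n$ vertex-disjoint $2$-paths, one inside each triangle of $\mc{T}^*$. Fix a triangle of $\mc{T}^*$. The part of $X^*_1$ inside it is either empty, a single bichromatic edge, or the whole $2$-path.
\begin{itemize}
\item If the triangle is red-dominant (two red vertices, one blue), its bichromatic $2$-path has the form red--blue--red, i.e.\ it is a red-dominant path.
\item Symmetrically, a blue-dominant triangle yields a blue-dominant path.
\end{itemize}
Every edge of $X^*_1$ is Type-1, so both endpoints lie in a common component $C_h$, and the edge lies in $C'_h\subseteq\mc{F}'_q$. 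Hence $\mc{K}^*$ is a fair $(1,2)$-path packing in $\mc{F}'_q$, say an $(i,j,k)$-packing.

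\emph{Feasibility.} Each component of $\mc{K}^*$ comes from a distinct triangle of $\mc{T}^*$, so $i+j+k\le n$. Each red-dominant path comes from a distinct red-dominant triangle, of which there are exactly $r-n$, so $j\le r-n$. Likewise $k\le 2n-r$. So $\mc{K}^*$ is feasible in the sense of Definition~\ref{def01}.

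\emph{Conclusion.} By optimality of $\mc{Q}'_q(i,j,k)$, we have $w(\mc{Q}'_q(i,j,k))\ge w(\mc{K}^*)=w(X^*_1)$. Step~3 selects a feasible packing of weight at least this. Lemma~\ref{lemma03} then gives $w(\mc{T}_1)\ge w(X^*_1)$.

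The main obstacle is the correctness of the dynamic program, i.e.\ that $\mc{Q}'_q(i,j,k)$ is really a maximum-weight $(i,j,k)$-packing in $\mc{F}'_q$. The key fact is that every edge of $G[X_1]$ lies inside a single $C'_h$. So any packing in $\mc{F}'_h$ splits uniquely into a packing in $C'_h$ and one in $\mc{F}'_{h-1}$, with the triples adding up. With this, induction on $h$ gives optimality.
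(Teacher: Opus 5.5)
Your proposal is correct and follows the paper's proof essentially step for step. You use the same witness $(V(X^*_1),X^*_1)$, the same per-triangle counting to get $i+j+k\le n$, $j\le r-n$, $k\le 2n-r$, and the same final appeal to Lemma~\ref{lemma03}. Your extra remarks make explicit two points the paper leaves to the discussion before the lemma: that a feasible packing exists, and that the dynamic program is exact because every edge of $G[X_1]$ lies in a single $C'_h$, so packings split additively.
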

\begin{proof}
The time complexity is clear from the discussion above. 
Let $\mc{K}$ denote the feasible packing found in Step~3 in Figure~\ref{fig05}.
By Lemma~\ref{lemma03}, $w(\mc{T}_1) \ge w(\mc{K})$.
So, it remains to show that $w(\mc{K}) \ge w(X^*_1)$.

Consider the subgraph $G^*_1=(V(X^*_1), X^*_1)$ of $G$. 
Clearly, $G^*_1$ is a subgraph of $G[X_1]$.
By Observation~\ref{obs:1}, each connected component of $G^*_1$ is a bichromatic edge, a red-dominant $2$-path, or a blue-dominant $2$-path. 
Moreover, since $\mc{T}^*$ has exactly $n$ triangles and each connected component of $G^*_1$ is contained in  
a distinct triangle of $\mc{T}^*$, 
$G^*_1$ is an $(i,j,k)$-packing for some triple $(i,j,k)$ with $i+j+k \le n$. 
Furthermore, since each red-dominant (respectively, blue-dominant) $2$-path in $G^*_1$ is contained in  
a distinct red-dominant (respectively, blue-dominant) triangle of $\mc{T}^*$,
we also have $j \le r-n$ and $k \le 2n-r$. 
Clearly, each edge in $X^*_1$ must belong to some $C'_h$.
In other words, $G^*_1$ is a subgraph of $\mc{F}'_q$. 
Therefore, by Definition~\ref{def01}, $G^*_1$ is feasible and $w(\mc{K}) \ge w(X^*_1)$.
\end{proof}

\subsection{Two key lemmas}

In this section, we prove two key lemmas showing how to construct a perfect fair-triangle packing from a specific collection of vertex-disjoint paths.

\begin{definition}
\label{def02}
An {\em alternating path} (respectively, {\em cycle}) in $G$ is a path (respectively, cycle) whose vertices alternate in color.
Moreover, an alternating path of odd (respectively, even) length is defined as an {\em odd} (respectively, {\em even}) alternating path. 
\end{definition}
By P1 in Remark~\ref{remark01}, every path or cycle component of $\mc{F}$ is alternating. 
In Figure~\ref{fig04}, every path component of $\mc{F}$ is an even alternating path. 

\begin{definition}
\label{def03}
A path in $G$ is {\em special} if it is obtained by extending an odd alternating path at its blue endpoint with a blue edge.
For example, in Figure~\ref{fig13}, deleting the dashed edge $e_4$ results in a graph containing a unique solid $2$-path, 
which is special.
\end{definition}

The following two lemmas provide the foundation for constructing a perfect fair-triangle packing from a collection of vertex-disjoint paths:
Lemma~\ref{lemma05} applies when all the paths are alternating, whereas Lemma~\ref{lemma06} applies when 
each path is an odd alternating path or a special path.

\begin{lemma}
\label{lemma05}
Suppose that $\mc{K}$ is a collection of vertex-disjoint alternating paths with $V(\mc{K}) = V$.
Then, in $O(n)$ time, we can construct a perfect fair-triangle packing with total weight at least $\frac 23 w(\mc{K})$.
\end{lemma}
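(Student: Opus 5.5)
The plan is to cut every path of $\mc{K}$ into vertex-disjoint bichromatic pieces (alternating $2$-paths and single edges) plus isolated vertices, keeping at least two thirds of the weight of $\mc{K}$. I will then complete the pieces into a perfect fair-triangle packing in the spirit of Lemma~\ref{lemma03}. The main obstacle is controlling the numbers of red-dominant and blue-dominant $2$-paths so that the completion exists.

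\textbf{Step 1 (cutting with weight $\ge \frac23 w(\mc{K})$).} For each path $P\in\mc{K}$ with edges $e_1,\ldots,e_p$, and each offset $s\in\{0,1,2\}$, delete the edges $e_t$ with $t\equiv s \pmod 3$.
\begin{itemize}
\item What remains consists of vertex-disjoint pieces with at most $3$ vertices: $2$-paths, single edges at the ends, and isolated vertices.
\item Since $P$ is alternating, every piece is bichromatic. Every $2$-path piece is red-dominant or blue-dominant, and along $P$ consecutive $2$-path pieces alternate in dominance.
\item The three deleted sets partition $E(P)$, so the cheapest one weighs at most $\frac13 w(P)$. Choosing it keeps at least $\frac23 w(P)$.
\end{itemize}
Doing this independently on every path gives a packing $\mc{K}_0$ in $G$ with $w(\mc{K}_0)\ge \frac23 w(\mc{K})$. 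It consists of $j$ red-dominant $2$-paths, $k$ blue-dominant $2$-paths, $i$ single edges, and some isolated vertices. All of this takes $O(n)$ time.

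\textbf{Step 2 (completion).} Suppose $j\le r-n$, $k\le 2n-r$ and $i+j+k\le n$. Then the counting argument in the proof of Lemma~\ref{lemma03} applies verbatim. That proof never used that the packing lies in $\mc{F}'_q$, only that $G$ is complete. It goes as follows:
\begin{itemize}
\item add bichromatic edges among the uncovered vertices until there are $n-j-k$ edge pieces;
\item close each $2$-path into a triangle by adding the edge joining its endpoints;
\item attach each remaining vertex to a distinct edge piece.
\end{itemize}
This yields a perfect fair-triangle packing of weight at least $w(\mc{K}_0)$. The bound $i+j+k\le n$ is automatic: each $2$-path piece has $3$ vertices and each edge piece ends a path, so a counting argument shows there are at most $n$ pieces. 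I would double-check this bound, and if needed merge an edge piece with an adjacent isolated vertex.

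\textbf{Step 3 (the main obstacle: balancing dominance).} The delicate part is guaranteeing $j\le r-n$ and $k\le 2n-r$. I would exploit that $V(\mc{K})=V$.
\begin{itemize}
\item Every vertex lies on some path, and $r-n$ and $2n-r$ are determined by the color counts, since $r=2(r-n)+(2n-r)$ and $3n-r=(r-n)+2(2n-r)$.
\item Express $j-(r-n)$ as a sum over paths of local excesses. For a path $P$ this compares the $2$-path pieces in $P$ with the imbalance (number of red minus number of blue vertices) of $P$, which lies in $\{-1,0,1\}$ and is determined by its endpoint colors.
\item Show that each path's local excess is at most $1$, and that an excess can occur only when the chosen offset leaves a $2$-path piece at a red end where an isolated red vertex would otherwise lie.
\item In that case, repair locally by deleting one edge of such a red-dominant $2$-path piece, turning it into an edge plus an isolated vertex.
\end{itemize}
The cost of these repairs must be charged so that the total kept weight stays at least $\frac23 w(\mc{K})$. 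My first attempt is to re-run the offset averaging of Step 1 with the repair built into each offset choice. For a path of length $p$, the three offsets still delete disjoint edge sets, with the repair edge being an end edge already counted. Hence the minimum deletion, repair included, is still at most $\frac13 w(P)$.

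If that per-path charging fails for short paths (lengths $1$ to $4$), I would handle those by direct case analysis. For such paths, keeping the heaviest edge, or the heavier of two $2$-subpaths, already keeps at least $\frac23$ of the weight when its dominance is the one needed.
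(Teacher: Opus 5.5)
There is a genuine gap. Your weight bound comes from choosing the cheapest offset \emph{separately on each path}. The conditions your completion step needs, namely $i+j+k\le n$, $j\le r-n$ and $k\le 2n-r$, are global, and independent per-path choices can violate all of them.

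\begin{itemize}
\item Counting vertices, $i+j+k\le n$ is equivalent to having at least as many isolated vertices as edge pieces, and this is not automatic. If $r=\frac32 n$ and $\mc{K}$ is a perfect bichromatic matching, then an offset deleting nothing is cheapest on every $1$-path. This gives $i=\frac32 n>n$ edge pieces and no isolated vertex to merge with.
\item Dominance fails as well. Take $n=4$, $r=6$, and let $\mc{K}$ be three copies of an alternating path red--blue--red--blue with edges $e_1,e_2,e_3$.
\begin{itemize}
\item If $e_3$ is lightest in each copy, your rule picks offset $0$ everywhere. This leaves three red-dominant $2$-paths and three isolated blue vertices, so $j=3>r-n=2$. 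This is the same obstruction as in Section~\ref{sec1.2}.
\item Offset $1$ everywhere gives $k=3>2n-r$.
\item Offset $2$ everywhere gives $i=6>n$.
\end{itemize}
\end{itemize}

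Your Step~3 does not close this gap.
\begin{itemize}
\item Whether a path needs repair depends on the choices made on the other paths. In the example, a mixture such as offsets $0,1,1$ is feasible, while every uniform choice is not, so no per-path rule can decide it.
\item The repaired deletion sets no longer partition $E(P)$. Repaired offset $0$ deletes $e_3$ and $e_1$, which overlaps offset $1$. With unit weights it costs $2$ out of $w(P)=3$, so your claim that the repair edge is ``already counted'' fails, and the $\frac13 w(P)$ averaging breaks.
\end{itemize}

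The missing idea is global coordination. The paper joins all paths of $\mc{K}$ through their endpoints into one Hamiltonian cycle $C$, using that $G$ is complete. Since every path has length at least $1$, no two joining edges are consecutive on $C$, so any three consecutive vertices of $C$ contain both colors. The paper then deletes the lightest of the three classes $\{e_t : t\equiv s \pmod 3\}$ of the $3n$ edges of $C$. This leaves exactly $n$ vertex-disjoint $2$-paths, which close into fair triangles of total weight at least $\frac23 w(C)\ge\frac23 w(\mc{K})$.

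Restricted to each path, this is one of your offset cuts. The difference is that the offsets are linked by a single global phase. As a result, edge pieces are absorbed by endpoints of neighbouring paths, and the dominance counts come out right automatically, since any perfect fair-triangle packing has exactly $r-n$ red-dominant triangles. Averaging over the three global phases, rather than minimizing per path, is what yields the $\frac23$ bound.
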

\begin{proof}
First, we form a single cycle $C$ from $\mc{K}$ by adding edges between the endpoints of distinct paths in $\mc{K}$.
This is possible because $G$ is complete.
Clearly, $w(C) \ge w(\mc{K})$.
We claim that for any three consecutive vertices $v_1, v_2, v_3$ on $C$, at least one is red and at least one is blue. 
To see the claim, first note that since every path has length at least~1, the construction of $C$ guarantees that
 $\{v_1,v_2\}$ or $\{v_2,v_3\}$ is an edge of some $P \in \mc{K}$. 
Since $P$ is alternating, all its edges are bichromatic. 
So, the claim holds. 

We next show how to constrcut a perfect fair-triangle packing from $C$.
Since $V(\mc{K}) = V$, $|E(C)| = 3n$. 
Let $e_1, \ldots, e_{3n}$ be the edges of $C$ in order. 
Clearly, $E(C)$ can be partitioned into three disjoint matchings $\{ e_i, e_{i+3}, \ldots, e_{i+3(n-1)} \}$ for $i = 1, 2, 3$. 
Among these three matchings, we select one of minimum weight and delete its edges from $C$. 
Thus, $C$ loses at most one-third of its weight and becomes a collection of vertex-disjoint $2$-paths. 
We then complete each $2$-path in $C$ into a triangle by adding the edge joining its endpoints in $G$. 
Each resulting triangle is fair by the claim proved above. 
This completes the proof.
\end{proof}

\begin{lemma}
\label{lemma06}
Suppose that $\mc{K}$ is a collection of vertex-disjoint paths with $V(\mc{K}) = V$,
each of which is either an odd alternating path or a special path.
Then, in $O(n)$ time, we can construct a perfect fair-triangle packing with total weight at least $\frac 23 w(\mc{K})$.
\end{lemma}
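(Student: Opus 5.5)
The plan is to follow the proof of Lemma~\ref{lemma05}: join the paths of $\mc{K}$ into a single Hamiltonian cycle $C$ of $G$, cut $C$ into $2$-paths by deleting a lightest one of three matchings, and close each $2$-path into a triangle. In Lemma~\ref{lemma05} fairness came for free because every edge of every path was bichromatic. Here the special paths contain blue edges, so the paths must be concatenated in a controlled way. The property to secure is that \emph{no two consecutive edges of $C$ are both non-bichromatic}. Then every $2$-path cut from $C$ contains a bichromatic edge, and by the definition in Section~\ref{sec2} the triangle it spans is fair.

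First, orient each path. An odd alternating path has one red and one blue endpoint. A special path comes from an odd alternating path (red end $x$, blue end $y$) by appending a blue edge $\{y,z\}$ at $y$. Its endpoints are therefore $x$ (red) and $z$ (blue), and its only non-bichromatic edge is the last edge $\{y,z\}$. The edge preceding $\{y,z\}$ is the bichromatic edge of the odd alternating part ending at $y$, which has length at least $1$. So every $P\in\mc{K}$ has a red endpoint $a_P$ and a blue endpoint $b_P$, and contains at most one blue edge, which is incident to $b_P$ and preceded by a bichromatic edge. Write $\mc{K}=\{P_1,\dots,P_m\}$ and form $C$ by adding the connecting edges $\{b_{P_t},a_{P_{t+1}}\}$ for $t=1,\dots,m$, with indices modulo $m$; these exist because $G$ is complete. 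If $m=1$, the single added edge is $\{b_{P_1},a_{P_1}\}$. This is a genuine edge of $G$ because $P_1$ has at least two vertices, so its endpoints are distinct; and if $P_1$ is a $1$-path, $C$ is a $2$-cycle, which cannot occur since $|V(C)|=3n\ge 3$. Every connecting edge joins a blue vertex to a red one, hence is bichromatic. Since $V(\mc{K})=V$, the result is a Hamiltonian cycle with $|E(C)|=3n$ and $w(C)\ge w(\mc{K})$.

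Next, check the key property: any two consecutive edges of $C$ include a bichromatic one. The only non-bichromatic edges of $C$ are the final blue edges $\{y,z\}$ of special paths. Each of these is preceded on $C$ by a bichromatic edge inside its own path and followed by a bichromatic connecting edge at $z=b_P$. So no two of them are adjacent on $C$.

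Finally, proceed exactly as in Lemma~\ref{lemma05}. Number the edges of $C$ as $e_1,\dots,e_{3n}$ and split them into the three matchings $\{e_i,e_{i+3},\dots,e_{i+3(n-1)}\}$ for $i=1,2,3$. Delete a matching of minimum weight, which loses at most $\frac13 w(C)$ and leaves $n$ vertex-disjoint $2$-paths covering $V$. Close each $2$-path by the edge joining its endpoints. Each $2$-path consists of two consecutive edges of $C$, so by the property above it contains a bichromatic edge, and the resulting triangle is fair. The outcome is a perfect fair-triangle packing of weight at least $\frac23 w(C)\ge\frac23 w(\mc{K})$, and every step takes $O(n)$ time.

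The only real obstacle is the orientation step, that is, making sure a blue edge never sits next to a blue–blue connection. Joining each blue endpoint to a red endpoint removes this issue. This works precisely because both path types have exactly one red and one blue endpoint.
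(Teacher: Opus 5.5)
Your proposal is correct and follows the paper's proof. The paper likewise notes that each path in $\mc{K}$ has endpoints of different colors, links the paths into a single cycle $C$ with bichromatic connecting edges, observes that every two consecutive edges of $C$ include a bichromatic one, and finishes exactly as in Lemma~\ref{lemma05}; your blue-to-red orientation and your handling of the single-path case $m=1$ spell out details the paper leaves implicit.
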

\begin{proof}
By Definitions~\ref{def02} and~\ref{def03}, each path in $\mc{K}$ has endpoints of different colors. 
First, we form a single cycle $C$ from $\mc{K}$ by adding bichromatic edges between the endpoints of distinct paths in $\mc{K}$.
This is possible because $G$ is complete. 
Clearly, $w(C) \ge w(\mc{K})$.

The construction of $C$ alongside Definitions~\ref{def02} and~\ref{def03} implies that
every pair of consecutive edges on $C$ contains at least one bichromatic edge---which either connects distinct paths in $\mc{K}$ or lies on some path in $\mc{K}$. 
So, for any three consecutive vertices on $C$, at least one is red and at least one is blue. 
Now, we obtain a perfect fair-triangle packing from $C$ as in the the proof of Lemma~\ref{lemma05}. 
\end{proof}

\subsection{Randomly breaking cycles in $\mc{F}$}

Hereafter, {\em breaking a cycle} means deleting one or more edges from the cycle. 

Recall that $\mc{C}$ and $\mc{P}$ denote the sets of cycles and paths in $\mc{F}$, respectively (see P2 in Remark~\ref{remark01}). 
We will introduce additional notations throughout the remainder of this paper. 
Table~\ref{tab01} summarizes the most important notations for ease of reference. 

By P1 in Remark~\ref{remark01}, we can transform $\mc{F}$ into a collection $\mc{K}$ of vertex-disjoint alternating paths 
with $V(\mc{K}) = V$ by deleting an arbitrary edge from each cycle in $\mc{C}$. 
We can then use Lemma~\ref{lemma05} to construct a perfect fair-triangle packing $\mc{Q}$ from $\mc{K}$ 
satisfying $w(\mc{Q}) \ge \frac 23 w(\mc{K})$. 
However, since $\mc{F}$ may contain many $4$-cycles, 
it is possible that $w(\mc{K}) \le \frac 34 w(\mc{F})$, in which case we may obtain $w(\mc{Q}) \le \frac 12 w(\mc{F})$. 

Since simply deleting one arbitrary edge from each cycle in $\mc{C}$ is insufficient, 
we next present a randomized method that transforms $\mc{C}$ into a collection $\mc{C}_1$ of vertex-disjoint odd alternating paths. 
Crucially, we guarantee that $\mathbb{E}[w(\mc{C}_1)] \ge \frac 34 w(\mc{C})$ and that $\Pr[d_{\mc{C}_1}(v) = 1] = \frac 12$ for every $v \in V(\mc{C})$. 
Moreover, if an edge $\{ u, v \} \in X$ of Type~$2$, satisfies $d_{\mc{C}_1}(u)=d_{\mc{C}_1}(v)=1$, 
then it could be added to $\mc{C}_1$ while preserving the property that $\mc{C}_1$ is a collection of vertex-disjoint odd alternating paths. 
For example, if $\mc{F}$ is as shown in Figure~\ref{fig04} and we break its cycles as shown in Figure~\ref{fig06}, 
then the edge $e_2$ can be added back to $\mc{C}_1$, 
after which $\mc{C}_1$ consists of a single alternating $7$-path.

\begin{figure}[thb]
\begin{center}
\begin{tikzpicture}[scale=0.45,transform shape]
         
\foreach \x/\y in { 0/0, 0/3.5 } {
         \draw[thick, line width = 1pt] (\x, \y) -- (\x, \y-1.5);
         \draw[thick, line width = 1pt] (\x+1.5, \y) -- (\x+1.5, \y-1.5);
         \fill[red] (\x, \y) circle(.15);
         \fill[blue] (\x+1.5, \y) circle(.15);
         \fill[red] (\x+1.5, \y-1.5) circle(.15);
         \fill[blue] (\x, \y-1.5) circle(.15);
}

\draw[thick, line width = 1pt] (0, -1.5) -- (1.5, -1.5);
\draw[thick, line width = 1pt] (0, 3.5) -- (1.5, 3.5);

\foreach \x/\y in { 6/3, 6/1, 6/-1 } {
         \draw[thick, line width = 1pt] (\x, \y) -- (\x+1.5, \y);
         \draw[thick, line width = 1pt] (\x+1.5, \y) -- (\x+3, \y);
         \fill[blue] (\x, \y) circle(.15);
         \fill[red] (\x+1.5, \y) circle(.15);
         \fill[blue] (\x+3, \y) circle(.15);
}

\draw[thick, line width = 1pt] (4.5, -1) -- (6, -1);
\fill[red] (4.5, -1) circle(.15);
\draw[thick, line width = 1pt] (9, -1) -- (10.5, -1);
\fill[red] (10.5, -1) circle(.15);

\draw[densely dashed, line width = 1pt] (0, 0) -- (0, 2);

\node[font=\fontsize{24}{6}\selectfont] at (-0.7, 1) {$e_2$};
\end{tikzpicture}
\end{center}
\caption{Breaking the cycles in $\mc{F}$ shown in Figure~\ref{fig04}. 
Afterward, the edge $e_2$ can be used to connect two resulting alternating $3$-paths and form an alternating $7$-path.
\label{fig06}}
\end{figure}

We process the cycles $C$ in $\mc{C}$ independently at random as follows. 
Let $c = |E(C)|$. 
We select an edge $e_1 \in E(C)$ uniformly at random, and then 
let $e_1, \ldots, e_c$ denote the edges of $C$ in the order in which they appear along $C$.
By P1 in Remark~\ref{remark01}, $c$ is even. So, either $c \equiv 0 \pmod{4}$ or $c \equiv 2 \pmod{4}$.
In the former case, we delete $e_1, e_5, \ldots, e_{c-3}$ from $C$.
In the latter case, we delete $e_1, e_5, \ldots, e_{c-5}$ from $C$, and then delete $e_{c-1}$ with probability $\frac 12$.
It is easy to verify that these deletions always break $C$ into a set of vertex-disjoint odd alternating paths.
The whole process is summarized in Figure~\ref{fig07}.

\begin{figure}[htb]
\begin{center}
\framebox{
\begin{minipage}{5.5in}
{\em Algorithm for breaking a cycle in $\mc{C}$}:\\
Input: A cycle $C$ in $\mc{C}$;
\begin{itemize}
\parskip=0pt
\item[1.]
	Select an edge $e_1 \in E(C)$ uniformly at random.
\item[2.]
 	Let $c=|E(C)|$, and $e_1, \ldots, e_c$ be the edges of $C$ in the order in which they appear along $C$.     
\item[3.]
	If $c \equiv 0 \pmod{4}$, then delete $e_1, e_5, \ldots, e_{c-3}$ from $C$.
\item[4.]
	If $c \equiv 2 \pmod{4}$, then delete $e_1, e_5, \ldots, e_{c-5}$ from $C$, and then 
	delete $e_{c-1}$ with probability $\frac 12$.
\item[5.]
         Return the modified graph $C$, which is a collection of vertex-disjoint odd alternating paths.
\end{itemize}
\end{minipage}}
\end{center}
\caption{Breaking a cycle in $\mc{C}$ at random.} \label{fig07}
\end{figure}

Let $\mc{C}_1$ be the graph obtained from $\mc{C}$ by breaking the cycles in $\mc{C}$ as described above.
As noted above, each connected component of $\mc{C}_1$ is an odd alternating path.

\begin{lemma}
\label{lemma07}
Let $C$ be a cycle in $\mc{C}$.
For each  $e \in E(C)$, $\Pr[e \in E(\mc{C}_1)] = \frac 34$.
Moreover, for each  $u \in V(C)$, $\Pr[d_{\mc{C}_1}(u) = 1] = \frac 12$.
\end{lemma}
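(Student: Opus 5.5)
Fix a cycle $C\in\mc{C}$ with $c=|E(C)|$ edges. By P1 in Remark~\ref{remark01}, $c$ is even, and since $\mc{F}$ is simple, $c\ge 4$. I will exploit the rotational symmetry created by Step~1 of Figure~\ref{fig07}. Label the edges of $C$ cyclically as $f_0,\ldots,f_{c-1}$, and let the uniformly random choice $e_1=f_s$ determine the rest, so that $e_t=f_{s+t-1}$ with indices taken mod $c$. Since $s$ is uniform, a fixed edge $f_a$ plays the role of $e_t$ with probability exactly $\frac 1c$ for every $t\in\{1,\ldots,c\}$. Hence
\[
\Pr[f_a\notin E(\mc{C}_1)] = \frac 1c\sum_{t=1}^{c}\Pr[e_t\text{ is deleted}].
\]
The right-hand side is just $\frac 1c$ times the expected number of deleted edges. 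The same trick handles vertices: let $u_t$ be the common vertex of $e_t$ and $e_{t+1}$ (with $e_{c+1}=e_1$). A fixed vertex is $u_t$ with probability $\frac 1c$ for each $t$, so $\Pr[d_{\mc{C}_1}(u)=1]$ equals $\frac 1c$ times the expected number of positions $t$ for which exactly one of $e_t,e_{t+1}$ is deleted.

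\emph{Case $c\equiv 0\pmod 4$.} Exactly $c/4$ edges are deleted, so each edge survives with probability $1-\frac14=\frac34$. The deleted edges $e_1,e_5,\ldots,e_{c-3}$ are pairwise non-adjacent, including cyclically, because $e_{c-3}$ and $e_1$ are separated by $e_{c-2},e_{c-1},e_c$. Therefore each deleted edge creates exactly two vertices of degree~$1$. This gives $c/2$ degree-$1$ vertices among $c$ vertices, so the probability is $\frac12$.

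\emph{Case $c\equiv 2\pmod 4$.} The deterministic deletions number $(c-2)/4$, and $e_{c-1}$ is deleted with probability $\frac12$. The expected number of deleted edges is therefore $\frac{c-2}{4}+\frac12=\frac c4$, which again gives survival probability $\frac34$. For vertices, I will check that all deleted edges are pairwise non-adjacent. The gap between $e_{c-5}$ and $e_{c-1}$ contains $e_{c-4},e_{c-3},e_{c-2}$. The gap between $e_{c-1}$ and $e_1$ contains $e_c$. When $c=6$, the deterministic set is just $\{e_1\}$, and the same check holds. So the number of degree-$1$ vertices equals twice the number of deleted edges, whose expectation is $\frac c2$, and the probability is $\frac12$.

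The main point requiring care is the symmetry argument. Conditioning on the uniform start $e_1$ is exactly what turns a fixed edge's or vertex's probability into an average over all positions. This averaging remains valid even with the extra coin flip, because the coin is independent of $s$ and linearity of expectation handles the mixture. The remaining work is the routine check that deleted edges never share a vertex, so that the degree-$1$ count is exactly twice the number of deletions. Small values of $c$ should be verified explicitly: $c=4$ deletes only $e_1$; $c=6$ deletes $e_1$ and possibly $e_5$, which are non-adjacent since $e_6$ lies between them.
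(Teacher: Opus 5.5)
Your proof is correct and matches the paper's argument for the edge claim. In both residue classes the expected number of deletions is $\frac c4$, and the uniform choice of $e_1$ makes every edge equally likely to be deleted; your explicit fixed orientation $e_t=f_{s+t-1}$ pins down the symmetry the paper uses tacitly. For the vertex claim, the paper does not run a second symmetry argument: the two edges of $C$ at $u$ are never both deleted, so $\Pr[d_{\mc{C}_1}(u)=1]=\frac14+\frac14$, which rests on the same non-adjacency of deleted edges that you check.
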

\begin{proof}
Let $c=|E(C)|$ and $c'$ be the number of edges deleted from $C$ in Steps~3 and~4 of Figure~\ref{fig07}.
If $c \equiv 0 \pmod{4}$, then $c' = \frac c4$; otherwise, either 
$c' = \frac {c-2}4$ or $c'=\frac {c+2}4$, and either case occurs with probability $\frac 12$.
Therefore, $\mathbb{E}[c'] = \frac c4$. 
By Step~1 in Figure~\ref{fig07}, each edge $e \in E({C})$ is equally likely to be deleted, and hence $\Pr[e \notin E(\mc{C}_1)] = \frac 14$.
Thus, the first assertion in the lemma holds.

Consider a $u\in V(C)$. 
Let $e_1$ and $e_2$ be the two edges incident to $u$ in $C$.
By the steps in Figure~\ref{fig07}, $|\{e_1,e_2\} \cap E(\mc{C}_1)| \le 1$.
Hence, $d_{\mc{C}_1}(u) = 1$ if $e_1\notin E(\mc{C}_1)$ or $e_2\notin E(\mc{C}_1)$. 
Now, since $\Pr[e_i \notin E(\mc{C}_1)] = \frac 14$ for each $i \in \{ 1, 2 \}$, 
the second assertion in the lemma holds.
\end{proof}

For each pair $(u,t)$ of vertices in the same cycle in $\mc{C}$, 
let $A_1(u, t)$ denote the event that $u$ and $t$ are the endpoints of the same path in $\mc{C}_1$, 
and $A_2(u, t)$ denote the event that $u$ and $t$ are the endpoints of two different paths in $\mc{C}_1$.

\begin{lemma}
\label{lemma08}
{\em \cite{TC10}}
For each pair $(u,t)$ of vertices in the same cycle $C$ in $\mc{C}$, 
we have $\frac 12 \Pr[A_1(u,t)] + \frac 14 \Pr[A_2(u,t)] \le \frac 18$.
\end{lemma}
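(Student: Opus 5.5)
Since $A_1(u,t)$ and $A_2(u,t)$ are disjoint and their union is the event that both $u$ and $t$ are endpoints of paths in $\mc{C}_1$, i.e., $d_{\mc{C}_1}(u)=d_{\mc{C}_1}(t)=1$, I would first rewrite the target quantity as
\[
\tfrac 12 \Pr[A_1] + \tfrac 14 \Pr[A_2] = \tfrac 14\bigl(\Pr[d_{\mc{C}_1}(u)=d_{\mc{C}_1}(t)=1] + \Pr[A_1]\bigr).
\]
By Lemma~\ref{lemma07}, $\Pr[d_{\mc{C}_1}(u)=1]=\frac 12$, and this probability splits as $\Pr[d_{\mc{C}_1}(u)=d_{\mc{C}_1}(t)=1]+\Pr[d_{\mc{C}_1}(u)=1,\ d_{\mc{C}_1}(t)=2]$. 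So the lemma is equivalent to
\[
\Pr[A_1(u,t)] \le \Pr[d_{\mc{C}_1}(u)=1 \text{ and } d_{\mc{C}_1}(t)=2].
\]
In words: whenever $u$ and $t$ are the two ends of one path, this must be compensated by outcomes in which $u$ is an end but $t$ is an interior vertex.

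To prove this inequality, I would work with the explicit probability space of Figure~\ref{fig07}. There are $c$ equally likely choices of the starting edge $e_1$, and when $c\equiv 2 \pmod 4$ there is an additional fair coin. The deleted edges form a periodic pattern with period~$4$. Every resulting path therefore has length~$3$, except that when $c\equiv 2\pmod 4$ exactly one path has length~$1$ or~$5$ depending on the coin.

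Let $d$ and $c-d$ be the two distances from $u$ to $t$ along $C$. Event $A_1$ can occur only if one of these distances is a path length, namely $3$, or $1$ or $5$ in the case $c\equiv 2\pmod 4$. For $c\equiv 0\pmod 4$ the analysis depends only on $d \bmod 4$:
\begin{itemize}
\item $u$ is an endpoint for exactly $2$ of the $4$ phases;
\item when $d\equiv 3$ or $c-d\equiv 3$, the event $A_1$ occurs in at most one of those two phases;
\item in the other phase, $t$ lies at an odd distance from a deleted edge and is interior.
\end{itemize}
This gives the required injection of $A_1$-outcomes into (endpoint $u$, interior $t$)-outcomes. For other residues of $d$, the event $A_1$ is simply impossible.

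The main obstacle is the case $c\equiv 2 \pmod 4$. Here the period-$4$ pattern has a defect: near $e_{c-1}$ there are either two consecutive gaps of length $1$ and $5$ combined, or one deletion splitting them. So $A_1$ can also occur with $u,t$ at distance $1$ (a $1$-path, which needs the coin) or at distance $5$. For this case I would enumerate the $2c$ outcomes (rotation $\times$ coin) relative to $u$, and count directly:
\begin{itemize}
\item the outcomes making $u$ an endpoint, of which there are exactly $c$ by Lemma~\ref{lemma07};
\item among those, the outcomes in which $t$ is the far end of $u$'s path.
\end{itemize}
I then need to check that at most half of the $c$ outcomes making $u$ an endpoint put $t$ at the far end of $u$'s path. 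The argument would use that $u$'s path has only one far end, and that for any fixed $d$ only a bounded number of rotations place the defect appropriately. I expect a short table over the position of $u$ relative to the defect to settle it. Alternatively, I would cite \cite{TC10}, where this exact bound is established for the same breaking procedure.
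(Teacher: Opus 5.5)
Your reduction is correct, and it is a cleaner organization than the paper's direct computation of $\Pr[A_1]$ and $\Pr[A_2]$. No two deleted edges are adjacent, so every vertex of $\mc{C}_1$ has degree $1$ or $2$. Hence $A_1(u,t)$ and $A_2(u,t)$ partition the event $d_{\mc{C}_1}(u)=d_{\mc{C}_1}(t)=1$, and with Lemma~\ref{lemma07} the claim is equivalent to $\Pr[A_1(u,t)]\le\Pr[d_{\mc{C}_1}(u)=1,\ d_{\mc{C}_1}(t)=2]$. For $c\equiv 2\pmod 4$, count over the $2c$ equally likely outcomes (choice of $e_1$ and coin), and let $a_1,a_2$ be the numbers of outcomes realizing $A_1,A_2$; the claim then reads $2a_1+a_2\le c$. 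Your phase argument for $c\equiv 0\pmod 4$ is sound and recovers the paper's Cases~1 and~3.

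However, the case $c\equiv 2\pmod 4$ is where the actual work of the lemma lies: it is the paper's Case~2 (the enumeration in Figure~\ref{fig08}) and Cases~4.1--4.3. You leave it as ``I expect a short table to settle it'' or a citation of \cite{TC10}, which is not a proof. This case also contains the only tight instance: for $c=6$ and $t$ adjacent to $u$, $a_1=a_2=2$ and $\frac12\Pr[A_1]+\frac14\Pr[A_2]=\frac18$ exactly, so there is no slack.

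Moreover, the criterion you say you would verify there is not the one your reduction requires. ``At most half of the $c$ outcomes making $u$ an endpoint put $t$ at the far end'' is $a_1\le c/2$. What you need is $a_1\le c-a_1-a_2$, i.e., no more $A_1$-outcomes than outcomes with $u$ an endpoint and $t$ interior. The difference matters. For $\ell=1$ and $c\ge 10$, $A_2$ occurs exactly when $\{u,t\}$ itself is deleted, so $a_2=c/2$; likewise $a_2=c/2$ for $\ell=5$, $c\ge 14$. In those cases $a_1\le c/2$ proves nothing.

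Your heuristic that ``only a bounded number of rotations place the defect appropriately'' also fails for $\ell=3$. There $A_1$ comes mostly from the regular $3$-paths and occurs in $(c-4)/2$ outcomes, so you must show at least that many outcomes with $u$ an endpoint and $t$ interior (there are $(c+2)/2$).

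To close the gap, carry out the count explicitly, with $\ell$ the shorter distance from $u$ to $t$; only $\ell\in\{1,3,5\}$ can give $A_1$.
\begin{itemize}
\item For $c\ge 10$: $(a_1,a_2)=(1,c/2)$ for $\ell=1$, and $((c-4)/2,1)$ for $\ell=3$.
\item For $\ell=5$: $(1,c/2)$ when $c\ge 14$. For $c=10$ both arcs between $u$ and $t$ have length $5$, so $A_1$ can arise from either arc, giving $(a_1,a_2)=(2,4)$.
\item For $c=6$: $(2,2)$ for $\ell=1$ and $(2,0)$ for $\ell=3$.
\end{itemize}
Each of these satisfies $2a_1+a_2\le c$, which completes your reduction.
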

\begin{proof}
When $A_2(u, t)$ occurs, $d_{\mc{C}_1}(u) = 1$. 
So, $\Pr[A_2(u,t)] \le \frac 12$ by Lemma~\ref{lemma07}.
Thus, it suffices to consider those pairs $(u,t)$ with $\Pr[A_1(u,t)] > 0$. 
Fix such a pair $(u,t)$. 
Then, by Step~5 in Figure~\ref{fig07}, $u$ and $t$ have different colors.

Since $C$ is a cycle of even length $c \ge 4$, $C$ contains two internally vertex-disjoint paths between $u$ and $t$. 
Let $\ell$ be the length of the shorter path.
Since $u$ and $t$ have different colors, $\ell$ is odd.
We distinguish four cases as follows. 

{\em Case 1:} $c = 4$. 
In this case, the algorithm in Figure~\ref{fig07} deletes a single edge from $C$. 
So, $\Pr[A_2(u,t)] = 0$. 
Moreover, $A_1(u,t)$ occurs only if $\ell=1$ and the edge $\{ u, t \}$ is deleted from $C$ by the algorithm.
Therefore, $\Pr[A_1(u,t)] = \frac 14$ and we are done.

{\em Case 2:} $c = 6$. 
In this case, $\ell \in \{ 1, 2, 3 \}$. 
Since $\ell$ is odd, we have $\ell \in \{ 1, 3 \}$. 
In Figure~\ref{fig08}, we enumerate all possible cases when $d_{\mc{C}_1}(u)=d_{\mc{C}_1}(t)=1$. 
If $\ell=1$, then $A_1(u,t)$ occurs when $C$ is broken as shown in the frist two graphs in Figure~\ref{fig08}, 
whereas $A_2(u,t)$ occurs when $C$ is broken as shown in the third and fourth graphs in Figure~\ref{fig08}; 
thus, $\Pr[A_1(u,t)]=\Pr[A_2(u,t)]=\frac 16$. 
Moreover, if $\ell = 3$, then $A_1(u,t)$ occurs when $C$ is broken as shown in the last two graphs in Figure~\ref{fig08}, 
whereas $A_2(u, t)$ never occurs; thus, $\Pr[A_1(u, t)]=\frac 16$ and $\Pr[A_2(u, t)]=0$. 
So, we always have $\frac 12 \Pr[A_1(u,t)] + \frac 14 \Pr[A_2(u,t)] \le \frac 18$.

{\em Case 3:} $c \ge 8$ and $c \equiv 0 \pmod{4}$. 
In this case, Step~3 in Figure~\ref{fig07} transforms $C$ into a collection of vertex-disjoint $3$-paths. 
Since $\Pr[A_1(u,t)] > 0$, we have $\ell = 3$.
Consequently, $A_2(u,t)$ never occurs, i.e., $\Pr[A_2(u,t)]=0$. 
So, it suffices to show that $\Pr[A_1(u,t)]=\frac 14$. 
To this end, let $e'_1$ denote the edge incident to $u$ on the $3$-path between $u$ and $t$ in $C$, 
and $e'_1, \ldots, e'_c$ denote the edges of $C$ in the order in which they appear along $C$.
Clearly, $A_1(u,t)$ occurs only if one of $e'_4, e'_8, \ldots, e'_c$ is selected as $e_1$ in Step~1 in Figure~\ref{fig07}.
Hence, $\Pr[A_1(u,t)]=\frac 14$.

{\em Case 4:} $c \ge 8$ and $c \equiv 2 \pmod{4}$. 
In this case, Step~4 in Figure~\ref{fig07} transforms $C$ into a collection of vertex-disjoint paths, each of length~$1$, $3$, or~$5$. 
Since $\Pr[A_1(u,t)] > 0$, it suffices to consider the cases where $\ell =1$, $3$, or $5$. 
To this end, we define $e'_1$, \ldots, $e'_c$ as in Case~3.

{\em Case 4.1:} $\ell = 1$. 
In this case, $e'_1 = \{ u, t \}$. 
Moreover, $A_1(u, t)$ occurs exactly when $e'_2$ is selected as $e_1$ in Step~1 and $e'_c = e_{c-1}$ is deleted from $C$ in Step~4 of Figure~\ref{fig07}. 
So, $\Pr[A_1(u,t)] = \frac 1{2c}$. 
Furthermore, $A_2(u,t)$ occurs exactly when $\{u,t\}$ is deleted from $C$ in Step~4, and hence $\Pr[A_2(u,t)]=\frac 14$ by Lemma~\ref{lemma07}. 
Therefore, $\frac 12 \Pr[A_1(u,t)] + \frac 14 \Pr[A_2(u,t)] \le \frac 18$.

{\em Case 4.2:} $\ell = 3$. 
In this case, $A_1(u,t)$ occurs exactly when the edge $e_1$ selected in Step~1 satisfies one of the following conditions:
(i) $e'_c \in \{e_1, e_5, \ldots, e_{c-9}\}$; or (ii) $e'_c = e_{c-5}$ and $e'_4 = e_{c-1}$ is deleted from $C$ in Step~4.
So, $\Pr[A_1(u,t)] = \frac {c-6}{4c}+\frac 1{2c} = \frac {c-4}{4c}$. 
Moreover, $A_2(u, t)$ occurs exactly when the edge $e_1$ selected in Step~1 is $e'_3$ and 
$e_{c-1}=e'_1$ is deleted from $C$ in Step~4. 
Thus, $\Pr[A_2(u,t)] = \frac 1{2c}$.
Therefore, $\frac 12 \Pr[A_1(u,t)] + \frac 14 \Pr[A_2(u,t)] \le \frac 18$.

{\em Case 4.3:} $\ell = 5$. In this case, 
$A_1(u,t)$ occurs exactly when the edge $e_1$ selected in Step~1 is $e'_6$ and $e'_4 = e_{c-1}$ is not deleted from $C$ in Step~4.
So, $\Pr[A_1(u,t)] = \frac 1{2c}$. 
Moreover, $A_2(u,t)$ occurs exactly when the edge $e_1$ selected in Step~1 
satisfies one of the following conditions:
\begin{itemize}
\item $e_1 = e'_2$, and $e_{c-1} = e'_c$ is deleted from $C$ in Step~4. 
\item $e_1 = e'_6$, and $e_{c-1} = e'_4$ is deleted from $C$ in Step~4. 
\item $e'_1 \in \{e_1, e_5, \ldots, e_{c-9}\}$. 
\item $e_{c-5} = e'_1$, and $e_{c-1} = e'_5$ is deleted from $C$ in Step~4.
\end{itemize}
Thus, $\Pr[A_2(u,t)] = \frac{1}{2c} + \frac{1}{2c} +\frac {c-6}{4c}+ \frac{1}{2c} = \frac 14$.
Therefore, $\frac 12 \Pr[A_1(u,t)] + \frac 14 \Pr[A_2(u,t)] \le \frac 18$.
\end{proof}

\begin{figure}[thb]
\begin{center}
\begin{tikzpicture}[scale=0.45,transform shape]
         
\foreach \x/\y in { 0/0, 5/0, 10/0, 15/0, 20/0, 25/0} {
         \fill[red] (\x, \y) circle(.15);
         \fill[blue] (\x+1.5, \y) circle(.15);
         \fill[red] (\x+2, \y-1.25) circle(.15);
         \fill[blue] (\x-0.5, \y-1.25) circle(.15);
         \fill[blue] (\x+1.5, \y-2.5) circle(.15);
         \fill[red] (\x, \y-2.5) circle(.15);
         \node[font=\fontsize{24}{6}\selectfont] at (\x-0.5, \y-2.5) {$t$};
}

\foreach \x/\y in { 0/0, 5/0, 10/0, 15/0} {
        \node[font=\fontsize{24}{6}\selectfont] at (\x-1, \y-1.25) {$u$};
        \draw[thick, line width = 1pt] (\x+1.5, \y) -- (\x+2, \y-1.25);
}

\foreach \x/\y in { 0/0, 15/0, 10/0} {
        \draw[densely dashed, line width = 1pt] (\x-0.5, \y-1.25) -- (\x, \y-2.5);
        \draw[thick, line width = 1pt] (\x-0.5, \y-1.25) -- (\x, \y);
        \draw[thick, line width = 1pt] (\x, \y-2.5) -- (\x+1.5, \y-2.5);
}
\draw[thick, line width = 1pt] (4.5, -1.25) -- (5, -2.5);
\draw[densely dashed, line width = 1pt] (4.5, -1.25) -- (5, 0);
\draw[densely dashed, line width = 1pt] (5, -2.5) -- (6.5, -2.5);

\foreach \x/\y in { 0/0, 5/0, 15/0} {
        \draw[thick, line width = 1pt] (\x, \y) -- (\x+1.5, \y);
}
\draw[densely dashed, line width = 1pt] (10, 0) -- (11.5, 0);

\foreach \x/\y in { 0/0, 10/0, 5/0} {
        \draw[thick, line width = 1pt] (\x+2, \y-1.25) -- (\x+1.5, \y-2.5);
}
\draw[densely dashed, line width = 1pt] (17, -1.25) -- (16.5, -2.5);

\foreach \x/\y in { 20/0, 25/0} {
        \node[font=\fontsize{24}{6}\selectfont] at (\x+2, \y) {$u$};
        \draw[thick, line width = 1pt] (\x, \y) -- (\x-0.5, \y-1.25);
        \draw[thick, line width = 1pt] (\x+2, \y-1.25) -- (\x+1.5, \y-2.5);
}

\draw[densely dashed, line width = 1pt] (20, 0) -- (21.5, 0);
\draw[thick, line width = 1pt] (25, 0) -- (26.5, 0);
\draw[thick, line width = 1pt] (21.5, 0) -- (22, -1.25);
\draw[densely dashed, line width = 1pt] (26.5, 0) -- (27, -1.25);
\draw[densely dashed, line width = 1pt] (19.5, -1.25) -- (20, -2.5);
\draw[thick, line width = 1pt] (24.5, -1.25) -- (25, -2.5);
\draw[thick, line width = 1pt] (20, -2.5) -- (21.5, -2.5);
\draw[densely dashed, line width = 1pt] (25, -2.5) -- (26.5, -2.5);
\end{tikzpicture}
\end{center}
\caption{All possible cases when $c = 6$ and $d_{\mc{C}_1}(u) = d_{\mc{C}_1}(t) = 1$. 
Each case occurs with probability $\frac 1{12}$.
The dashed edges are removed from $C$ by the algorithm in Figure~\ref{fig07}.
\label{fig08}}
\end{figure}

\subsection{Computing $\mc{T}_2$}

In the last subsection, we have obtained $\mc{C}_1$ from $\mc{C}$ by breaking the cycles in $\mc{C}$ independently at random. 
Recall that each edge in $X_2$ connects two vertices in different cycles of $\mc{C}$.
Let $M_2$ be the maximum-weight matching in $G[X_2]$ as defined in Lemma~\ref{lemma02}. 
We define $M'_2 = \{ \{ u, v \} \in M_2~|~d_{\mc{C}_1}(u) = d_{\mc{C}_1}(v) =1\}$. 

\begin{lemma}
\label{lemma09}
For each $e \in M_2$, $\Pr[e \in M'_2] = \frac 14$.
\end{lemma}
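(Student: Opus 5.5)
Fix an edge $e = \{u, v\} \in M_2$. Since $e \in X_2$, Type-2 means that $u$ and $v$ lie in two different cycles of $\mc{C}$, say $u \in V(C)$ and $v \in V(C')$ with $C \neq C'$. By definition, $e \in M'_2$ exactly when both $d_{\mc{C}_1}(u) = 1$ and $d_{\mc{C}_1}(v) = 1$. So the plan is to write
\[
\Pr[e \in M'_2] = \Pr[d_{\mc{C}_1}(u) = 1 \text{ and } d_{\mc{C}_1}(v) = 1]
\]
and then split this joint probability into a product using independence.

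The key observation is about which random choices each event depends on. The event $d_{\mc{C}_1}(u) = 1$ depends only on the random choices made by the algorithm in Figure~\ref{fig07} when it processes $C$: the uniformly chosen edge $e_1$ in Step~1 and, if $|E(C)| \equiv 2 \pmod 4$, the coin flip in Step~4. Likewise, the event $d_{\mc{C}_1}(v) = 1$ depends only on the random choices made when processing $C'$.

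Because the cycles in $\mc{C}$ are processed independently at random, these two events are independent. Hence
\[
\Pr[e \in M'_2] = \Pr[d_{\mc{C}_1}(u) = 1] \cdot \Pr[d_{\mc{C}_1}(v) = 1].
\]
By the second assertion of Lemma~\ref{lemma07}, each factor equals $\frac 12$, so the product is $\frac 14$, as claimed.

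There is no real obstacle here. The only point that needs care is the independence step, and it rests entirely on $u$ and $v$ lying in distinct cycles. That is exactly what Type-2 guarantees. The argument would fail for Type-1 edges, where both endpoints lie in the same cycle and the two events are correlated (cf. Lemma~\ref{lemma08}).
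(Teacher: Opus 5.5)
Your proposal is correct and matches the paper's proof: you factor $\Pr[d_{\mc{C}_1}(u)=d_{\mc{C}_1}(v)=1]$ into a product using the independent breaking of the two distinct cycles (guaranteed by Type-2), then apply the second assertion of Lemma~\ref{lemma07} to each factor. Your extra remark that both events depend only on the random choices made within their own cycle spells out the independence step more explicitly than the paper does.
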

\begin{proof}
Suppose that $e = \{ u, v \} \in M_2$. 
By definition, $e \in M'_2$ if and only if $d_{\mc{C}_1}(u) = d_{\mc{C}_1}(v) =1$.
Since $u$ and $v$ appear in different cycles of $\mc{C}$ and 
we obtain $\mc{C}_1$ from $\mc{C}$ by breaking the cycles in $\mc{C}$ independently at random, 
$\Pr[d_{\mc{C}_1}(u) = d_{\mc{C}_1}(v) =1] = \Pr[d_{\mc{C}_1}(u) =1]\cdot \Pr[d_{\mc{C}_1}(v) =1]$. 
So, by Lemma~\ref{lemma07}, $\Pr[e \in M'_2] = \frac 14$.
\end{proof}

We construct a new graph $\mc{C}_2$ from $\mc{C}_1$ by adding the edges in $M'_2$. 
Since each connected component of $\mc{C}_1$ is an odd alternating path and 
each edge in $M'_2$ connects two vertices of degree~$1$ in $\mc{C}_1$, 
we know that each connected component of $\mc{C}_2$ is either an odd alternating path or an alternating cycle.
See Figure~\ref{fig09} for an example. 

\begin{figure}[thb]
\begin{center}
\begin{tikzpicture}[scale=0.45,transform shape]
         
\foreach \x/\y in { 15/0, 20/0} {
         \fill[red] (\x, \y) circle(.15);
         \fill[blue] (\x+1.5, \y) circle(.15);
         \fill[red] (\x+2, \y-1.25) circle(.15);
         \fill[blue] (\x-0.5, \y-1.25) circle(.15);
         \fill[blue] (\x+1.5, \y-2.5) circle(.15);
         \fill[red] (\x, \y-2.5) circle(.15);
}

\foreach \x/\y in { 15/0 } {
        \draw[thick, line width = 1pt] (\x+1.5, \y) -- (\x+2, \y-1.25);
        \draw[thick, line width = 1pt] (\x-0.5, \y-1.25) -- (\x, \y-2.5);
        \draw[thick, line width = 1pt] (\x-0.5, \y-1.25) -- (\x, \y);
        \draw[thick, line width = 1pt] (\x, \y-2.5) -- (\x+1.5, \y-2.5);
        \draw[densely dashed, line width = 1pt] (\x+2, -1.25) -- (\x+1.5, -2.5);
}
\draw[thick, line width = 1pt] (15, 0) -- (16.5, 0);
\draw[thick, green, line width = 1pt] (17, -1.25) -- (19.5, -1.25);
\draw[thick, green, line width = 1pt] (16.5, -2.5) -- (20, -2.5);

\foreach \x/\y in { 20/0 } {
        \draw[thick, line width = 1pt] (\x, \y) -- (\x-0.5, \y-1.25);
        \draw[thick, line width = 1pt] (\x+2, \y-1.25) -- (\x+1.5, \y-2.5);
        \draw[thick, line width = 1pt] (\x+1.5, \y) -- (\x+2, \y-1.25);
        \draw[densely dashed, line width = 1pt] (\x-0.5, \y-1.25) -- (\x, \y-2.5);
        \draw[thick, line width = 1pt] (\x, \y-2.5) -- (\x+1.5, \y-2.5);
}

\draw[thick, line width = 1pt] (20, 0) -- (21.5, 0);
\node[font=\fontsize{24}{6}\selectfont] at (18.25, -0.9) {$e$};
\node[font=\fontsize{24}{6}\selectfont] at (18.25, -3) {$e'$};
\end{tikzpicture}
\end{center}
\caption{An example $\mc{C}$, $\mc{C}_1$, $M'_2$, and $\mc{C}_2$.
$\mc{C}$ consists of the two $6$-cycles, $\mc{C}_1$ is obtained from $\mc{C}$ by deleting the dashed edges, 
$M'_2 = \{e, e'\}$, and $\mc{C}_2$ is obtained from $\mc{C}_1$ by adding $e$ and $e'$. 
Note that $\mc{C}_2$ is a $12$-cycle.
\label{fig09}}
\end{figure}

Clearly, each cycle in $\mc{C}_2$ contains at least two edges of $M'_2$ (see Figure~\ref{fig09}). 
We obtain a new graph $\mc{C}_3$ from $\mc{C}_2$ as follows. 
For each cycle $C$ in $\mc{C}_2$, we select one edge in $M'_2 \cap E(C)$ uniformly at random 
and delete it from $\mc{C}_2$. 
For example, in Figure~\ref{fig09}, we randomly delete $e$ or $e'$ from $\mc{C}_2$ to obtain $\mc{C}_3$; in either case, $\mc{C}_3$ is an alternating $11$-path.

\begin{lemma}
\label{lemma10}
{\em \cite{HR06}}
For each $e \in M_2$, $\Pr[e \in E(\mc{C}_3)~ | ~e \in M'_2] \ge \frac 34$.
\end{lemma}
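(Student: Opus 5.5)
Fix $e=\{u,v\}\in M_2$ and condition on $e\in M'_2$. The edge $e$ is deleted in forming $\mc{C}_3$ only if it lies on a cycle $C$ of $\mc{C}_2$, and then it is deleted with probability $1/|M'_2\cap E(C)|$. So it suffices to show
\[
\Pr[e\notin E(\mc{C}_3)\mid e\in M'_2]=\mathbb{E}\Bigl[\tfrac{\mathbf{1}[e \text{ on a cycle}]}{|M'_2\cap E(C_e)|}\,\Bigm|\, e\in M'_2\Bigr]\le \tfrac14,
\]
where $C_e$ is the component of $\mc{C}_2$ containing $e$. The plan is to bound the probability that $e$ closes a cycle together with exactly $k$ edges of $M'_2$, for each $k\ge 2$, and then sum. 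This is the Hassin--Rubinstein argument, adapted to the present cycle-breaking scheme.

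\textbf{Step 1 (structure).} In $\mc{C}_2$, follow the path of $\mc{C}_1$ from $u$ to its other endpoint $u'$. If $e$ is on a cycle, then $u'$ is matched by $M'_2$ to a vertex in another cycle of $\mc{C}$, and so on, until the walk returns to $v$. Every edge of $M_2$ joins two different cycles of $\mc{C}$, and each cycle of $\mc{C}_2$ alternates between $\mc{C}_1$-paths and $M'_2$-edges. Hence if $|M'_2\cap E(C_e)|=2$, the second $M'_2$-edge $e'$ joins $u'$ and $v'$, where $u'$ (resp.\ $v'$) is the other endpoint of the $\mc{C}_1$-path starting at $u$ (resp.\ $v$). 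In this configuration $u$ and $u'$ lie in the same cycle of $\mc{C}$, $v$ and $v'$ lie in the same cycle of $\mc{C}$, and $\{u',v'\}\in M_2$.

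\textbf{Step 2 (bounding the $k=2$ case).} The cycles of $\mc{C}$ containing $u$ and $v$ are broken independently, so conditioning on $e\in M'_2$ just conditions on $d_{\mc{C}_1}(u)=d_{\mc{C}_1}(v)=1$. For the fixed edge $\{u',v'\}\in M_2$ (the partner of $u'$ in $M_2$ is unique), the event needs $A_1(u,u')$ and $A_1(v,v')$. By Lemma~\ref{lemma08}, $\Pr[A_1(u,u')]\le \frac14$, so $\Pr[A_1(u,u')\mid d(u)=1]\le\frac12$, and likewise for $v$. However, $u'$ ranges over several possible vertices, so a naive union bound overcounts. Instead I write the conditional probability of closing a 2-cycle as
\[
\sum_{u'} \Pr[A_1(u,u')\mid d(u)=1]\,\Pr[A_1(v,M_2(u'))\mid d(v)=1],
\]
which is at most $\frac12\cdot 2\max_t\Pr[A_1(v,t)\mid d(v)=1]$ after summing over $u'$ (the events $A_1(u,u')$ are disjoint in $u'$). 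This gives a bound on the probability $q_2$ of a 2-cycle. Larger cycles require the walk to return, and I bound that probability by $1-q_2$ trivially. The target estimate is then
\[
\Pr[e\notin \mc{C}_3\mid\cdot]\le \tfrac12 q_2+\tfrac13(\Pr[\text{cycle}]-q_2),
\]
and I aim to show $\frac12 q_2+\frac13 q_{\ge3}\le\frac14$.

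\textbf{Main obstacle.} The hard part is getting the sharp constants: I need $q_2\le\frac12$ together with $q_2+q_{\ge3}\le 1$ in a combined form. Pure worst-case reasoning with $q_2=\frac12$ and $q_{\ge3}=\frac12$ gives $\frac14+\frac16>\frac14$, so this is not enough. I would therefore try to reproduce the Hassin--Rubinstein accounting: any cycle through $e$ must also contain the $\mc{C}_1$-paths at both $u$ and $v$ that end at other degree-1 vertices. Cycles in $\mc{C}$ are short, so many returns are impossible, and I would use Lemma~\ref{lemma08}'s weighted form ($\frac12\Pr[A_1]+\frac14\Pr[A_2]\le\frac18$) to pay for both the 2-cycle and the longer cycles at once. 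The longer-cycle case involves an $A_2$-type event at the cycle of $u$, and this is precisely what the $\frac14$ coefficient in Lemma~\ref{lemma08} seems designed to handle. I expect this combined accounting to be the crux, and I would defer to \cite{HR06} for its precise form.
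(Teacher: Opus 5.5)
Your proposal does not prove the lemma: it stops at the obstacle you identify yourself and defers the crux to \cite{HR06}. The trouble is the decomposition by the number $k$ of $M'_2$-edges on the cycle of $\mc{C}_2$ through $e$. You have only $q_2\le\frac12$, and you charge $\frac13$ to every $k\ge 3$. With that, the accounting cannot reach $\frac14$, as your own $\frac14+\frac16$ shows. Nothing in it ties the probability of short cycles to that of long ones, and analysing the $v$-side jointly with the $u$-side does not supply that link.

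The missing idea, which is the paper's argument, is to split by where the cycle first returns to the cycle $C_u$ of $\mc{C}$ containing $u$. Walk from $u$ across $e$ and onward, and let $t$ be the first vertex of $C_u$ reached.

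\emph{Case $A_1(u,t)$:} $t$ is the other endpoint of $u$'s own $\mc{C}_1$-path. Then the cycle has at least two $M'_2$-edges, so $e$ is deleted with probability at most $\frac12$. This case also absorbs your $k=3$ cycles through three distinct cycles of $\mc{C}$, where $\frac13\le\frac12$ is harmless.

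\emph{Case $A_2(u,t)$:} $t$ is an endpoint of a different $\mc{C}_1$-path of $C_u$. Every $M_2$-edge joins distinct cycles of $\mc{C}$, so the cycle must enter and leave $C_u$ twice. Hence it has at least four $M'_2$-edges, and $e$ is deleted with probability at most $\frac14$.

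To make $t$ a fixed vertex, condition on the breaking of all cycles of $\mc{C}$ other than $C_u$. This determines the walk from $v$ up to the first $\mc{C}_1$-endpoint $y$ whose $M_2$-partner $t_0$ lies in $C_u$. If no such $y$ is reached, $e$ is on a path. Otherwise $e$ can lie on a cycle only if $A_1(u,t_0)$ or $A_2(u,t_0)$ occurs.

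On $C_u$, conditioning on $e\in M'_2$ amounts to conditioning on $d_{\mc{C}_1}(u)=1$. That event has probability $\frac12$, is independent of the other cycles, and is implied by both $A_1(u,t_0)$ and $A_2(u,t_0)$. So the conditional deletion probability is at most $2\bigl(\frac12\Pr[A_1(u,t_0)]+\frac14\Pr[A_2(u,t_0)]\bigr)\le\frac14$ by Lemma~\ref{lemma08}. This is exactly the weighted form you suspected was needed, and averaging over the conditioning finishes the proof.
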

\begin{proof}
Suppose that $e = \{ u, v \}\in M_2$.
If $e \in M'_2$ and $e$ lies in a path component of $\mc{C}_2$, then $e \in E(\mc{C}_3)$ by the construction of $\mc{C}_3$. 
Hence, throughout the remainder of the proof, we may assume that whenever $e \in M'_2$, $e$ lies in a cycle component of $\mc{C}_2$. 
Note that each cycle in $\mc{C}_2$ is alternating.

Consider an arbitrary vertex $t \ne u$ such that $u$ and $t$ appear in the same cycle $C$ of $\mc{C}$.
Recall the events $A_1(u,t)$ and $A_2(u,t)$ defined immediately before Lemma~\ref{lemma08}.
We claim that $\frac 12 \Pr[A_1(u,t)~|~e \in M'_2] + \frac 14 \Pr[A_2(u,t)~|~e \in M'_2] \le \frac 14$. 
To see this, let $A_3$ and $A_4$ denote the events that $d_{\mc{C}_1}(u)=1$ and $d_{\mc{C}_1}(v)=1$, respectively.
Since $v\not\in V(C)$, but $u$ and $v$ are in $V(C)$, both $A_1(u,t)$ and $A_3$ are independent of $A_4$. 
Moreover, by Lemma~\ref{lemma07}, $\Pr[A_3] = \frac 12$ and thus, 
\[
\Pr[A_1(u,t)~|~e \in M'_2] = \Pr[A_1(u,t)~|~A_3, A_4] = \Pr[A_1(u,t)~|~A_3] = \Pr[A_1(u,t)]/\Pr[A_3] = 2\Pr[A_1(u,t)].
\]
Similarly, $\Pr[A_2(u,t)~|~e \in M'_2] = 2\Pr[A_2(u,t)]$. Now, the claim holds by Lemma~\ref{lemma08}.

Suppose that $e \in M'_2$. Then, $e$ lies in a cycle component $C'$ of $\mc{C}_2$ as assumed in the first paragraph. 
Starting from $u$, if we first traverse $e$ and further continue along $C'$, then we eventually reach the first vertex $t$ 
that lies in the same cycle component of $\mc{C}$ as $u$. 
For example, if $e$ is as shown in Figure~\ref{fig09} and its red endpoint is $u$, then $t$ is the blue endpoint of $e'$. Obviously, either $A_1(u,t)$ or $A_2(u,t)$ occurs. 
In the former case (see Figure~\ref{fig09}), we have $|E(C') \cap M'_2| \ge 2$, and in turn $\Pr[e \not\in E(\mc{C}_3)] \le \frac 12$. 
In the latter case, we have $|E(C') \cap M'_2| \ge 4$, and in turn $\Pr[e \not\in E(\mc{C}_3)] \le \frac 14$. 

By the discussion in the preceding paragraph, we have 
$\Pr[e \not\in E(\mc{C}_3)~ | ~e \in M'_2] \le \frac 12 \Pr[A_1(u,t)~|~e \in M'_2] + \frac 14 \Pr[A_2(u,t)~|~e \in M'_2]$. 
Consequently, by the above claim, $\Pr[e \not\in E(\mc{C}_3)~ | ~e \in M'_2]\le\frac 14$. 
This completes the proof.
\end{proof}

Clearly, $V(\mc{C}_3) \cup V(\mc{P}) = V$, and the union of $\mc{C}_3$ and $\mc{P}$ is a collection of vertex-disjoint alternating paths. 
Therefore, we compute the second perfect fair-triangle packing $\mc{T}_2$ from $\mc{C}_3$ and $\mc{P}$ in $O(n)$ time as shown in Lemma~\ref{lemma05}. 
The entire procedure is summarized in Figure~\ref{fig10}.

\begin{figure}[htb]
\begin{center}
\framebox{
\begin{minipage}{5.5in}
{\em Algorithm for computing $\mc{T}_2$}:\\
Input: $G$, $\mc{C}$, and $\mc{P}$;
\begin{itemize}
\parskip=0pt
\item[1.]
	Construct $\mc{C}_1$ from $\mc{C}$ by processing each cycle $C$ in $\mc{C}$ independently 
	using the algorithm in Figure~\ref{fig07}.
         
\item[2.]
         Compute a maximum-weight matching $M_2$ in $G[X_2]$.

\item[3.]
         Let $M'_2 = \{ \{ u, v \} \in M_2~|~d_{\mc{C}_1}(u) = d_{\mc{C}_1}(v) =1\}$, and 
	  construct $\mc{C}_2$ from $\mc{C}_1$ by adding the edges in $M'_2$.
         
\item[4.] 
	  Constrcut $\mc{C}_3$ from $\mc{C}_2$ by processing each cycle component $C$ of $\mc{C}_2$ 
	  independently as follows: 
	  select an $e\in E(C) \cap M'_2$ uniformly at random and delete it. 
	({\em Comment:} The union of $\mc{C}_3$ and $\mc{P}$ is a collection of 
	vertex-disjoint alternating paths with $V(\mc{C}_3) \cup V(\mc{P}) = V$.) 
         
\item[5.]
	 Construct a perfect fair-triangle packing $\mc{T}_2$ from the union of $\mc{C}_3$ and $\mc{P}$ 
	as shown in Lemma~\ref{lemma05}. 
\item[6.] Return $\mc{T}_2$.
\end{itemize}
\end{minipage}}
\end{center}
\caption{Computing the second perfect fair-triangle packing $\mc{T}_2$.} \label{fig10}
\end{figure}

\begin{lemma}
\label{lemma11}
The algorithm in Figure~\ref{fig10} computes a perfect fair-triangle packing $\mc{T}_2$ in $O(n^{2.5})$ time 
such that $\mathbb{E}[w(\mc{T}_2)] \ge \frac 12 w(\mc{C}) + \frac 23 w(\mc{P}) + \frac 1{16} w(X^*_2)$.
\end{lemma}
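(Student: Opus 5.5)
We bound the expected weight of the collection $\mc{K} = \mc{C}_3 \cup \mc{P}$ of vertex-disjoint alternating paths and then apply Lemma~\ref{lemma05}, which gives $w(\mc{T}_2) \ge \frac 23 w(\mc{K})$ deterministically. Taking expectations, it suffices to show
\[
\mathbb{E}[w(\mc{C}_3)] \ge \frac 34 w(\mc{C}) + \frac 3{32} w(X^*_2),
\]
since then $\mathbb{E}[w(\mc{T}_2)] \ge \frac 23\bigl(\frac 34 w(\mc{C}) + w(\mc{P}) + \frac 3{32} w(X^*_2)\bigr) = \frac 12 w(\mc{C}) + \frac 23 w(\mc{P}) + \frac 1{16} w(X^*_2)$.

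To prove this bound, split $E(\mc{C}_3)$ into two disjoint parts. The first part is the edges of $\mc{C}_1$: Step~4 only deletes edges of $M'_2$, so every edge of $\mc{C}_1$ survives into $\mc{C}_3$. The second part is the surviving edges of $M'_2$. These are disjoint because $M_2 \subseteq X_2$ joins different cycles of $\mc{C}$, so no edge of $M_2$ lies in $\mc{C}$.

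The first part contributes $\mathbb{E}[w(\mc{C}_1)] = \frac 34 w(\mc{C})$ by linearity of expectation and Lemma~\ref{lemma07}. For the second part, each $e \in M_2$ satisfies
\[
\Pr[e \in E(\mc{C}_3)] = \Pr[e \in M'_2]\cdot \Pr[e \in E(\mc{C}_3) \mid e \in M'_2] \ge \frac 14 \cdot \frac 34 = \frac 3{16},
\]
using Lemmas~\ref{lemma09} and~\ref{lemma10}. Summing over $M_2$ and applying Lemma~\ref{lemma02} gives a contribution of at least $\frac 3{16} w(M_2) \ge \frac 3{32} w(X^*_2)$. Adding the two parts yields the claimed bound.

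For the running time, computing $\mc{C}_1$, $M'_2$, $\mc{C}_2$, $\mc{C}_3$ and applying Lemma~\ref{lemma05} all take $O(n)$ or $O(n^2)$ time, since the input graph has $O(n^2)$ edges. The bottleneck is the maximum-weight matching $M_2$ in $G[X_2]$. Because $G[X_2]$ is bipartite (red versus blue), I would cite a bipartite maximum-weight matching algorithm with the $O(n^{2.5})$ bound, for example a scaling algorithm. This is the step I am least sure of: justifying the exact $O(n^{2.5})$ figure depends on which cited matching algorithm and which weight assumptions apply. The probabilistic part is straightforward given Lemmas~\ref{lemma07}--\ref{lemma10}.
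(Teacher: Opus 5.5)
Your proof is correct and follows the paper's argument step for step. You apply Lemma~\ref{lemma05} to $\mc{C}_3\cup\mc{P}$, partition $E(\mc{C}_3)$ into $E(\mc{C}_1)$ and $E(\mc{C}_3)\cap M_2$, and combine Lemmas~\ref{lemma07}, \ref{lemma09}, \ref{lemma10} and~\ref{lemma02} to get $\mathbb{E}[w(\mc{C}_3)]\ge\frac34 w(\mc{C})+\frac3{32}w(X^*_2)$. On the running time, the paper also justifies the $O(n^{2.5})$ bound for Step~2 only by a citation to Micali--Vazirani, which is a cardinality-matching algorithm, so your caution about the exact bound is reasonable; it does not affect the overall $O(n^4)$ bound of {\sc Approx2}.
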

\begin{proof}
The running time of the algorithm is dominated by Step~2. 
Since $|V|=3n$ and $|X_2| \le 9n^2$, Step~2 takes $O(n^{2.5})$ time~\cite{MV80}. 
Since $\mathbb{E}[w(\mc{T}_2)] \ge \frac 23 \mathbb{E}[w(\mc{C}_3)] + \frac 23 w(\mc{P})$
by Lemma~\ref{lemma05}, 
it remains to show that $\mathbb{E}[w(\mc{C}_3)] \ge \frac 34 w(\mc{C}) + \frac 3{32} w(X^*_2)$.

Clearly, $E(\mc{C}_1)$ and $E(\mc{C}_3) \cap M_2$ form a partition of $E(\mc{C}_3)$.
By Lemma~\ref{lemma07}, $\mathbb{E}[w(\mc{C}_1)] = \frac 34 w(\mc{C})$. 
Moreover, for each $e \in M_2$, $\Pr[e \in M'_2] = \frac 14$ by Lemma~\ref{lemma09}, and hence 
Lemma~\ref{lemma10} implies 
\[
\Pr[e \in E(\mc{C}_3)] = \Pr[e \in E(\mc{C}_3)~|~e \in M'_2] \cdot \Pr[e \in M'_2] \ge \frac 3{16}.
\]
By Lemma~\ref{lemma02}, $w(M_2) \ge \frac 12 w(X^*_2)$. 
Therefore, $\mathbb{E}[w(E(\mc{C}_3) \cap M_2)] \ge \frac 3{32} w(X^*_2)$.
This completes the proof.
\end{proof}

\subsection{Computing $\mc{T}_3$}

We continue to use $\mc{C}_1$ constructed in Step~1 of the algorithm in Figure~\ref{fig10}. 
We next show how to use $\mc{P}$ to compute another collection of vertex-disjoint alternating paths, 
each of length $1$ or $2$.

Recall Notation~\ref{nota01}. 
Let $r_p=|V_{\rm r} \cap V(\mc{P})|$ be the number of red vertices in $\mc{P}$.
Similarly, we define $b_p=|V_{\rm b} \cap V(\mc{P})|$. 
We start by analyzing $r_p$ and $b_p$. 
Since  every cycle component of $\mc{C}$ is alternating and has even length, $|V_{\rm r} \cap V(\mc{C})| = |V_{\rm b} \cap V(\mc{C})|$.
Note that
the net difference between the number of blue and red vertices in $G$ is $(3n-r)-r$.
Therefore, we have $b_p - r_p=3n-2r$. 
Moreover, each alternating path contains at most one more blue vertex than red vertices. 
It follows that $\mc{P}$ contains at least $3n-2r$ alternating paths. 
Since every alternating path contains at least one red vertex, we conclude that $r_p \ge 3n-2r$.

Recall Notation~\ref{nota03}. 
For simplicity, let $X'_1=\{\{ u, v \}\in X_1~|~u, v \in V(\mc{P})\}$. 
Clearly, $X_5 \cup X'_1$ contains every bichromatic edge whose endpoints lie in $\mc{P}$.
We compute a maximum-weight matching $M_{\mc{P}}$ in $G[X_5 \cup X'_1]$. 
Since $b_p \ge r_p$, $M_{\mc{P}}$ leaves at least as many blue vertices in $\mc{P}$ unmatched as red vertices. 
Thus, if $|M_{\mc{P}}| < r_p$, we augment $M_{\mc{P}}$ by matching each unmatched red vertex in $\mc{P}$ to a distinct unmatched blue vertex in $\mc{P}$. 
Consequently, $|M_{\mc{P}}| = r_p$, and $M_{\mc{P}}$ contains a maximum-weight matching in $G[X_5 \cup X'_1]$ as a subset. 

\begin{lemma}
\label{lemma12}
$w(M_{\mc{P}}) \ge \frac 12 \max \{ w(X^*_5), w(\mc{P}) \}$.
\end{lemma}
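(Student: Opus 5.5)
It suffices to show two separate inequalities, $w(M_{\mc{P}}) \ge \frac 12 w(X^*_5)$ and $w(M_{\mc{P}}) \ge \frac 12 w(\mc{P})$. In both cases I will exhibit a matching inside $G[X_5 \cup X'_1]$ of the required weight. This suffices because $M_{\mc{P}}$ contains a maximum-weight matching of $G[X_5 \cup X'_1]$ as a subset, and the augmenting edges have nonnegative weight, so $w(M_{\mc{P}})$ is at least the weight of any matching in $G[X_5 \cup X'_1]$.

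For the first inequality I follow the proof of Lemma~\ref{lemma02}. By Observation~\ref{obs:1}, each connected component of $G[X^*]$ is a $2$-path. Hence each component of $G[X^*_5]$ is a single edge or a $2$-path, and $X^*_5$ splits into two matchings $N_1, N_2$, one of which has weight at least $\frac 12 w(X^*_5)$. Every edge of $X_5$ has both endpoints in path components of $\mc{F}$, so $X^*_5 \subseteq X_5 \subseteq X_5 \cup X'_1$. Therefore the heavier of $N_1,N_2$ is a matching in $G[X_5 \cup X'_1]$, which gives $w(M_{\mc{P}}) \ge \frac 12 w(X^*_5)$. (Lemma~\ref{lemma02} applied with $i=5$ gives the same bound directly, since a maximum-weight matching in $G[X_5]$ is a matching in the larger graph.)

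For the second inequality I use the edges of $\mc{P}$ themselves. Every edge of a path $P \in \mc{P}$ is bichromatic, by P1 in Remark~\ref{remark01}, and joins two vertices of the same component of $\mc{F}$. So it is of Type-1, and since both endpoints lie in $V(\mc{P})$ it belongs to $X'_1$. Each path $P$ decomposes into two matchings by taking alternate edges, the odd-indexed ones and the even-indexed ones. Taking the union over all paths of the odd-indexed edges, and separately of the even-indexed edges, gives two matchings of $G[X'_1]$ because the paths are vertex-disjoint. Their weights sum to $w(\mc{P})$, so the heavier one has weight at least $\frac 12 w(\mc{P})$. This yields $w(M_{\mc{P}}) \ge \frac 12 w(\mc{P})$.

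There is no real obstacle here. The only points needing care are checking that the edges of $\mc{P}$ lie in $X'_1$ rather than merely in $X$ (they do, because after preprocessing each path of $\mc{P}$ is a component of $\mc{F}$), and noting that the augmentation step cannot decrease the weight. Combining the two bounds gives the lemma.
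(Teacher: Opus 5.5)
Your proof is correct and follows the paper's argument. Both bounds come from exhibiting a matching in $G[X_5 \cup X'_1]$: the heavier half of $X^*_5$ via Lemma~\ref{lemma02}, and the heavier alternate-edge half of $E(\mc{P}) \subseteq X'_1$. Each bound then uses that $M_{\mc{P}}$ contains a maximum-weight matching of that graph, and your explicit remark that the augmenting edges have nonnegative weight is a detail the paper leaves implicit.
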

\begin{proof}
Recall that $X^*_5 \subseteq X_5 \subseteq X_5 \cup X'_1$ and that $M_{\mc{P}}$ contains a maximum-weight matching in $G[X_5 \cup X'_1]$ as a subset. 
So, $w(M_{\mc{P}}) \ge \frac 12 w(X^*_5)$ by Lemma~\ref{lemma02}.

Since $\mc{P}$ is a collection of vertex-disjoint paths and the edge set of any path can be partitioned into two matchings, 
$E(\mc{P})$ contains a matching of weight at least $\frac 12 w(\mc{P})$.
Clearly, $E(\mc{P}) \subseteq X'_1 \subseteq X_5 \cup X'_1$. 
Finally, as $M_{\mc{P}}$ contains a maximum-weight matching in $G[X_5 \cup X'_1]$ as a subset, we obtain $w(M_{\mc{P}}) \ge \frac 12 w(\mc{P})$.
This completes the proof.
\end{proof}

Since $|M_{\mc{P}}| = r_p$, exactly $3n-2r$ blue vertices in $V(\mc{P})$ remain unmatched by $M_{\mc{P}}$.
Using $r_p \ge 3n-2r$, we connect each unmatched blue vertex of $\mc{P}$ to the red endpoint of a distinct edge in $M_{\mc{P}}$. 
This produces $3n-2r$ alternating $2$-paths (each having two blue endpoints), and leaves exactly $r_p-3n+2r$ edges of $M_{\mc{P}}$ as standalone $1$-paths.
Let $\mc{P}_1$ denote this collection of $2$-paths and $1$-paths. 
Clearly, $V(\mc{P}_1) = V(\mc{P})$.
See Figure~\ref{fig11} for an illustration. 

\begin{figure}[thb]
\begin{center}
\begin{tikzpicture}[scale=0.45,transform shape]
         
\foreach \x/\y in { 0/0, 0/3.5 } {
         \draw[thick, line width = 1pt] (\x, \y) -- (\x, \y-1.5);
         \draw[thick, line width = 1pt] (\x+1.5, \y) -- (\x+1.5, \y-1.5);
         \fill[red] (\x, \y) circle(.15);
         \fill[blue] (\x+1.5, \y) circle(.15);
         \fill[red] (\x+1.5, \y-1.5) circle(.15);
         \fill[blue] (\x, \y-1.5) circle(.15);
}

\draw[thick, line width = 1pt] (0, -1.5) -- (1.5, -1.5);
\draw[thick, line width = 1pt] (0, 3.5) -- (1.5, 3.5);

\foreach \x/\y in { 6/3, 6/1, 6/-1 } {
         \fill[blue] (\x, \y) circle(.15);
         \fill[red] (\x+1.5, \y) circle(.15);
         \fill[blue] (\x+3, \y) circle(.15);
}
\fill[red] (4.5, -1) circle(.15);
\fill[red] (10.5, -1) circle(.15);

\foreach \x/\y in { 7.5/3, 4.5/-1, 9/-1 } {
         \draw[thick, line width = 1pt] (\x, \y) -- (\x+1.5, \y);
}
\draw[thick, line width = 1pt] (6, 3) -- (7.5, 1);
\draw[thick, line width = 1pt] (7.5, -1) -- (9, 1);
\draw[thick, green, line width = 1pt] (6, 1) -- (7.5, 1);

\draw[densely dashed, line width = 1pt] (6, 1) -- (1.5, 2);
\node[font=\fontsize{24}{6}\selectfont] at (3.3, 2) {$e_3$};
\end{tikzpicture}
\end{center}
\caption{An illustration of the construction of $\mc{P}_1$ and $M'_3$. 
The two $3$-paths on the left are obtained by breaking the cycles in Figure~\ref{fig04}, while the remaining vertices lie in $\mc{P}$. 
The five non-green solid edges on the right belong to $M_{\mc{P}}$. 
In the construction of $\mc{P}_1$, the green edge is used to connect the blue endpoint of $e_3$ 
to the red endpoint of an edge in $M_{\mc{P}}$. 
Thus, the solid $2$-path on the right is a path component of $\mc{P}_1$. 
Assuming $e_3 \in M_3$, $e_3 \in M'_3$ holds.
\label{fig11}}
\end{figure}

Clearly, $V(\mc{C}) = V(\mc{C}_1)$ and $V(\mc{P}) = V(\mc{P}_1)$. 
Thus, each edge in $X_3$ connects a red vertex of $\mc{C}_1$ to a blue vertex of $\mc{P}_1$. 
Morever, every blue vertex $v$ in $\mc{P}_1$ satisfies $d_{\mc{P}_1}(v) = 1$. 
Let $M_3$ be as defined in Lemma~\ref{lemma02}. 
We define $M'_3 =\{\{u,v\}\in M_3~|~u\in V(\mc{C}_1)$ is red and $d_{\mc{C}_1}(u) = 1\}$ (see Figure~\ref{fig11} for an illustration). 
Let $\mc{F}_1$ denote the graph obtained from the union of $\mc{C}_1$ and $\mc{P}_1$ by adding the edges in $M'_3$.
That is, $\mc{F}_1 = G[E(\mc{C}_1) \cup E(\mc{P}_1) \cup M'_3]$.

Let $P$ be an arbitrary connected component of $\mc{F}_1$.
By the construction of $\mc{F}_1$, $P$ is clearly a path or cycle. 
We claim that, in fact, $P$ is an alternating path. 
If $E(P) \cap M'_3 = \emptyset$, then the claim holds because $P$ is a connected component of either $\mc{C}_1$ or 
$\mc{P}_1$; 
in either case, $P$ is an alternating path. 
We next consider the case where  $E(P) \cap M'_3 \ne \emptyset$.
In this case, $P$ contains an edge $\{ u, v \} \in M'_3$, where $u \in V(\mc{C}_1)$ is red. 
Let $K$ be the connected component of $\mc{C}_1$ containing $u$. 
By the algorithm in Figure~\ref{fig07}, $K$ is an odd alternating path, and $u$ is an endpoint of $K$. 
Since $u$ is red, the other endpoint $t$ of $K$ is blue. 
Hence, no edge of $M'_3$ is incident to $t$. 
Consequently, $P$ is a path and is also alternating. 
See the $6$-path containing the green edge and $e_3$ in Figure~\ref{fig11} for an illustration. 
This completes the proof of the claim.

By the claim proved in the preceding paragraph, each connected component of $\mc{F}_1$ is an alternating path. 
Since $V(\mc{F}_1) = V$, we can construct a perfect fair-triangle packing $\mc{T}_3$ from $\mc{F}_1$ in $O(n)$ time as shown in Lemma~\ref{lemma05}. 
The entire algorithm is summarized in Figure~\ref{fig12}.

\begin{figure}[htb]
\begin{center}
\framebox{
\begin{minipage}{5.5in}
{\em Algorithm for computing $\mc{T}_3$}:\\
Input: $G$, $\mc{C}$, and $\mc{P}$;
\begin{itemize}
\parskip=0pt
\item[1.]
	Compute a maximum-weight matching $M_{\mc{P}}$ in $G[X_5 \cup X'_1]$, where 
	$X'_1$ consists of all edges $\{ u, v \} \in X_1$ with $u \in V(\mc{P})$ and $v \in V(\mc{P})$.
         
\item[2.]
	Let $r_p$ denote the number of red vertices in $\mc{P}$. 
	If $|M_{\mc{P}}| < r_p$, augment $M_{\mc{P}}$ by matching each unmatched red vertex in $\mc{P}$ to
	a distinct unmatched blue vertex in $\mc{P}$. 
	({\em Comment:} After this step, $|M_{\mc{P}}| = r_p$ and exactly $3n-2r$ blue vertices in $V(\mc{P})$ remain unmatched by $M_{\mc{P}}$.)
         
\item[3.]
	Construct a graph $\mc{P}_1$ from the graph $(V(\mc{P}), M_{\mc{P}})$ by connecting each unmatched blue vertex 
	of $\mc{P}$ to the red endpoint of a distinct edge in $M_{\mc{P}}$. 

\item[4.]
	Compute a maximum-weight matching $M_3$ in $G[X_3]$.
         
\item[5.] 
	Let $M'_3 =\{\{ u, v \}\in M_3~|~u\in V(\mc{C}_1)$ is red and $d_{\mc{C}_1}(u) = 1\}$, 
	where $\mc{C}_1$ is computed in the first step in Figure~\ref{fig10}.
         
\item[6.]
	Let $\mc{F}_1 = G[E(\mc{C}_1) \cup E(\mc{P}_1) \cup M'_3]$. 
	 ({\em Comment:} $V(\mc{F}_1) = V$ and $\mc{F}_1$ is a collection of vertex-disjoint alternating paths.)

\item[7.]
	Construct a perfect fair-triangle packing $\mc{T}_3$ from $\mc{F}_1$ as shown in Lemma~\ref{lemma05}. 

\item[8.] Return $\mc{T}_3$. 
\end{itemize}
\end{minipage}}
\end{center}
\caption{Computing the third fair-triangle packing $\mc{T}_3$.} \label{fig12}
\end{figure}

\begin{lemma}
\label{lemma13}
The algorithm in Figure~\ref{fig12} computes a perfect fair-triangle packing $\mc{T}_3$ in $O(n^{2.5})$ time 
such that $\mathbb{E}[w(\mc{T}_3)] \ge \frac 12 w(\mc{C}) + \frac 16 w(X^*_3) + \frac 13 \max \{ w(X^*_5), w(\mc{P}) \}$.
\end{lemma}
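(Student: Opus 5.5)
The proof is a direct expectation calculation. Since $\mc{F}_1$ is a collection of vertex-disjoint alternating paths with $V(\mc{F}_1)=V$, Lemma~\ref{lemma05} gives $w(\mc{T}_3)\ge \frac 23 w(\mc{F}_1)$. Moreover $E(\mc{F}_1)$ is the disjoint union of $E(\mc{C}_1)$, $E(\mc{P}_1)$ and $M'_3$. These sets are disjoint because their edges lie, respectively, inside cycles of $\mc{C}$, inside $V(\mc{P})$, and between $V(\mc{C})$ and $V(\mc{P})$. Hence
\[
\mathbb{E}[w(\mc{T}_3)] \ge \tfrac 23\bigl(\mathbb{E}[w(\mc{C}_1)] + w(\mc{P}_1) + \mathbb{E}[w(M'_3)]\bigr),
\]
and it suffices to bound the three terms separately. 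Note that $\mc{P}_1$ is deterministic.

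For the first term, Lemma~\ref{lemma07} gives $\Pr[e\in E(\mc{C}_1)]=\frac 34$ for every edge $e$ of a cycle in $\mc{C}$. By linearity, $\mathbb{E}[w(\mc{C}_1)]=\frac 34 w(\mc{C})$, which contributes $\frac 12 w(\mc{C})$ after the factor $\frac 23$.

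For the second term, $E(\mc{P}_1)\supseteq M_{\mc{P}}$ because $\mc{P}_1$ only adds connecting edges, and weights are nonnegative. So $w(\mc{P}_1)\ge w(M_{\mc{P}})\ge \frac 12\max\{w(X^*_5),w(\mc{P})\}$ by Lemma~\ref{lemma12}, which contributes $\frac 13\max\{w(X^*_5),w(\mc{P})\}$.

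For the third term, take $e=\{u,v\}\in M_3$ with $u$ red in $V(\mc{C})$. Then $e\in M'_3$ exactly when $d_{\mc{C}_1}(u)=1$, and by Lemma~\ref{lemma07} this has probability $\frac 12$. Thus $\mathbb{E}[w(M'_3)]=\frac 12 w(M_3)\ge \frac 14 w(X^*_3)$ by Lemma~\ref{lemma02}, which contributes $\frac 16 w(X^*_3)$. Unlike $M'_2$, there is no second random step: no edge of $M'_3$ is deleted later, because the paper's claim that every component of $\mc{F}_1$ is an alternating path means there are no cycles to break. The subtle point is therefore only to confirm that claim, which the paper already proved, and that $M'_3$ really is a matching attached to degree-$1$ blue vertices of $\mc{P}_1$.

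For the running time, Steps~1 and~4 are maximum-weight matchings on a graph with $O(n)$ vertices and $O(n^2)$ edges, taking $O(n^{2.5})$ time by~\cite{MV80}. Steps~2, 3, 5, 6 and~7 are linear or near-linear: greedy augmentation, pairing, filtering, union, and Lemma~\ref{lemma05}. Step~1 of Figure~\ref{fig10} (computing $\mc{C}_1$) is reused and takes $O(n)$ time.
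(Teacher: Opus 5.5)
Your proof is correct and follows the paper's argument essentially line by line: you apply Lemma~\ref{lemma05} to $\mc{F}_1$, then bound $\mathbb{E}[w(\mc{C}_1)]$, $w(\mc{P}_1)$ and $\mathbb{E}[w(M'_3)]$ via Lemmas~\ref{lemma07}, \ref{lemma12} and~\ref{lemma02}, and you additionally make explicit the disjointness of $E(\mc{C}_1)$, $E(\mc{P}_1)$ and $M'_3$ that the paper uses tacitly. One small caveat, inherited from the paper: the Micali--Vazirani algorithm computes maximum-cardinality rather than maximum-weight matchings, so the $O(n^{2.5})$ bound needs a different justification, although an $O(n^3)$ weighted-matching algorithm would leave the $O(n^4)$ bound of Theorem~\ref{thm03} unaffected.
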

\begin{proof}
The running time of the algorithm is dominated by the computation of $M_{\mc{P}}$ and $M_3$. 
Thus, it takes $O(n^{2.5})$ time. 
It remains to estimate $\mathbb{E}[w(\mc{T}_3)]$. 

Consider an arbitrary edge $e = \{ u, v \} \in M_3$ such that $u \in V(\mc{C}_1)$ and $u$ is red.
By Lemma~\ref{lemma07}, $\Pr[d_{\mc{C}_1}(u)=1] = \frac 12$ and hence $\Pr[e \in M'_3] = \frac 12$. 
Moreover, $w(M_3) \ge \frac 12 w(X^*_3)$ by Lemma~\ref{lemma02}. 
Therefore, $\mathbb{E}[w(M'_3)] \ge \frac 12 w(M_3) \ge \frac 14 w(X^*_3)$.

Since $E(\mc{F}_1) = E(\mc{C}_1) \cup E(\mc{P}_1) \cup M'_3$, Lemma~\ref{lemma05} implies that 
$\mathbb{E}[w(\mc{T}_3)] \ge \frac 23 \mathbb{E}[w(\mc{C}_1)] + \frac 23 \mathbb{E}[w(M'_3)] + \frac 23 w(\mc{P}_1)$.
Moreover, $\mathbb{E}[w(\mc{C}_1)] = \frac 34 w(\mc{C})$ by Lemma~\ref{lemma07}. 
Finally, Lemma~\ref{lemma12}, together with $M_{\mc{P}} \subseteq E(\mc{P}_1)$, implies that 
 $w(\mc{P}_1) \ge \frac 12 \max \{ w(X^*_5), w(\mc{P}) \}$. 
The lemma now follows from the last inequality in the preceding paragraph.
\end{proof}

\subsection{Computing $\mc{T}_4$}

Recall that the construction of $\mc{T}_3$ involved the edges in $X_3$. 
Similarly, the construction of $\mc{T}_4$ will involve the edges in $X_4$. 

To compute $\mc{T}_4$, we first obtain $M_{\mc{P}}$ via Steps~1 and~2 in Figure~\ref{fig12}.
Since $|M_{\mc{P}}| = r_p \ge 3n-2r$ and exactly $3n-2r$ blue vertices in $V(\mc{P})$ remain unmatched by $M_{\mc{P}}$, 
we connect each unmatched blue vertex of $\mc{P}$ to the blue endpoint of a distinct edge in $M_{\mc{P}}$. 
This produces $3n-2r$ special $2$-paths (see Definition~\ref{def03}), and leaves exactly $r_p-3n+2r$ edges of $M_{\mc{P}}$, as standalone $1$-paths.
Let $\mc{P}_2$ denote this collection of $1$-paths and special $2$-paths. 
See Figure~\ref{fig13} for an illustration.

\begin{figure}[thb]
\begin{center}
\begin{tikzpicture}[scale=0.45,transform shape]
         
\foreach \x/\y in { 0/0, 0/3.5 } {
         \draw[thick, line width = 1pt] (\x, \y) -- (\x, \y-1.5);
         \draw[thick, line width = 1pt] (\x+1.5, \y) -- (\x+1.5, \y-1.5);
         \fill[red] (\x, \y) circle(.15);
         \fill[blue] (\x+1.5, \y) circle(.15);
         \fill[red] (\x+1.5, \y-1.5) circle(.15);
         \fill[blue] (\x, \y-1.5) circle(.15);
}

\draw[thick, line width = 1pt] (0, -1.5) -- (1.5, -1.5);
\draw[thick, line width = 1pt] (0, 3.5) -- (1.5, 3.5);

\foreach \x/\y in { 6/3, 6/1, 6/-1 } {
         \fill[blue] (\x, \y) circle(.15);
         \fill[red] (\x+1.5, \y) circle(.15);
         \fill[blue] (\x+3, \y) circle(.15);
}
\fill[red] (4.5, -1) circle(.15);
\fill[red] (10.5, -1) circle(.15);

\foreach \x/\y in { 7.5/3, 4.5/-1, 9/-1 } {
         \draw[thick, line width = 1pt] (\x, \y) -- (\x+1.5, \y);
}
\draw[thick, line width = 1pt] (6, 3) -- (7.5, 1);
\draw[thick, line width = 1pt] (7.5, -1) -- (9, 1);
\draw[thick, green, line width = 1pt] (6, 1) -- (6, -1);

\draw[densely dashed, line width = 1pt] (4.5, -1) -- (1.5, 0);
\node[font=\fontsize{24}{6}\selectfont] at (3, -1) {$e_4$};
\end{tikzpicture}
\end{center}
\caption{An illustration of the construction of $\mc{P}_2$ and $M'_4$.
The two $3$-paths on the left are obtained by breaking the cycles in Figure~\ref{fig04}, 
while the remaining vertices lie in $\mc{P}$. 
The five non-green solid edges on the right belong to $M_{\mc{P}}$. 
In the construction of $\mc{P}_2$, the green edge is used to extend an edge in $M_{\mc{P}}$ into a special $2$-path. Thus, the solid $2$-path on the right is a path component of $\mc{P}_2$. 
Assuming $e_4 \in M_4$, $e_4\in M'_4$ holds.
\label{fig13}}
\end{figure}

Let $M_4$ be as defined in Lemma~\ref{lemma02}. 
We define $M'_4 =\{\{u,v\}\in M_4~|~u\in V(\mc{C}_1)$ is blue and $d_{\mc{C}_1}(u) = 1\}$ 
(see Figure~\ref{fig13} for an illustration). 
Let $\mc{F}_2$ denote the graph obtained from the union of $\mc{C}_1$ and $\mc{P}_2$ by adding the edges in $M'_4$.
That is, $\mc{F}_2 = G[E(\mc{C}_1) \cup E(\mc{P}_2) \cup M'_4]$.

Let $P$ be an arbitrary connected component of $\mc{F}_2$.
We claim that $P$ is either an odd alternating path or a special path.
If $E(P) \cap M'_4 = \emptyset$, then the claim holds because $P$ is a connected component of either $\mc{C}_1$ or $\mc{P}_2$; 
in the former case, $P$ is an odd alternating path, whereas in the latter case, $P$ is either an alternating $1$-path or a special $2$-path.
We next consider the case where $E(P) \cap M'_4 \ne \emptyset$. 
In this case, $P$ contains an edge $e = \{ u, v \} \in M'_4$, where $u\in V(\mc{C}_1)$ is blue and $v\in V(\mc{P}_2)$ is red.
Let $K_u$ (respectively, $K_v$) denote the path component of $\mc{C}_1$ (respectively, $\mc{P}_2$) containing $u$ (respectively, $v$).  
By the algorithm in Figure~\ref{fig07}, $K_u$ is an odd alternating path, and $u$ is an edpoint of $K_u$. 
Since $u$ is blue, the other endpoint $t$ of $K_u$ is red. 
Hence, no edge in $M'_4$ is incident to $t$. 
Recall that $K_v$ is either an alternating $1$-path or a special $2$-path. 
In either case, its other endpoint is blue; hence no edge in $M'_4$ is incident to it.
It follows that $E(P) \cap M'_4 = \{ e \}$. 
Clearly, $P$ is an odd alternating path if $K_v$ is an alternating $1$-path, and a special path if $K_v$ is a special $2$-path. 
Thus, the claim holds. 

By the claim proved in the preceding paragraph, each connected component of $\mc{F}_2$ is either an odd alternating path or a special path. 
Since $V(\mc{F}_2) = V$, we can construct a perfect fair-triangle packing $\mc{T}_4$ from $\mc{F}_2$ in $O(n)$ time as shown in Lemma~\ref{lemma06}. 
The entire algorithm is summarized in Figure~\ref{fig14}.

\begin{figure}[htb]
\begin{center}
\framebox{
\begin{minipage}{5.5in}
{\em Algorithm for computing $\mc{T}_4$}:\\
Input: $G$, $\mc{C}$, and $\mc{P}$;
\begin{itemize}
\parskip=0pt
\item[1.]
	Perform the first two steps in Figure~\ref{fig12} to compute $M_{\mc{P}}$.
         
\item[2.]
      Construct a graph $\mc{P}_2$ from the graph $(V(\mc{P}), M_{\mc{P}})$ by connecting each unmatched blue vertex 
	of $\mc{P}$ to the blue endpoint of a distinct edge in $M_{\mc{P}}$. 

\item[3.]
	Compute a maximum-weight matching $M_4$ in $G[X_4]$.
         
\item[4.] 
	Let $M'_4 =\{\{u,v\}\in M_4~|~u\in V(\mc{C}_1)$ is blue and $d_{\mc{C}_1}(u) = 1\}$, where $\mc{C}_1$ is computed 
	by the first step in Figure~\ref{fig10}.
         
\item[5.]
	Let $\mc{F}_2 = G[E(\mc{C}_1) \cup E(\mc{P}_2) \cup M'_4]$. ({\em Comment:} $V(\mc{F}_2) = V$ and
	each connected component of $\mc{F}_2$ is either an odd alternating path or a special path.) 

\item[6.] 
         Construct a perfect fair-triangle packing $\mc{T}_4$ from $\mc{F}_2$ as shown in Lemma~\ref{lemma06}.

\item[7.] Return $\mc{T}_4$.
\end{itemize}
\end{minipage}}
\end{center}
\caption{Computing the fourth perfect fair-triangle packing $\mc{T}_4$.} \label{fig14}
\end{figure}

\begin{lemma}
\label{lemma14}
The algorithm in Figure~\ref{fig14} computes a perfect fair-triangle packing $\mc{T}_4$ in $O(n^{2.5})$ time 
such that $\mathbb{E}[w(\mc{T}_4)] \ge \frac 12 w(\mc{C}) + \frac 16 w(X^*_4) + \frac 13 \max \{ w(X^*_5), w(\mc{P}) \}$.
\end{lemma}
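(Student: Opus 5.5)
The plan mirrors the proof of Lemma~\ref{lemma13}, with $X_4$, $M_4$, $M'_4$, $\mc{P}_2$, and Lemma~\ref{lemma06} in place of $X_3$, $M_3$, $M'_3$, $\mc{P}_1$, and Lemma~\ref{lemma05}.

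\emph{Running time.} Step~1 recomputes $M_{\mc{P}}$ as a maximum-weight matching in $G[X_5 \cup X'_1]$, plus an $O(n)$ augmentation. Step~3 computes a maximum-weight matching $M_4$ in $G[X_4]$. Each of these takes $O(n^{2.5})$ time~\cite{MV80}, since the graphs have $O(n)$ vertices and $O(n^2)$ edges. All remaining steps, including the construction in Lemma~\ref{lemma06}, take $O(n)$ time. Moreover, $\mc{C}_1$ comes from Step~1 of Figure~\ref{fig10}, which runs in linear time.

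\emph{The $M'_4$ term.} Every $e = \{u,v\} \in M_4$ has a blue endpoint $u$ in a cycle of $\mc{C}$ and a red endpoint $v$ in a path of $\mc{P}$. Hence $e \in M'_4$ exactly when $d_{\mc{C}_1}(u) = 1$. By Lemma~\ref{lemma07} this has probability $\frac12$, so $\mathbb{E}[w(M'_4)] = \frac12 w(M_4)$. Lemma~\ref{lemma02} gives $w(M_4) \ge \frac12 w(X^*_4)$, and therefore $\mathbb{E}[w(M'_4)] \ge \frac14 w(X^*_4)$.

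\emph{Combining.} The claim before Figure~\ref{fig14} shows that every component of $\mc{F}_2$ is an odd alternating path or a special path, and that $V(\mc{F}_2) = V$. The edge sets $E(\mc{C}_1)$, $E(\mc{P}_2)$, and $M'_4$ are pairwise disjoint: $M'_4 \subseteq X_4$ joins a cycle vertex to a path vertex, whereas the other two sets lie inside $V(\mc{C})$ and $V(\mc{P})$ respectively. Lemma~\ref{lemma06} and linearity of expectation therefore give
\[
\mathbb{E}[w(\mc{T}_4)] \ge \frac23\left(\mathbb{E}[w(\mc{C}_1)] + \mathbb{E}[w(M'_4)] + w(\mc{P}_2)\right).
\]
We substitute three bounds. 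First, $\mathbb{E}[w(\mc{C}_1)] = \frac34 w(\mc{C})$ by Lemma~\ref{lemma07}. Second, $\mathbb{E}[w(M'_4)] \ge \frac14 w(X^*_4)$ from the previous paragraph. Third, $w(\mc{P}_2) \ge w(M_{\mc{P}}) \ge \frac12 \max\{w(X^*_5), w(\mc{P})\}$, using $M_{\mc{P}} \subseteq E(\mc{P}_2)$, nonnegative weights, and Lemma~\ref{lemma12}. This yields
\[
\mathbb{E}[w(\mc{T}_4)] \ge \frac12 w(\mc{C}) + \frac16 w(X^*_4) + \frac13 \max\{w(X^*_5), w(\mc{P})\},
\]
which is the statement.

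\emph{Main obstacle.} The one delicate point is structural: Lemma~\ref{lemma06} must be applicable to $\mc{F}_2$. This was settled in the claim preceding Figure~\ref{fig14}. The key facts there are that each $M'_4$ edge touches the blue endpoint of an odd path of $\mc{C}_1$ whose other endpoint is red, and that it touches the red endpoint of a $1$-path or special $2$-path of $\mc{P}_2$, whose other endpoint is blue. Consequently at most one $M'_4$ edge lies in each component, and no cycles arise. I would cite that claim rather than reprove it. Everything else is the same expectation bookkeeping as in Lemma~\ref{lemma13}.
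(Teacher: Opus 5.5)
Your proposal is correct and follows the paper's proof essentially step for step. It uses the same $O(n^{2.5})$ bound from computing $M_{\mc{P}}$ and $M_4$, and the same estimate $\mathbb{E}[w(M'_4)] \ge \frac14 w(X^*_4)$ via Lemmas~\ref{lemma07} and~\ref{lemma02}. It then combines these exactly as the paper does, through Lemma~\ref{lemma06}, $\mathbb{E}[w(\mc{C}_1)] = \frac34 w(\mc{C})$, and Lemma~\ref{lemma12} with $M_{\mc{P}} \subseteq E(\mc{P}_2)$; your explicit check that $E(\mc{C}_1)$, $E(\mc{P}_2)$, and $M'_4$ are pairwise disjoint is a detail the paper leaves implicit.
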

\begin{proof}
The running time of the algorithm is dominated by the computation of $M_{\mc{P}}$ and $M_4$. 
Thus, it takes $O(n^{2.5})$ time. 
It remains to estimate $\mathbb{E}[w(\mc{T}_4)]$. 

Consider an arbitrary edge $e = \{ u, v \} \in M_4$ such that $u \in V(\mc{C}_1)$ is blue.
By Lemma~\ref{lemma07}, $\Pr[d_{\mc{C}_1}(u) = 1] = \frac 12$ and hence $\Pr[e \in M'_4] = \frac 12$. 
Moreover, $w(M_4) \ge \frac 12 w(X^*_4)$ Lemma~\ref{lemma02}. 
Therefore, $\mathbb{E}[w(M'_4)] \ge \frac 12 w(M_4) \ge \frac 14 w(X^*_4)$.

Since $E(\mc{F}_2) = E(\mc{C}_1) \cup E(\mc{P}_2) \cup M'_4$, Lemma~\ref{lemma06} implies that 
$\mathbb{E}[w(\mc{T}_4)] \ge \frac 23 \mathbb{E}[w(\mc{C}_1)] + \frac 23 \mathbb{E}[w(M'_4)] + \frac 23 w(\mc{P}_2)$. 
Moreover, $\mathbb{E}[w(\mc{C}_1)] = \frac 34 w(\mc{C})$ by Lemma~\ref{lemma07}. 
Finally, by Lemma~\ref{lemma12}, together with $M_{\mc{P}} \subseteq \mc{P}_2$, implies that 
$w(\mc{P}_2) \ge \frac 12 \max \{ w(X^*_5), w(\mc{P}) \}$.
The lemma now follows from the last inequality in the preceding paragraph. 
\end{proof}

\subsection{The complete algorithm}

Given a bicolored, edge-weighted complete graph $G$, our algorithm, denoted by {\sc Approx2}, 
constructs five perfect fair-triangle packings $\mc{T}_0, \ldots, \mc{T}_4$ (shown in 
Figures~\ref{fig03}, \ref{fig05}, \ref{fig10}, \ref{fig12} and \ref{fig14}, respectively),  
and outputs the one with maximum weight.

\begin{theorem}
\label{thm03}
Given any fixed $\epsilon > 0$, {\sc Approx2} achieves an expected approximation ratio of $\frac {16}{47}-\epsilon$
for the PFTP problem in $O(n^4)$ time.
\end{theorem}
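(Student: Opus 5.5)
The plan is to bound $\max_i w(\mc{T}_i)$ from below by a fixed convex combination $\sum_{i=0}^4 a_i\,\mathbb{E}[w(\mc{T}_i)]$ and to show this combination is at least $(\frac{16}{47}-\epsilon)\,w(\mc{T}^*)$. The running time is $O(n^4)$: Lemma~\ref{lemma04} costs $O(n^4)$, and everything else is $O(n^3)$ or $O(n^{2.5})$.

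I would use the following abbreviations: $x=w(X^*)=\sum_{i=1}^5 w(X^*_i)$, $c=w(\mc{C})$, $p=w(\mc{P})$, and $x_i=w(X^*_i)$. The available bounds are:
\begin{itemize}
\item $w(\mc{T}_0)\ge w(\mc{T}^*)-x$, from the proof of Theorem~\ref{thm02};
\item $w(\mc{T}_1)\ge x_1$, by Lemma~\ref{lemma04};
\item $\mathbb{E}[w(\mc{T}_2)]\ge \frac c2+\frac{2p}3+\frac{x_2}{16}$, by Lemma~\ref{lemma11};
\item $\mathbb{E}[w(\mc{T}_3)]\ge \frac c2+\frac{x_3}6+\frac13\max\{x_5,p\}$ and $\mathbb{E}[w(\mc{T}_4)]\ge \frac c2+\frac{x_4}6+\frac13\max\{x_5,p\}$, by Lemmas~\ref{lemma13} and~\ref{lemma14};
\item $c+p\ge(1-\epsilon)x$, by P2 in Remark~\ref{remark01}.
\end{itemize}
I would linearize the maximum by writing $\max\{x_5,p\}\ge \frac14 x_5+\frac34 p$.

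The weights I would commit to are $a_0=\frac{16}{47}$, $a_1=\nu$, $a_2=16\nu$ and $a_3=a_4=6\nu$, where $\nu=\frac{31}{47\cdot 29}$ so that the weights sum to $1$. These are chosen so that every $x_i$ receives the same coefficient, namely $\nu$:
\begin{itemize}
\item $x_1$ receives $a_1=\nu$;
\item $x_2$ receives $a_2/16=\nu$;
\item $x_3$ and $x_4$ each receive $6\nu/6=\nu$;
\item $x_5$ receives $2\cdot 6\nu\cdot\frac13\cdot\frac14=\nu$.
\end{itemize}
Meanwhile $c$ receives $8\nu+6\nu=14\nu$, and $p$ receives $\frac{32}{3}\nu+2\cdot 6\nu\cdot\frac13\cdot\frac34=\frac{41}{3}\nu\ge 14\nu\cdot\frac{41}{42}$.

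To finish cleanly I would take the minimum of the $c$ and $p$ coefficients, $\mu=\frac{41}{3}\nu$. Using $c+p\ge(1-\epsilon)x$ and $\sum_i x_i=x$, the combination is at least
\[
\frac{16}{47}\,(w(\mc{T}^*)-x)+\Bigl(\mu(1-\epsilon)+\nu\Bigr)x .
\]
With $\mu=\frac{41}{3}\nu$ we get $\mu+\nu=\frac{44}{3}\nu=\frac{44\cdot 31}{3\cdot 47\cdot 29}\approx 0.3334$. This is slightly below $\frac{16}{47}\approx 0.3404$, so these exact weights fail and need re-tuning.

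If $c\ge p$ in the worst case, the attainable coefficient is $14\nu+\nu=15\nu=\frac{465}{1363}>\frac{464}{1363}=\frac{16}{47}$, which would suffice. The real task is therefore to pick the split parameter $s$ in $\max\{x_5,p\}\ge s x_5+(1-s)p$ together with $a_1,\dots,a_4$ so that the $c$- and $p$-coefficients are balanced and the common coefficient of the $x_i$ plus that balanced coefficient equals $a_0=\frac{16}{47}$. I would set up this small LP, maximizing $t$ subject to the coefficient of each $x_i$ being at least $\nu$, the coefficients of both $c$ and $p$ being at least $\mu$, $\mu+\nu\ge a_0$, and $\sum a_i=1$. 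I would solve it to confirm that the optimum is exactly $\frac{16}{47}$ and read off the weights.

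The $\epsilon$-loss contributes at most $\mu\epsilon x\le\epsilon\,w(\mc{T}^*)$, which yields the ratio $\frac{16}{47}-\epsilon$. The main obstacle is this final balancing step, that is, finding the exact weights and split so that all constraints meet at $\frac{16}{47}$.
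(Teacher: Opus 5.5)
Your overall route is the paper's: the same convex combination of the five guarantees, the same linearization $\max\{x_5,p\}\ge\frac14x_5+\frac34p$ (the paper uses exactly this split), and the same closing step via P2 in Remark~\ref{remark01}. However, the proposal stops short of a proof.

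\begin{itemize}
\item The only concrete weights you commit to give $\mu+\nu=\frac{44}{3}\nu\approx 0.3336<\frac{16}{47}$, as you yourself compute.
\item The ``if $c\ge p$'' remark cannot be used, since the adversary is free to put the weight on $p$.
\item The decisive step, exhibiting weights that actually reach $\frac{16}{47}$, is deferred to an LP that you neither solve nor certify.
\end{itemize}

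Everything else in this theorem is bookkeeping from Lemmas~\ref{lemma04}, \ref{lemma11}, \ref{lemma13} and~\ref{lemma14}. So this missing certificate is the entire content of the proof, and as written the $\frac{16}{47}$ bound is not established.

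The error in your tuning is forcing the coefficient of $x_2$ to equal $\nu$. That pins $a_2=16\nu$ and under-weights $\mc{T}_2$, which is the only packing giving $p$ the large coefficient $\frac23$. This is why $p$ ends up with $\frac{41}{3}\nu<14\nu$.

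Let the $x_2$-constraint be slack instead. Take $(a_0,a_1,a_2,a_3,a_4)=\frac1{47}(16,1,18,6,6)$ with the same split. Then:
\begin{itemize}
\item $c$ gets $\frac{9+3+3}{47}=\frac{15}{47}$;
\item $p$ gets $\frac{12}{47}+\frac{3}{47}=\frac{15}{47}$;
\item each of $x_1,x_3,x_4,x_5$ gets $\frac1{47}$;
\item $x_2$ gets $\frac{18}{47\cdot 16}=\frac{9}{376}>\frac1{47}$.
\end{itemize}
The combination is therefore at least $\frac{16}{47}\bigl(w(\mc{T}^*)-x\bigr)+\frac{15}{47}(1-\epsilon)x+\frac1{47}x\ge(\frac{16}{47}-\epsilon)\,w(\mc{T}^*)$. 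This is exactly the paper's computation.

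Your belief that $\frac{16}{47}$ is the optimum of this analysis is correct. The values $w(R^*\cup B^*)=16$, $x_1=16$, $x_2=0$, $x_3=x_4=6$, $x_5=p=3$, $c=28$ satisfy $c+p\ge x$. They make all five lower bounds equal to $16$ while $w(\mc{T}^*)=47$, so no choice of weights can beat $\frac{16}{47}$ with these bounds.
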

\begin{proof}
Computing the initial $\mc{F}$ and preprocessing it can be done in $O(n^3)$ time.
The running time of {\sc Approx2} follows from Theorem~\ref{thm02} and Lemmas~\ref{lemma04}, \ref{lemma11}, \ref{lemma13} and \ref{lemma14}.
Let $\mc{T}$ be the perfect fair-triangle packing returned by {\sc Approx2} on input $G$. We next analyze $\mathbb{E}[\mc{T}]$.

As observed in the proof of Theorem~\ref{thm02}, $w(\mc{T}_0) \ge w(R^*) + w(B^*)$.
Obviously, $w(X^*) = \sum_{i=1}^5 w(X^*_i)$. Moreover, by P2 in Remark~\ref{remark01}, 
$w(\mc{C})+w(\mc{P}) = w(\mc{F}) \ge (1-\epsilon)w(X^*)$. Thus, 
by Lemmas~\ref{lemma04}, \ref{lemma11}, \ref{lemma13} and \ref{lemma14} again, we have
\begin{eqnarray*}
\mathbb{E}[\mc{T}] &  =  &  \max \{ w(\mc{T}_0), w(\mc{T}_1), \mathbb{E}(\mc{T}_2),  \mathbb{E}(\mc{T}_3), \mathbb{E}(\mc{T}_4)\} \\
& \ge & \frac {16}{47} w(\mc{T}_0) + \frac 1{47} w(\mc{T}_1) + \frac {18}{47} \mathbb{E}[w(\mc{T}_2)] + \frac 6{47} \mathbb{E}[w(\mc{T}_3)]  + \frac 6{47} \mathbb{E}[w(\mc{T}_4)]  \\
& = & \frac {16}{47} w(\mc{T}_0) + \frac {15}{47} w(\mc{C}) + \frac {12}{47} w(\mc{P}) + 
\frac 4{47} \max \{ w(X^*_5), w(\mc{P}) \} + \frac 1{47} \sum_{i \in \{ 1, 3, 4 \}} w(X^*_i)+\frac 9{376} w(X^*_2) \\
& \ge & \frac {16}{47} w(\mc{T}_0) + \frac {15}{47} w(\mc{C}) + \frac {12}{47} w(\mc{P}) + \frac 3{47} \max \{ w(X^*_5), w(\mc{P}) \} + \frac 1{47} \sum_{i=1}^5 w(X^*_i) \\
& \ge & \frac {16}{47} w(\mc{T}_0) + \frac {15}{47} w(\mc{C}) + \frac {15}{47} w(\mc{P}) + \frac 1{47} w(X^*) \\
& \ge & \frac {16}{47} w(R^* \cup B^*) + \left(\frac {16}{47} - \frac {15}{47} \epsilon\right) w(X^*)
\ge \left(\frac {16}{47} - \epsilon\right)w(\mc{T}^*).
\end{eqnarray*}
This completes the proof.
\end{proof}

\section{Concluding remarks}\label{sec:conclude}

We have introduced the perfect fair-triangle packing problem (PFTP) 
and designed two approximation algorithms for it. 
The first is a deterministic, matching-based algorithm that runs in $O(n^3)$ time and achieves an approximation ratio of $\frac 13$. 
The second is a randomized algorithm that runs in $O(n^4)$ time and achieves an expected approximation ratio of $\frac {16}{47}-\epsilon$ for any fixed small constant $\epsilon>0$. 

We highlight several natural directions for future research.
First, while existing algorithms for W$3$SP cannot be straightforwardly applied to PFTP (as discussed in Section~\ref{sec1.2}), 
they may be still be effective as subroutines for designing better approximation algorithms.
Second, it would be interesting to study the metric version of PFTP (as in~\cite{CCL21, ZX24}), where edge weights satisfy the triangle inequality.
Third, the fairness criterion can be extended to broader cycle and path packing problems.
For example, let $k \ge 3$ be an integer, and $G$ be a bicolored, edge-weighted complete graph with $|V(G)| = kn$.
A $k$-cycle or $(k-1)$-path in $G$ is called {\em fair} if it contains at least one vertex of each color.
The {\em perfect fair-$k$-cycle} (respectively, {\em perfect fair-$(k-1)$-path}) {\em packing} problem asks for 
a collection of $n$ vertex-disjoint fair $k$-cycles (respectively, $(k-1)$-paths) with maximum total edge weight.
Finally, generalizing beyond two vertex colors to multi-colored variants offers another rich direction for future study.




\bibliography{PFTP.bib}

\appendix

\section{A table for notations}

\begin{table}[H]
\caption{Important notations and their meanings}
\label{tab01}
\centering
\begin{tabular}{|c|c|c|c|c|c|c|c|c|c|c|c|c|}
\hline
Notations & Their meanings  \\
\hline
$X$ & The set of bichromatic edges in $G$  \\
\hline
$G_{\rm x}$ & The spanning subgraph $G[X]$ of $G$  \\
\hline
$\mc{F}$ & Initially, a maximum-weight $[1, 2]$-factor in $G_{\rm x}$ \\
\hline
$\mc{C}$ & The set of cycles in $\mc{F}$ after the prepocessing in Section~\ref{subsec:prep} \\
\hline
$\mc{P}$ & The set of paths in $\mc{F}$ after the prepocessing in Section~\ref{subsec:prep} \\
\hline
$X_i$ & The set of bichromatic edges of Type $i$ ($1\le i\le 5$) in $G$  \\
\hline
$M_i$ & A maximum-weight matching in $G[X_i]$, where $i\in\{1,2,\ldots,5\}$  \\
\hline
$\mc{C}_1$ & The graph obtained from $\mc{C}$ by breaking its cycles independently at random \\
\hline
$M'_2$ &  The set of edges $\{ u, v \} \in M_2$ such that $d_{\mc{C}_1}(u)=d_{\mc{C}_1}(v)=1$ \\
\hline
$\mc{C}_2$ &  The graph obtained by adding $M'_2$ to $\mc{C}_1$, i.e., $(V(\mc{C}), E(\mc{C}_1) \cup M'_2)$  \\
\hline
$\mc{C}_3$ &  The graph obtained by randomly deleting one edge of $M'_2$ from each cycle in $\mc{C}_2$  \\
\hline
$X'_1$ &  The set of edges $\{ u, v \} \in X_1$ with $u \in V(\mc{P})$ and $v \in V(\mc{P})$ \\
\hline
$M_{\mc{P}}$ &  A maximum-weight matching in $G[X'_1 \cup X_5]$ \\
\hline
$\mc{P}_1$ &   A collection of alternating $1$- or $2$-paths constructed from $M_{\mc{P}}$ \\ 
\hline
$M'_3$ &  The set of edges $\{ u, v \} \in M_3$ such that $d_{\mc{C}_1}(u) = 1$ and $u$ is red \\
\hline
$\mc{F}_1$ &  $G[E(\mc{C}_1) \cup E(\mc{P}_1) \cup M'_3]$ \\
\hline
$\mc{P}_2$ &  A collection of alternating $1$-paths and special $2$-paths constructed from $M_{\mc{P}}$ \\
\hline
$M'_4$ &  The set of edges $\{ u, v \} \in M_4$ such that $d_{\mc{C}_1}(u) = 1$ and $u$ is blue \\
\hline
$\mc{F}_2$ &  $G[E(\mc{C}_1) \cup E(\mc{P}_2) \cup M'_4]$ \\
\hline
\end{tabular}
\end{table}

\end{document}